\pgfplotsset{compat=1.16}
\appto{\bibsetup}{\sloppy}
\newtheorem{theorem}{Theorem}[section]
\newtheorem{lemma}[theorem]{Lemma}
\newtheorem{proposition}[theorem]{Proposition}
\newtheorem{corollary}[theorem]{Corollary}
\newtheorem{conjecture}[theorem]{Conjecture}
\theoremstyle{definition}
\newtheorem{definition}[theorem]{Definition}
\newtheorem{openproblem}[theorem]{Open Problem}
\theoremstyle{remark}
\newtheorem{remark}{Remark}
\newcommand{\dd}{\,\mathrm{d} }
\numberwithin{equation}{section}
\begin{document}

\title{Optimizing the ground state energy of the three-dimensional magnetic Dirichlet Laplacian with constant magnetic field}

\author{Matthias Baur}
\address{Institute of Analysis, Dynamics and Modeling, Department of Mathematics, University of Stuttgart, Pfaffenwaldring 57, 70569 Stuttgart, Germany}
\email{matthias.baur@mathematik.uni-stuttgart.de}

\begin{abstract}
This paper concerns the shape optimization problem of minimizing the ground state energy of the magnetic Dirichlet Laplacian with constant magnetic field among three-dimensional domains of fixed volume. In contrast to the two-dimensional case, a generalized ``magnetic'' Faber-Krahn inequality does not hold and the minimizers are not expected to be balls when the magnetic field is turned on. An analysis of the problem among cylindrical domains reveals geometric constraints for general minimizers. In particular, minimizers must elongate with a certain rate along the direction of the magnetic field as the field strength increases. In addition to the theoretical analysis, we present numerical minimizers which confirm this prediction and give rise to further conjectures.
\end{abstract}

\maketitle

\section{Introduction}

Let $\Omega \subset \mathbb{R}^d$, $d=2$ or $3$, be a bounded, open domain. We consider the magnetic Laplacian with constant magnetic field on $\Omega$ with Dirichlet boundary condition. For domains with sufficiently regular boundary, the corresponding eigenvalue problem is
\begin{align} \label{eq:magn_lap_eigvalproblem}
(-i\nabla + A)^2 u &= \lambda u, \qquad \text{in } \Omega,\\
u&=0, \qquad \text{on } \partial \Omega,
\end{align}
where $A$ is a vector potential such that it induces the magnetic field $B$ via $B=  \operatorname{curl } A$. In two dimensions ($d=2$), it is customary to choose the standard linear vector potential $A(x_1,x_2)= \frac{B}{2}(-x_2 , x_1 )$, where $B\in\mathbb{R}$ is the strength of the magnetic field. In three dimensions ($d=3$), we assume the coordinate system is rotated in such a way that the magnetic field points in the positive direction of the $z$-axis, i.e.\ $B(x) = (0,0,B)^T$ with $B\geq 0$, and we choose the linear vector potential $A(x_1,x_2,x_3)= \frac{B}{2}(-x_2, x_1, 0)^T$. 
 
More generally, the magnetic Dirichlet Laplacian on $\Omega \subset \mathbb{R}^d$ can be defined by Friedrichs extension of the quadratic form
\begin{align}
h_A^\Omega[u] = \int_\Omega |(-i\nabla + A) u|^2 \, dx, \qquad u\in C_0^\infty(\Omega).
\end{align}
The associated self-adjoint operator $H_B^\Omega$ admits an infinite sequence of eigenvalues $\{\lambda_n(\Omega, B) \}_{n\in\mathbb{N}}$ that accumulate at infinity only. Counting multiplicities, we assume the eigenvalues to be sorted in increasing order, so that
\begin{align}
0 < \lambda_1(\Omega,B) \leq \lambda_2(\Omega,B) \leq ... \leq \lambda_n(\Omega, B) \leq ...
\end{align}
Note that when the magnetic field is turned off, i.e.\ when $B=0$, the magnetic Dirichlet Laplacian reduces to the usual Dirichlet Laplacian: $H_0^\Omega = - \Delta_\Omega^D$.

The Dirichlet Laplacian plays a significant role in quantum mechanics. It is used to describe a confined particle, free of any interactions. The magnetic Laplacian introduces an interaction with a homogeneous, external magnetic field. In this model, the particle is assumed to be charged, but spinless. Historically however, before its triumph in quantum mechanics, the Laplacian was introduced by Pierre-Simon Laplace in celestial mechanics and has been employed among others by Joseph Fourier and Lord Rayleigh in the description of vibrating membranes. 

Initially conjectured by Lord Rayleigh, the celebrated Faber-Krahn inequality \cite{Faber1923,Krahn1925} states that among all drums of equal area, the lowest possible fundamental frequency is attained by a disk. In other words and with the notation introduced above, for any bounded, open domain $\Omega \subset \mathbb{R}^2$,
\begin{align}
\lambda_1(\Omega, 0) \geq \lambda_1(D, 0),
\end{align}
where $D$ is a disk with $|D| = |\Omega|$. The same isoperimetric inequality holds for domains $\Omega \subset \mathbb{R}^d$, $d>2$, but with $D$ replaced by a $d$-dimensional ball. In 1996, Erd\H{o}s \cite{Erdoes1996} generalized the Faber-Krahn inequality to the planar magnetic Dirichlet Laplacian. He showed that for any $\Omega \subset \mathbb{R}^2$ and any $B \in \mathbb{R}$,
\begin{align}
\lambda_1(\Omega, B) \geq \lambda_1(D, B),
\end{align}
where $D$ is a disk with $|D| = |\Omega|$. Recent work \cite{Fournais2019, Colbois2023, Colbois2024} has brought a lot of attention to a conjectured ``reverse'' Faber-Krahn inequality for the planar magnetic Neumann Laplacian in two dimensions. The Neumann counterpart of Erd\H{o}s' theorem has been established in \cite{Colbois2024} up to some explicit value of $B$, but remains open for large $B$.

In this work, we want to address another natural question connected with Erd\H{o}s' theorem, namely, whether the Faber-Krahn inequality extends also to the magnetic Dirichlet Laplacian in higher space dimensions, specifically in the physically most relevant case of three-dimensional space. We show that this is not the case and in general, minimizers of the problem 
\begin{align}
\min_{\substack{\Omega \subset \mathbb{R}^3 \text{ open}, \\ |\Omega|=c}} \lambda_1(\Omega, B) \label{eq:min_problem_c}
\end{align}
cannot be balls. But if minimizers for the ground state energy of the magnetic Dirichlet Laplacian exist, what are they and how do they depend on the field strength $B$?

This paper aims to investigate this question analytically and numerically. By analyzing cylindrical domains, we establish that possible minimizers of \eqref{eq:min_problem_c} must elongate along the axis of the magnetic field as the field strength $B$ goes to infinity. Intuitively, this ``spaghettification'' phenomenon is linked to the fact that in $\mathbb{R}^3$, the constant magnetic field introduces a preferred direction and destroys a degree of symmetry. We also quantify the rate of degeneration of the minimizers (Section \ref{sec:cyl}). We supply our analytical results with numerical computations, employing and combining methods from \cite{Antunes2012, Antunes2017, Betcke2005, Baur2025a} (Section \ref{sec:nummeth}). The main novelty of this section is that we use the Method of Particular Solutions to compute eigenvalues for the three-dimensional magnetic Laplacian. In case where the domain is axisymmetric, we lay out further simplifications can lead to increased accuracy of the numerical method. We then describe the minimization procedure applied to solve \eqref{eq:min_problem_c}. Finally, we show plots of the obtained numerical minimizers, discuss the results and formulate several problems that remain open (Section \ref{sec:res}).

\section{Preliminaries} \label{sec:prelim}

Before we analyze cylindrical domains, let us recall a few important properties of the eigenvalues $\lambda_n(\Omega,B)$ of the magnetic Dirichlet Laplacian in two and three dimensions.

\begin{proposition} \label{prop:simple_est_domain_mono} Let $\Omega \subset \mathbb{R}^d$, $d=2$ or $3$, be a bounded, open domain. Then:
\begin{enumerate}
\item If $d=2$, then the eigenvalues $\lambda_n(\Omega,B)$ are invariant under rigid transformations of $\Omega$. If $d=3$, then the eigenvalues $\lambda_n(\Omega,B)$ are invariant under translations of $\Omega$ and under rotations of $\Omega$ around the $z$-axis.
\item For any $B\geq 0$, 
\begin{align}
\lambda_1(\Omega,B) \geq \max\{ \lambda_1(\Omega,0) , B \}.
\end{align}
\item For any $n \in\mathbb{N}$, the map $B \mapsto \lambda_n(\Omega, B)$ is piecewise real-analytic.
\item If $\Omega'$ is a domain with $\Omega' \subset \Omega$, then 
\begin{align}
\lambda_n(\Omega',B) \geq \lambda_n(\Omega,B)
\end{align}
for any $n \in\mathbb{N}$.
\end{enumerate}
\end{proposition}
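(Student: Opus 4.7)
The plan is to treat the four items in order, using gauge transformations for (1), the diamagnetic inequality together with spectral information on $\mathbb{R}^d$ for (2), Kato's analytic perturbation theory for (3), and extension by zero for (4).

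For item (1), I would exploit that the specific linear gauge $A(x) = \tfrac{B}{2}(-x_2,x_1,0)^T$ changes in a controlled way under rigid motions. For any translation $T_a(x) = x - a$, the pulled-back potential $A\circ T_a$ differs from $A$ by the constant vector $\tfrac{B}{2}(a_2,-a_1,0)^T$, which is the gradient of a linear function $\varphi_a$; hence the unitary $U_a u(x) = e^{i\varphi_a(x)} u(x-a)$ conjugates $H_B^{T_a(\Omega)}$ to $H_B^\Omega$, yielding invariance under translations in both $d=2$ and $d=3$. For a rotation $R$ about the $z$-axis (in $d=3$) or about the origin (in $d=2$), one checks that $R^T A(Rx) = A(x)$, so $u \mapsto u\circ R$ is a unitary intertwining the two operators without any gauge correction. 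In $d=2$, rotations around any point reduce to the previous two cases, so any rigid motion works.

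For item (2), the inequality $\lambda_1(\Omega,B) \geq \lambda_1(\Omega,0)$ is the diamagnetic inequality: for $u\in C_0^\infty(\Omega)$ one has $|(-i\nabla + A)u|\geq |\nabla|u||$ pointwise, hence $h_A^\Omega[u] \geq \int_\Omega |\nabla|u||^2\,dx \geq \lambda_1(\Omega,0)\|u\|_2^2$, and the min-max principle gives the claim. The bound $\lambda_1(\Omega,B)\geq B$ follows by extending trial functions by zero to $\mathbb{R}^d$: then $\lambda_1(\Omega,B) \geq \inf \sigma(H_B^{\mathbb{R}^d})$, and the bottom of the spectrum of the free magnetic Laplacian on $\mathbb{R}^d$ is the lowest Landau level $B$ (in $d=2$ the spectrum is $\{(2k+1)B\}_{k\geq 0}$, in $d=3$ it is $[B,\infty)$ after adding the free $z$-direction).

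For item (3), I would observe that $h_A^\Omega[u]$ is a quadratic polynomial in $B$ with form domain $H_0^1(\Omega)$ independent of $B$, so $B\mapsto H_B^\Omega$ is a self-adjoint holomorphic family of type (B) with compact resolvent in the sense of Kato. By Kato's theorem the eigenvalues and corresponding projectors can be parametrized by real-analytic functions of $B$ on any simply connected parameter set avoiding crossings; when reordered non-decreasingly, two analytic branches can only be swapped at their intersection points, which are isolated, so $B\mapsto\lambda_n(\Omega,B)$ is piecewise real-analytic. Finally, for (4) the inclusion $\Omega'\subset\Omega$ yields $C_0^\infty(\Omega')\subset C_0^\infty(\Omega)$ after extension by zero, preserving the form $h_A$, so by the min-max characterization $\lambda_n(\Omega',B)\geq\lambda_n(\Omega,B)$. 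The only step requiring any real care is the gauge bookkeeping in (1), since the specific symmetric gauge is not rigid-motion invariant; everything else is a direct appeal to classical results.
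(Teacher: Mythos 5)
Your proposal is correct and follows essentially the same route as the paper: gauge/unitary equivalence for (1), the diamagnetic inequality for (2), Kato's type (B) self-adjoint holomorphic family theory for (3), and the min-max principle for (4). The only minor deviation is that you derive $\lambda_1(\Omega,B)\geq B$ from the bottom of the spectrum of the Landau Hamiltonian on $\mathbb{R}^d$ via extension of trial functions by zero, whereas the paper cites a commutator estimate; both arguments are standard and equally valid.
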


\begin{proof}
1. Any rigid transformation of $\Omega$ in the $xy$-plane can be reversed by a change of the coordinate system. This yields the magnetic Dirichlet Laplacian on the original domain $\Omega$ but with a vector potential in a different gauge. The eigenvalues $\lambda_n(\Omega, B)$ are however invariant under gauge transformations of the vector potential. 
2. The lower bound $\lambda_1(\Omega,B) \geq  \lambda_1(\Omega,0) $ is a direct consequence of the well-known diamagnetic inequality, see \cite{Cycon1987, Avron1978, Fournais2010}. The lower bound $\lambda_1(\Omega,B) \geq B$ follows from a commutator estimate \cite{Ekholm2016, Fournais2010, Avron1978}.
3. The operators $\{ H_B^\Omega \}$ form a type (B) self-adjoint holomorphic family and therefore the spectrum of the magnetic Dirichlet Laplacian over $B$ can be described in terms of a countable set of real-analytic eigenvalue curves, see Kato \cite[Chapter VII \S 3 and \S 4]{Kato1995}. After sorting the eigenvalues by size, one obtains that the functions $B\mapsto \lambda_n(\Omega,B)$ are piecewise real-analytic (piecewise, since there may exist crossings of the eigenvalue curves).
4. Follows from the variational principle.
\end{proof}

Additionally, we have the following scaling identity.

\begin{proposition} \label{prop:scaling_identity}
Let $\Omega \subset \mathbb{R}^d$, $d=2$ or $3$, be a bounded, open domain. Then, for any $t>0$,
\begin{align}
\lambda_n\left(t\Omega, \frac{B}{t^2}\right) = \frac{\lambda_n(\Omega,B)}{t^2}.
\end{align}
\end{proposition}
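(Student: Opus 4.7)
The plan is to prove the identity by a direct change of variables $y=tx$, $x\in\Omega$, which sets up a bijection between functions on $\Omega$ and functions on $t\Omega$. Given $u\in C_0^\infty(\Omega)$, I define $v(y) := u(y/t)$, so that $v\in C_0^\infty(t\Omega)$. First I would record the two elementary transformation rules: $\nabla_y v(y) = t^{-1}(\nabla_x u)(x)$ and, crucially, the constant magnetic-field vector potential satisfies
\begin{align*}
A_{B/t^2}(y) = \tfrac{1}{t^2}A_B(y) = \tfrac{1}{t^2}A_B(tx) = \tfrac{1}{t}A_B(x),
\end{align*}
since $A_B$ is linear in the spatial variable in both $d=2$ and $d=3$. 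Combining these,
\begin{align*}
(-i\nabla_y + A_{B/t^2}(y))v(y) = \tfrac{1}{t}\bigl((-i\nabla_x + A_B(x))u\bigr)(x).
\end{align*}

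Second, I would pass from this pointwise identity to the quadratic forms. Using $dy = t^d\,dx$, the above gives
\begin{align*}
h_{A_{B/t^2}}^{t\Omega}[v] = t^{d-2}\, h_{A_B}^{\Omega}[u], \qquad \|v\|_{L^2(t\Omega)}^2 = t^{d}\, \|u\|_{L^2(\Omega)}^2,
\end{align*}
so the Rayleigh quotient transforms by a factor $t^{-2}$:
\begin{align*}
\frac{h_{A_{B/t^2}}^{t\Omega}[v]}{\|v\|_{L^2(t\Omega)}^2} = \frac{1}{t^2}\cdot \frac{h_{A_B}^{\Omega}[u]}{\|u\|_{L^2(\Omega)}^2}.
\end{align*}

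Finally, since $u\mapsto v$ is a bijective linear isomorphism $C_0^\infty(\Omega)\to C_0^\infty(t\Omega)$, it extends to a bijection of the respective form domains and in particular maps any $n$-dimensional subspace of the one to an $n$-dimensional subspace of the other. The min-max characterisation of eigenvalues of the Friedrichs extension then immediately yields $\lambda_n(t\Omega, B/t^2) = t^{-2}\lambda_n(\Omega,B)$. I do not anticipate any real obstacle here: the only point that requires a moment's care is the observation that $A_B$ scales homogeneously of degree one in $x$, which is exactly what makes the gauge choice fixed in the introduction compatible with the rescaling; any other gauge would work equally well after a gauge transformation, whose invariance was already noted in Proposition~\ref{prop:simple_est_domain_mono}.
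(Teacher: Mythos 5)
Your proof is correct and follows exactly the route the paper intends: the paper simply asserts that the identity ``follows directly from the definition of the quadratic form,'' and your change of variables $y=tx$, the homogeneity of the linear vector potential, the resulting $t^{-2}$ scaling of the Rayleigh quotient, and the min--max principle are precisely the details behind that assertion. Nothing to add.
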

Here, $t\Omega = \{tx \, : \, x \in \Omega \}$ is a homothetic dilation of $\Omega$. The identity follows directly from the definition of the quadratic form $h_A^\Omega$.

The scaling identity has an important consequence for problem \eqref{eq:min_problem_c}. Since 
\begin{align}
\lambda_1(\Omega, B ) = |\Omega|^{- \frac{2}{d}}\lambda_1(|\Omega|^{-\frac{1}{d}}\Omega, B|\Omega|^\frac{2}{d}),
\end{align}
we conclude that 
\begin{align}
\min_{\substack{\Omega \text{ open}, \\ |\Omega|=c}} \lambda_1(\Omega, B)= c^{-\frac{2}{d}} \min_{\substack{\Omega \text{ open}, \\ |\Omega|=1}} \lambda_1(\Omega, Bc^{\frac{2}{d}}) 
\end{align}
This means that if the shape optimization problem \eqref{eq:min_problem_c} is solved for $c=1$ and any $B$, it is automatically solved for arbitrary $c>0$ by appropriate scaling. For this reason, we will focus on the volume-normalized case $c=1$ for the rest of the paper. Under the assumption that a minimizer for \eqref{eq:min_problem_c} exists for any $B\geq 0$, let us introduce the following notation.

\begin{definition}
For $B \geq 0$, let
\begin{align} \label{eq:lambda_1_min_problem}
\lambda_1^*(B) :=\min_{\substack{\Omega\subset\mathbb{R}^3 \text{ open}, \\ |\Omega|=1}} \lambda_1(\Omega, B).
\end{align}
Furthermore, we denote (possibly non-unique) minimizers by $\Omega^*(B)$. 
\end{definition}

\section{Cylinders} \label{sec:cyl}

The goal of this section is to discuss the volume-normalized shape optimization problem \eqref{eq:lambda_1_min_problem} among the class of cylindrical domains that are aligned with the $z$-axis. This yields an upper bound for the minimal eigenvalue $\lambda_1^*(B)$ and assertions about the geometry of possible minimizers $\Omega^*(B)$. 

Let $\Omega_{\omega,h} = \omega \times (0,h)$ where $\omega \subset \mathbb{R}^2$ is a planar, bounded, open domain and $h>0$. We are interested in the minimization problem
\begin{align}
\min_{|\Omega_{\omega,h}|=1} \lambda_1(\Omega_{\omega,h}, B). \label{eq:min_problem_generalized_cylinders}
\end{align}
By separation of variables,
\begin{align*}
\lambda_1(\Omega_{\omega,h}, B) = \lambda_{1}(\omega, B)+ \frac{\pi^2}{h^2}.
\end{align*}
The isoperimetric inequality by Erd{\H o}s \cite{Erdoes1996, Ghanta2024} for planar domains implies 
\begin{align*}
\lambda_1(\Omega_{\omega,h}, B) = \lambda_{1}(\omega, B)+ \frac{\pi^2}{h^2} \geq \lambda_{1}(D_R, B)+ \frac{\pi^2}{h^2}
\end{align*}
where $D_R$ is a disk with radius $R>0$ chosen such that $|D_R| = |\omega|$. Note that equality only occurs if $\omega$ is a disk. Thus, it suffices to seek solutions of problem \eqref{eq:min_problem_generalized_cylinders} among circular cylinders.

Let $h,R >0$ and let $C_{R,h} := D_R \times (0, h)$ denote a circular cylinder with base radius $R$ and height $h$. We look for solutions of the minimization problem
\begin{align}
\min_{\substack{|C_{R,h}|=1, \\ h>0}} \lambda_1(C_{R,h}, B). \label{eq:min_problem_cylinders}
\end{align}
Imposing the volume constraint $|C_{R,h}|=1$ gives us the relation $\pi R^2 h = 1$ between the radius $R$ and the height $h$ of the cylinders, making the problem effectively a one parameter problem. 

As noted before, separation of variables yields
\begin{align*}
\lambda_1(C_{R,h}, B) = \lambda_{1}(D_R, B)+ \frac{\pi^2}{h^2}.
\end{align*}
where $\lambda_{1}(D_R, B)$ denotes the first eigenvalue of a planar disk of radius $R= (\pi h)^{-1/2}$ at field strength $B$. Using the scaling identity for the disk yields
\begin{align*}
\lambda_1(C_{R,h}, B) = h \lambda_{1}\left(D,\frac{B}{ h}\right)  + \frac{\pi^2}{h^2}
\end{align*}
where $D$ is a disk with $|D|=1$. As $h \mapsto h\lambda_1(D,B/h)$ is real-analytic, strictly increasing with $h$ and satisfies 
\begin{align*}
\lim_{h\to 0} h\lambda_1(D,B/h) = B, \qquad \lim_{h\to  \infty} h\lambda_1(D,B/h) = \infty,
\end{align*}
see the proof of Theorem 2.1 in \cite{Baur2025}, we realize that for any $B \geq 0$, problem \eqref{eq:min_problem_cylinders} admits a solution. The solution shall be the cylinder $C^*(B) = C_{R^*(B),h^*(B)}$ with optimal height $h^*(B)$ and corresponding optimal radius $R^*(B)=(\pi h^*(B))^{-1/2}$. Should the minimizer be non-unique, we pick an arbitrary one. We further define $\lambda_{1,cyl}^*(B):=\lambda_1(C^*(B), B)$ as the optimal ground state energy among cylinders. Figure \ref{fig:h_optcyl} shows $h^*(B)$ and $\lambda_{1,cyl}^*(B)$ obtained by numerical computations.

\begin{figure}
\includegraphics[scale=0.33]{./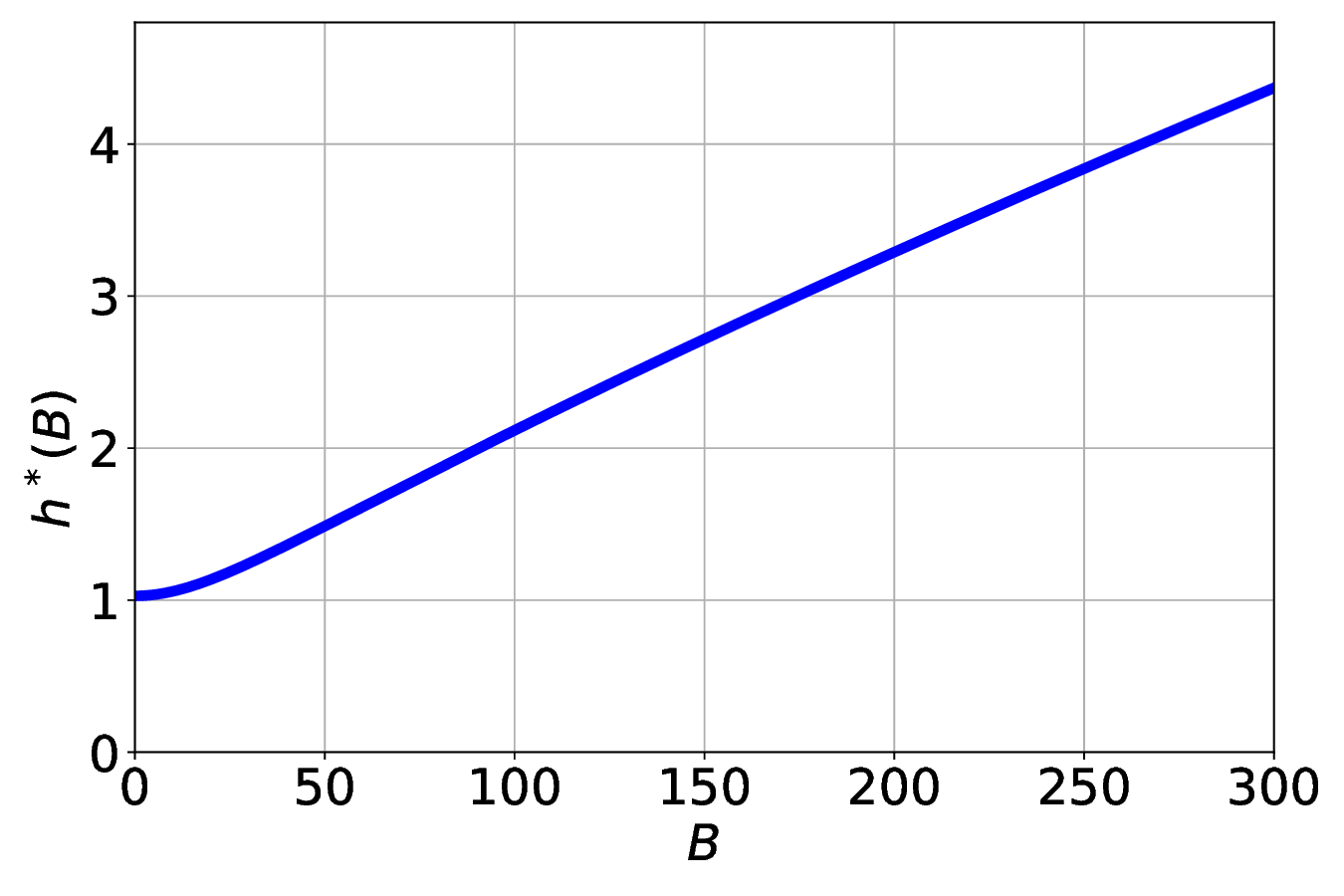}
\includegraphics[scale=0.33]{./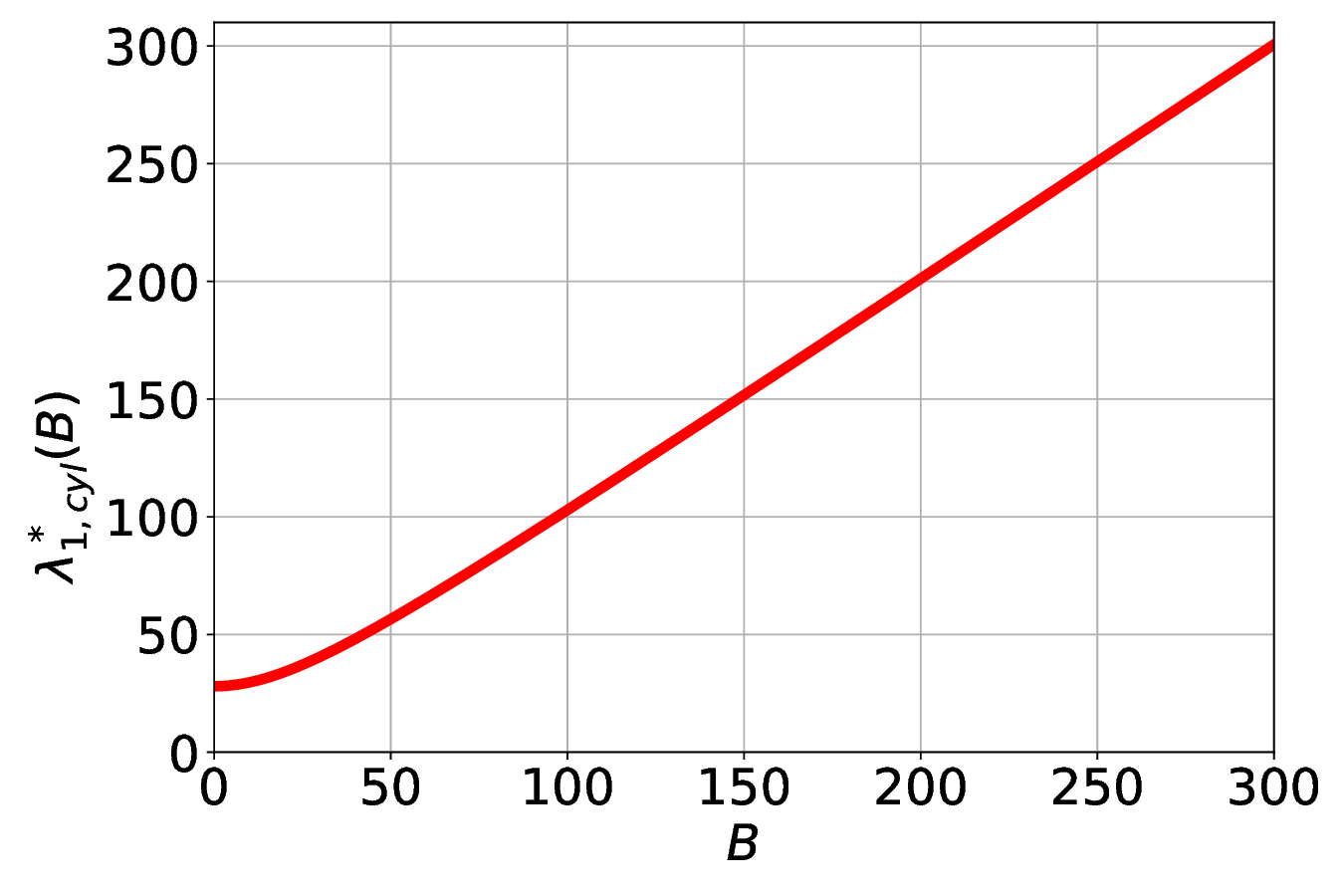}
\caption{$h^*(B)$ and $\lambda_{1,cyl}^*(B)$ as functions of the field strength $B$.} \label{fig:h_optcyl}
\end{figure}

\begin{remark}
If $B=0$, then
\begin{align*}
\lambda_1(C_{R,h}, 0) = \frac{(j_{0,1})^2}{R^2} + \frac{\pi^2}{h^2} = h \pi (j_{0,1})^2 + \frac{\pi^2}{h^2}
\end{align*}
where $j_{0,1}$ denotes the first positive zero of the Bessel function $J_0$. This yields
\begin{align*}
h^*(0) = \left(\frac{2\pi}{(j_{0,1})^2} \right)^{1/3} \approx 1.028 \qquad \text{and} \qquad  \lambda_{1,cyl}^*(0) = \frac{3}{4^{1/3}}  (\pi j_{0,1})^{4/3} \approx 28.016.
\end{align*}
\end{remark}

\medskip
\medskip

The following theorem describes the behavior of the height of the optimal cylinders and the associated ground state energy in the strong field limit.

\begin{theorem} \label{thm:lambda1cyl_asympt}
\begin{align}
h^*(B) = \frac{1}{6\pi} \frac{B}{\log(B)} (1+o(1)) \qquad \text{as } B\to \infty \label{eq:thm_h_star_6pi} 
\end{align}
and
\begin{align}
\lambda_{1,cyl}^*(B) &= B + 36\pi^4 \dfrac{\log(B)^2}{B^2} (1+o(1)) \qquad \text{as } B\to \infty.  \label{eq:thm_h_star_6pi_2} 
\end{align}

\end{theorem}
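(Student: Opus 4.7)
The plan is to reduce the cylinder minimization to a one-dimensional problem in the height $h$ and then carry out asymptotic analysis. Writing $\varepsilon(\beta) := \lambda_1(D, \beta) - \beta \geq 0$ for the unit-area disk $D$, the separation of variables formula derived above rearranges to
\begin{equation*}
\lambda_1(C_{R,h}, B) - B = F_B(h) := h \varepsilon(B/h) + \frac{\pi^2}{h^2},
\end{equation*}
so $\lambda_{1,cyl}^*(B) - B = \min_{h > 0} F_B(h)$. The main analytic input I would establish is the sharp strong-field asymptotic
\begin{equation*}
\varepsilon(\beta) = \frac{\beta^2}{\pi}\, e^{-\beta/(2\pi)}\,(1 + o(1)) \qquad \text{as } \beta \to \infty,
\end{equation*}
obtained from the Kummer confluent hypergeometric equation satisfied by the radial eigenfunctions on the disk: in the angular-momentum-zero sector of the lowest Landau level, the Dirichlet condition becomes a transcendental equation on a Kummer function, whose smallest positive root in the relevant regime satisfies $s \sim \xi_R e^{-\xi_R}$ with $s = \varepsilon/(2\beta)$ and $\xi_R = \beta R^2/2 = \beta/(2\pi)$. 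One further needs to verify that this sector dominates among all lowest-Landau-level sectors: a parallel computation shows that the ratio of the $k$th to the $(k-1)$th sector correction equals $B/(2\pi k)$, so for $B$ large the smallest correction sits at $k=0$.

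Next, I would establish qualitative bounds on any minimizer $h^*(B)$. Plugging in the candidate $h = B/(6\pi\log B)$ gives $F_B(h) = O((\log B)^2/B^2) \to 0$, hence $F_B(h^*) \to 0$. This simultaneously forces $h^*(B) \to \infty$ (otherwise $\pi^2/h^{*2}$ stays bounded below) and $B/h^*(B) \to \infty$ (otherwise $h^*\varepsilon(B/h^*)$ grows linearly in $B$, since $\varepsilon$ has a strictly positive lower bound on any compact set in $[0,\infty)$). Thus $\beta^* := B/h^*(B) \to \infty$ and the asymptotic for $\varepsilon$ applies; the strict monotonicity of $h \mapsto h\lambda_1(D,B/h) = B + h\varepsilon(B/h)$ noted in the excerpt moreover yields uniqueness of $h^*(B)$ for $B$ large.

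The first-order condition $F_B'(h^*) = 0$ reads $\varepsilon(\beta^*) - \beta^* \varepsilon'(\beta^*) = 2\pi^2/h^{*3}$. Differentiating the asymptotic formula for $\varepsilon$ and keeping only the leading term yields $\varepsilon(\beta) - \beta\varepsilon'(\beta) \sim (\beta^3/(2\pi^2))\, e^{-\beta/(2\pi)}$, so the equation collapses to
\begin{equation*}
B^3 e^{-B/(2\pi h^*)} = 4\pi^4\,(1+o(1)).
\end{equation*}
Taking logarithms gives $B/(2\pi h^*) = 3\log B + O(1)$, which implies \eqref{eq:thm_h_star_6pi}. Substituting back into $F_B$: the term $\pi^2/h^{*2}$ evaluates to $36\pi^4(\log B)^2/B^2(1+o(1))$, while $h^*\varepsilon(B/h^*) = O(\log B/B^2)$ is of strictly lower order, giving \eqref{eq:thm_h_star_6pi_2}.

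The main obstacle is the sharp asymptotic for $\varepsilon(\beta)$: both the exponential rate $1/(2\pi)$ and the precise polynomial prefactor $\beta^2$ are essential, since the constant $6\pi$ in the final formula arises as the product of the polynomial degree plus one (namely $3$) and the inverse exponential rate ($2\pi$). A less sharp asymptotic would give the $B/\log B$ scaling of $h^*$ but not the correct constant. All remaining work is elementary calculus around this asymptotic and the first-order condition.
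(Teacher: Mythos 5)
Your reduction to minimizing $F_B(h) = h\,\varepsilon(B/h) + \pi^2/h^2$ and your qualitative localization of $h^*(B)$ (namely $h^*(B)\to\infty$ and $B/h^*(B)\to\infty$) match the paper's setup, and the key analytic input --- the two-term strong-field expansion $\varepsilon(\beta)=\pi^{-1}\beta^2 e^{-\beta/(2\pi)}(1+o(1))$ for the unit disk --- is the same (the paper simply cites it rather than re-deriving it from the Kummer equation). Where you genuinely diverge is in extracting the constant $1/(6\pi)$: you use the stationarity condition $F_B'(h^*)=0$, whereas the paper runs a two-sided comparison against the trial family $h_C(B)=CB/\log B$, showing that $C>1/(6\pi)$ forces the exponential term to be too large (via the monotonicity of $x\mapsto xe^{-x}$ for large $x$) while $C<1/(6\pi)$ forces the $\pi^2/h^2$ term to be too large, both contradicting optimality against the $C=1/(6\pi)$ trial.

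The gap in your route is the step ``differentiating the asymptotic formula for $\varepsilon$'': the relation $\varepsilon(\beta)=\pi^{-1}\beta^2e^{-\beta/(2\pi)}(1+o(1))$ carries no information about $\varepsilon'(\beta)$, since the $(1+o(1))$ factor may oscillate; so the asserted asymptotic $\varepsilon(\beta)-\beta\varepsilon'(\beta)\sim \beta^3 e^{-\beta/(2\pi)}/(2\pi^2)$ does not follow from what you have established and would need a separate proof (e.g., from the implicit transcendental equation for the disk eigenvalue via the implicit function theorem, or from a Hellmann--Feynman identity). Without it, the first-order condition yields nothing quantitative --- you only know a priori that $\varepsilon(\beta)-\beta\varepsilon'(\beta)>0$ from the monotonicity of $h\mapsto h\lambda_1(D,B/h)$. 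This is exactly what the paper's comparison argument is designed to avoid: it evaluates $\varepsilon$ itself along trial heights and never touches $\varepsilon'$. If you either supply the derivative asymptotics or replace the stationarity condition by such a comparison, the remainder of your computation (taking logarithms to get $B/(2\pi h^*)=3\log B+O(1)$, then substituting back to see that $\pi^2/h^{*2}$ dominates and equals $36\pi^4\log(B)^2B^{-2}(1+o(1))$) is correct and delivers both assertions of the theorem.
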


\begin{proof}
%

First, consider a volume-normalized cylinder $C_{R,h}$ where $h=h(B)$ is a function of the field strength with $B/h(B)\to \infty$ for $B\to \infty$. Then, one can apply the asymptotic expansion for the ground state energy of the disk in the strong field limit, see \cite[Theorem 5.1]{Helffer2017} and \cite[Theorem 2.1]{Baur2025}, which yields
\begin{align}
\lambda_1(C_{R,h(B)}, B) &= h(B) \lambda_{1}\left(D,\frac{B}{ h(B)}\right)  + \frac{\pi^2}{h(B)^2} \\
&=h(B) \left( \frac{B}{h(B)} + \frac{1}{\pi} \left(  \frac{B}{h(B)}\right)^2 \exp\Big(- \frac{B}{2\pi h(B)}\Big)  (1+O((B/h(B))^{-1})) \right)  + \frac{\pi^2}{h(B)^2}\\
&= B + \frac{1}{\pi}  \frac{B^2}{h(B)} \exp\Big(-\frac{B}{2\pi h(B)}\Big)   (1+O((B/h(B))^{-1}))   + \frac{\pi^2}{h(B)^2} \label{eq:lambda_1_expansionallowed}
\end{align}
as $B\to \infty$. Choosing for example $h(B) = B^\alpha$ with $0< \alpha<1$, one finds that 
\begin{align}
\lambda_1(C_{R,h(B)}, B) = B + \frac{\pi^2}{B^{2\alpha}} (1+o(1)) \label{eq:cylinder_Balpha}
\end{align}
as $B\to \infty$. Since by Proposition \ref{prop:simple_est_domain_mono} (2), 
\begin{align}
\lambda_1(C_{R,h}, B) = \lambda_1(D_R,B) + \frac{\pi^2}{h^2} \geq B + \frac{\pi^2}{h^2} \label{eq:easy_cylinder_estimate}
\end{align}
for any $R>0$ and any $h>0$, we conclude that $h^*(B)$ must be unbounded, i.e.\ $h^*(B)\to \infty$ for $B\to \infty$. In fact, equation \eqref{eq:cylinder_Balpha} implies that $h^*(B)$ must grow faster than $B^\alpha$ for any $ \alpha<1$.

Let us now show that $h^*(B)=o(B)$ as $B\to \infty$. Suppose $c>0$ is arbitrary and $h> c B$. Since $h \mapsto h\lambda_1(D,B/h)$ is strictly increasing (this follows for example from the fact that $t \mapsto \lambda_1(tD,B)$ is strictly decreasing with $t$ and the scaling identity Proposition \ref{prop:scaling_identity}), we have
\begin{align}
\lambda_1(C_{R,h}, B) &=  h \lambda_{1}\left(D,\frac{B}{ h}\right)  + \frac{\pi^2}{h^2} > c \lambda_{1}(D,c^{-1}) B. \label{eq:lambda1_CRH_lowerbound}
\end{align}
Now, notice that $c\lambda_1(D,c^{-1})>1$ for any $c>0$, see \cite[Theorem 2.1]{Baur2025}. Comparing the lower bound \eqref{eq:lambda1_CRH_lowerbound} with the asymptotic \eqref{eq:cylinder_Balpha} for the choice $h(B) =B^\alpha$, $ \alpha<1$, we see that cylinders $C_{R,h}$ with $h> c B$ cannot be minimizers of $\lambda_1(\Omega, B)$ when $B$ is large enough. Hence, we must have $h^*(B) \leq c B$ for $B$ large enough. Since $c$ was arbitrary, this shows $h^*(B)=o(B)$ as $B\to \infty$.

We are now ready to prove the asymptotics for $h^*(B)$. Since $h^*(B)=o(B)$ as $B\to \infty$, we can apply the asymptotic expansion \eqref{eq:lambda_1_expansionallowed} to the optimal cylinders $C^*(B)$, i.e.~we have
\begin{align}
\lambda_1(C^*(B), B) &= B + \frac{1}{\pi}  \frac{B^2}{h^*(B)} \exp\Big(-\frac{B}{2\pi h^*(B)}\Big)   (1+O((B/h^*(B))^{-1}))   + \frac{\pi^2}{h^*(B)^2}
\end{align}
as $B\to \infty$. Let $h_C(B)=C B/\log(B)$. Then,
\begin{align}
\lambda_1(C_{R,h_C(B)}, B) &= B + \dfrac{1}{\pi C}   \log(B) B^{1-\frac{1}{2\pi C}}   (1+O(\log(B)^{-1})) + \dfrac{\pi^2}{C^2} \dfrac{\log(B)^2}{B^2}
\end{align}
as $B\to \infty$. If $C=1/(6\pi)$, then 
\begin{align}
\lambda_1(C_{R,h_{1/(6\pi)}(B)}, B) &= B + 6  \frac{ \log(B) }{B^2} (1+O(\log(B)^{-1})) + 36\pi^4 \dfrac{\log(B)^2}{B^2} \label{eq:lambda1_Cequal1_6pi} 
\end{align}
as $B\to \infty$. 

Now, suppose there exists a $C> 1/(6\pi)$ such that $h^*(B)>h_C(B)$ for $B$ large enough. This means $B/h^*(B)<B/h_C(B)=\log(B)/C$ for $B$ large enough and because $x \mapsto x\exp(-x)$ is strictly decreasing with $x$ for large $x$,
\begin{align}
\lambda_1(C^*(B), B)  &= B + \frac{1}{\pi}  \frac{B^2}{h^*(B)} \exp\Big(-\frac{B}{2\pi h^*(B)}\Big)   (1+O((B/h^*(B))^{-1}))   + \frac{\pi^2}{h^*(B)^2} \\
 &\geq B + \dfrac{1}{\pi C}   \log(B) B^{1-\frac{1}{2\pi C}}   (1+o(1)) \label{eq:lambda1_hstar_lower_bound_C1_6pi} 
\end{align}
as $B\to \infty$. But for $C> 1/(6\pi)$ this lower bound is asymptotically larger than the expression given by \eqref{eq:lambda1_Cequal1_6pi} when considering a cylinder with height $h=h_{1/(6\pi)}(B)$, contradicting the optimality of $h^*(B)$. Thus, 
\begin{align}
h^*(B) \leq \frac{1}{6\pi} \frac{B}{\log(B)} (1+o(1)) \label{eq:h_star_lowerbnd}
\end{align}
as $B\to \infty$. 

Next, assume there exists a constant $0<C< 1/(6\pi)$ such that $h^*(B)<h_C(B)$ for $B$ large enough. This means $B/h^*(B)>B/h_C(B)=\log(B)/C$ for $B$ large enough and hence
\begin{align}
\lambda_1(C^*(B), B)  &= B + \frac{1}{\pi}  \frac{B^2}{h^*(B)} \exp\Big(-\frac{B}{2\pi h^*(B)}\Big)   (1+O((B/h^*(B))^{-1}))   + \frac{\pi^2}{h^*(B)^2} \\
 &\geq B + \dfrac{\pi^2}{C^2} \dfrac{\log(B)^2}{B^2} 
\end{align}
But again, for $C< 1/(6\pi)$ this lower bound is asymptotically larger than the expression given by \eqref{eq:lambda1_Cequal1_6pi}, contradicting the optimality of $h^*(B)$. Hence, 
\begin{align}
h^*(B) \geq \frac{1}{6\pi} \frac{B}{\log(B)} (1+o(1)) \label{eq:h_star_upperbnd}
\end{align}
as $B\to \infty$. In view of \eqref{eq:h_star_lowerbnd} and \eqref{eq:h_star_upperbnd}, we conclude \eqref{eq:thm_h_star_6pi}. 

Furthermore, it is clear from \eqref{eq:lambda1_Cequal1_6pi} that
\begin{align}
\lambda_{1,cyl}^*(B) &\leq \lambda_1(C_{R,h_{1/(6\pi)}(B)}, B) = B + 36\pi^4 \dfrac{\log(B)^2}{B^2} (1+o(1))  \label{eq:lambda1_star_upperbnd}
\end{align}
as $B\to \infty$. On the other hand,
\begin{align}
\lambda_{1,cyl}^*(B) &= \lambda_1(C^*(B), B) \geq B   + \frac{\pi^2}{h^*(B)^2} = B + 36\pi^4 \dfrac{\log(B)^2}{B^2} (1+o(1)) \label{eq:lambda1_star_lowerbnd}
\end{align}
as $B\to \infty$. From \eqref{eq:lambda1_star_upperbnd} and \eqref{eq:lambda1_star_lowerbnd} follows \eqref{eq:thm_h_star_6pi_2}.
\end{proof}

The above analysis for cylinders has a number of consequences on possible minimizers $\Omega^*(B)$ of the full shape optimization problem. To formulate the next result, let us define for any bounded, open domain $\Omega \subset \mathbb{R}^3$
\begin{align}
h(\Omega):=\sup_{x,y\in \Omega} |x_3-y_3| \, , \qquad R(\Omega):=\sup_{x,y\in \Omega} \big( |x_1-y_1|^2+|x_2-y_2|^2 \big)^{1/2} \, .
\end{align}
One can interpret $h(\Omega)$ as the length of the projection of $\Omega$ onto the $z$-axis and $R(\Omega)$ as the outer radius of the projection of $\Omega$ to the $xy$-plane. For the next result, we note that the basic estimate $\lambda_1(\Omega,B) \geq B$ from Proposition \ref{prop:simple_est_domain_mono} (2) can be improved by an additional constant that involves $h(\Omega)$.

\begin{lemma} \label{lem:basic_est_pluspih2}
Let $\Omega \subset \mathbb{R}^3$ be a bounded, open domain. Then 
\begin{align}
\lambda_1(\Omega, B) \geq B + \frac{\pi^2}{h(\Omega)^2} , \qquad B\geq 0.
\end{align}
\end{lemma}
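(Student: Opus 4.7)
The plan is to split the quadratic form $h_A^\Omega[u]$ into a horizontal magnetic piece and a vertical kinetic piece, and to estimate each separately. With the gauge $A(x_1,x_2,x_3)= \tfrac{B}{2}(-x_2,x_1,0)^T$ chosen in the introduction, the third component of $-i\nabla+A$ reduces to $-i\partial_{x_3}$, so for any $u\in C_0^\infty(\Omega)$,
\begin{align*}
h_A^\Omega[u] = \int_\Omega \Big( \big|(-i\partial_{x_1} - \tfrac{B}{2}x_2)u\big|^2 + \big|(-i\partial_{x_2} + \tfrac{B}{2}x_1)u\big|^2 \Big)\, dx + \int_\Omega |\partial_{x_3} u|^2 \, dx.
\end{align*}
The key point is that the two sums of squares are precisely the quadratic form of the two-dimensional magnetic Laplacian in the variables $(x_1,x_2)$, whereas the last term is the ordinary one-dimensional Dirichlet form in $x_3$.

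First I would handle the horizontal part by slicing. For each fixed $x_3$, the horizontal slice $\Omega_{x_3}=\{(x_1,x_2): (x_1,x_2,x_3)\in\Omega\}$ is a bounded open subset of $\mathbb{R}^2$, and $u(\cdot,\cdot,x_3)$ lies in $C_0^\infty(\Omega_{x_3})$. Applying the planar lower bound $\lambda_1(\omega,B)\geq B$ from Proposition \ref{prop:simple_est_domain_mono}(2) slicewise gives
\begin{align*}
\int_{\Omega_{x_3}} \Big( \big|(-i\partial_{x_1} - \tfrac{B}{2}x_2)u\big|^2 + \big|(-i\partial_{x_2} + \tfrac{B}{2}x_1)u\big|^2 \Big) dx_1 dx_2 \geq B \int_{\Omega_{x_3}} |u|^2 \, dx_1 dx_2,
\end{align*}
and integrating over $x_3$ yields $B\|u\|_{L^2(\Omega)}^2$ as a lower bound for the horizontal part.

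Next I would handle the vertical part. Setting $a=\inf_{x\in\Omega} x_3$ and $b=\sup_{x\in\Omega} x_3$, one has $b-a=h(\Omega)$, and for every fixed $(x_1,x_2)$ the function $u(x_1,x_2,\cdot)$ is compactly supported in $(a,b)$. The classical one-dimensional Dirichlet Faber-Krahn inequality on the interval $(a,b)$ gives
\begin{align*}
\int_a^b |\partial_{x_3} u|^2 \, dx_3 \geq \frac{\pi^2}{h(\Omega)^2} \int_a^b |u|^2 \, dx_3,
\end{align*}
and integrating over $(x_1,x_2)$ yields $\tfrac{\pi^2}{h(\Omega)^2}\|u\|_{L^2(\Omega)}^2$ for the vertical part. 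Summing the two estimates, dividing by $\|u\|_{L^2(\Omega)}^2$, taking the infimum over $u\in C_0^\infty(\Omega)$ and invoking the variational characterization of $\lambda_1(\Omega,B)$ gives the claim. There is no real obstacle here; both inputs are classical, and the argument just exploits the fact that the constant field in the $z$-direction does not couple horizontal and vertical variables in this gauge.
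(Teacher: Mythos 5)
Your proof is correct, but it takes a different technical route from the paper. The paper's argument is purely operator-theoretic at the level of domains: since $\Omega$ is bounded, it encloses $\Omega$ (after a translation) in a wide circular cylinder $C_{R,h(\Omega)}$ of height exactly $h(\Omega)$, applies domain monotonicity (Proposition \ref{prop:simple_est_domain_mono}~(4)), and then uses separation of variables on the cylinder together with the planar bound $\lambda_1(D_R,B)\geq B$. You instead decompose the quadratic form $h_A^\Omega[u]$ directly into its horizontal magnetic part and its vertical kinetic part, bound the first slicewise by $B\|u\|^2$ (the planar/Landau lower bound applied on each horizontal cross-section) and the second fiberwise by $\tfrac{\pi^2}{h(\Omega)^2}\|u\|^2$ (the one-dimensional Dirichlet eigenvalue of an interval of length $h(\Omega)$), and then invoke the variational principle. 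Both arguments rest on the same two inputs --- the planar bound $\geq B$ and the interval bound $\pi^2/h^2$ --- and exploit that the constant field decouples the $z$-direction in this gauge; the paper's version is shorter because it reuses machinery already established (monotonicity and separation of variables on cylinders), while yours is more self-contained, avoids the auxiliary enclosing cylinder altogether, and would more readily yield refinements in which the horizontal bound $B$ is replaced by $\inf_{x_3}\lambda_1(\Omega_{x_3},B)$. The only points worth stating explicitly in your write-up are that the restriction of $u\in C_0^\infty(\Omega)$ to a slice or a vertical fiber is indeed compactly supported in that slice or fiber (which holds because $\operatorname{supp}u$ is compact in the open set $\Omega$), and that the slicewise planar bound does not require the slices to be connected --- both are immediate, so there is no gap.
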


\begin{proof}
Since $\Omega$ is bounded, one can choose a large $R>0$ such that $\Omega$ is contained in the cylinder $C_{R,h(\Omega)}$ (after some translation). Then, by Proposition \ref{prop:simple_est_domain_mono} (1), (2) and (4),
\begin{align}
\lambda_1(\Omega, B) \geq \lambda_1(C_{R,h(\Omega)}, B) = \lambda_1(D_R,B)  + \frac{\pi^2}{h(\Omega)^2}  \geq B + \frac{\pi^2}{h(\Omega)^2}, \qquad B\geq 0.
\end{align}
\end{proof}

Comparing this lower bound with the asymptotic expansion of $\lambda_{1,cyl}^*(B)$ from Theorem \ref{thm:lambda1cyl_asympt}, one sees that the first eigenvalue of optimal cylinders is eventually smaller than that of any fixed bounded domain. We conclude

\begin{corollary}
Let $\Omega \subset \mathbb{R}^3$ be a bounded, open domain. Then there exists $B_0(\Omega)>0$ such that $\Omega^*(B)$ cannot be $\Omega$ for any $B \geq B_0(\Omega)$.
\end{corollary}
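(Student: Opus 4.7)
The plan is to combine the two bounds that the paper has just set up: the simple lower bound from Lemma \ref{lem:basic_est_pluspih2} applied to the fixed domain $\Omega$, and the asymptotic upper bound on $\lambda_1^*(B)$ coming from Theorem \ref{thm:lambda1cyl_asympt} by testing against optimal cylinders. The point is that the lower bound for $\Omega$ contains a strictly positive additive constant $\pi^2/h(\Omega)^2$ that is frozen (since $\Omega$ is fixed and bounded, so $h(\Omega)<\infty$), whereas the upper bound from cylinders approaches $B$ only at rate $O(\log(B)^2/B^2)$. Hence, for sufficiently large $B$, cylinders beat $\Omega$.

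More concretely, I would first dispose of the trivial case $|\Omega|\neq 1$: since any $\Omega^*(B)$ has unit volume by definition, $\Omega$ cannot equal $\Omega^*(B)$ for any $B$ and the conclusion is immediate with $B_0(\Omega)=0$. So we may assume $|\Omega|=1$. Then by Lemma \ref{lem:basic_est_pluspih2},
\begin{align*}
\lambda_1(\Omega,B)\ \geq\ B+\frac{\pi^2}{h(\Omega)^2}
\end{align*}
for every $B\geq 0$. On the other hand, using the optimal cylinder $C^*(B)$ as a competitor in the volume-normalized minimization problem and invoking Theorem \ref{thm:lambda1cyl_asympt},
\begin{align*}
\lambda_1^*(B)\ \leq\ \lambda_{1,cyl}^*(B)\ =\ B+36\pi^4\,\frac{\log(B)^2}{B^2}(1+o(1)) \qquad \text{as } B\to\infty.
\end{align*}

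Since $\log(B)^2/B^2\to 0$ while $\pi^2/h(\Omega)^2$ is a fixed positive constant, there exists $B_0(\Omega)>0$ such that $36\pi^4\log(B)^2/B^2(1+o(1))<\pi^2/h(\Omega)^2$ for all $B\geq B_0(\Omega)$. For such $B$ one has $\lambda_1^*(B)<\lambda_1(\Omega,B)$, and therefore $\Omega$ cannot be a minimizer $\Omega^*(B)$.

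There is no real obstacle here; the work was already done in Lemma \ref{lem:basic_est_pluspih2} and Theorem \ref{thm:lambda1cyl_asympt}. The only point to watch is that the constant on the cylinder side ($\log(B)^2/B^2$) must decay, which it does, so the threshold $B_0(\Omega)$ is well defined and depends on $\Omega$ only through $h(\Omega)$.
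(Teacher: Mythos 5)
Your argument is exactly the paper's: compare the lower bound $\lambda_1(\Omega,B)\geq B+\pi^2/h(\Omega)^2$ from Lemma \ref{lem:basic_est_pluspih2} with the asymptotic $\lambda_{1,cyl}^*(B)=B+36\pi^4\log(B)^2/B^2(1+o(1))$ from Theorem \ref{thm:lambda1cyl_asympt}, and note that the fixed positive constant eventually loses to the decaying cylinder term. The proposal is correct, and your explicit handling of the case $|\Omega|\neq 1$ is a harmless extra detail the paper leaves implicit.
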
 

and in particular

\begin{corollary}
$\Omega^*(B)$ cannot be a ball for $B$ large enough.
\end{corollary}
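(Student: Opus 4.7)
The plan is to reduce the statement directly to the preceding corollary by observing that, under the volume constraint $|\Omega|=1$, every ball in $\mathbb{R}^3$ is a translate of one fixed bounded domain. Let $\mathcal{B}$ denote the open ball of volume $1$ centered at the origin, so that $\mathcal{B}$ has radius $r=(3/(4\pi))^{1/3}$ and $h(\mathcal{B})=2r$. Any ball $\tilde{\mathcal{B}}\subset\mathbb{R}^3$ with $|\tilde{\mathcal{B}}|=1$ is then of the form $\tilde{\mathcal{B}}=\mathcal{B}+x_0$ for some $x_0\in\mathbb{R}^3$, and by Proposition \ref{prop:simple_est_domain_mono} (1) we have $\lambda_1(\tilde{\mathcal{B}},B)=\lambda_1(\mathcal{B},B)$. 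Hence it suffices to exclude the single candidate $\mathcal{B}$.

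To exclude $\mathcal{B}$, I would simply apply the previous corollary to $\Omega=\mathcal{B}$, which supplies a threshold $B_0(\mathcal{B})>0$ such that $\Omega^*(B)\neq\mathcal{B}$ for every $B\geq B_0(\mathcal{B})$. Combined with the translation argument above, this gives $B_0:=B_0(\mathcal{B})$ with the required property. For transparency, one can unwind the previous corollary in this particular case: Lemma \ref{lem:basic_est_pluspih2} yields $\lambda_1(\mathcal{B},B)\geq B+\pi^2/h(\mathcal{B})^2$, a constant gap above $B$, while Theorem \ref{thm:lambda1cyl_asympt} shows $\lambda_{1,cyl}^*(B)-B=O(\log(B)^2/B^2)\to 0$, so for $B$ large enough there exists an admissible circular cylinder with $\lambda_1(C,B)<\lambda_1(\mathcal{B},B)$.

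There is essentially no obstacle here: the only point requiring a line of justification is the translation invariance that lets us pass from "no fixed ball is a minimizer" to "no ball whatsoever is a minimizer". The nontrivial analytic content is entirely contained in Theorem \ref{thm:lambda1cyl_asympt} and Lemma \ref{lem:basic_est_pluspih2}, which have already been established.
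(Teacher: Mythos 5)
Your proposal is correct and follows essentially the same route as the paper: the corollary is obtained by comparing the constant-gap lower bound $\lambda_1(\mathcal{B},B)\geq B+\pi^2/h(\mathcal{B})^2$ from Lemma \ref{lem:basic_est_pluspih2} with the vanishing gap $\lambda_{1,cyl}^*(B)-B\to 0$ from Theorem \ref{thm:lambda1cyl_asympt}, exactly as the paper does when it deduces this statement as a special case of the preceding corollary. Your additional remark that translation invariance reduces all unit-volume balls to a single fixed domain is a small but legitimate point of care that the paper leaves implicit.
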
 

Comparing optimal cylinders with general minimizers $\Omega^*(B)$ by employing the monotonicity principle leads to more precise statements on the behavior of the geometry of $\Omega^*(B)$ as $B$ grows. The following result shows that minimizers of \eqref{eq:lambda_1_min_problem} must elongate (or ``spaghettify'') along the direction of the magnetic field.

\begin{corollary} \label{cor:h_Romegastar}
\begin{align}
h(\Omega^*(B)) &\geq \frac{1}{6\pi} \frac{B}{\log(B)} (1+o(1)) \qquad \text{as } B\to \infty .
\end{align}
and
\begin{align}
R(\Omega^*(B))^2 & \geq 6 \frac{\log(B)}{B} (1+o(1)) \qquad \text{as } B\to \infty,
\end{align}

\end{corollary}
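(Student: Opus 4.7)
The plan is to combine the upper bound
\begin{equation*}
\lambda_1(\Omega^*(B), B) \leq \lambda_{1,cyl}^*(B) = B + 36\pi^4 \frac{\log(B)^2}{B^2}(1+o(1)),
\end{equation*}
which follows by minimality (since the optimal cylinder $C^*(B)$ is an admissible competitor) and Theorem \ref{thm:lambda1cyl_asympt}, with two complementary lower bounds on $\lambda_1(\Omega^*(B),B)$: a polynomial one in $1/h(\Omega^*(B))^2$ from Lemma \ref{lem:basic_est_pluspih2}, and one coming from the strong-field expansion of the disk, applied to the $xy$-projection of $\Omega^*(B)$.

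The first inequality is immediate. Lemma \ref{lem:basic_est_pluspih2} applied to $\Omega^*(B)$ gives $\lambda_1(\Omega^*(B), B) \geq B + \pi^2/h(\Omega^*(B))^2$, so comparison with the upper bound yields $\pi^2/h(\Omega^*(B))^2 \leq 36\pi^4 \log(B)^2/B^2 (1+o(1))$, which rearranges directly to the claim.

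For the second inequality, set $R := R(\Omega^*(B))$. Since $\Omega^*(B)$ is bounded, a translation places it inside a finite cylinder $D_R \times (0, N)$, and Proposition \ref{prop:simple_est_domain_mono}(4) together with separation of variables gives $\lambda_1(\Omega^*(B), B) \geq \lambda_1(D_R, B) + \pi^2/N^2 \geq \lambda_1(D_R, B)$. The upper bound therefore forces
\begin{equation*}
\lambda_1(D_R, B) - B \leq 36\pi^4 \frac{\log(B)^2}{B^2}(1+o(1)).
\end{equation*}
I would first argue that $BR^2 \to \infty$: by the scaling identity $\lambda_1(D_R, B) = \lambda_1(D, \pi B R^2)/(\pi R^2)$ (with $|D|=1$) and the fact, used in the proof of Theorem \ref{thm:lambda1cyl_asympt}, that $\beta \mapsto \lambda_1(D, \beta)/\beta$ is strictly decreasing with limit $1$ as $\beta\to\infty$, bounded $BR^2$ along a subsequence would imply $\lambda_1(D_R, B)/B \geq c > 1$, contradicting the upper bound. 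With $BR^2 \to \infty$, the strong-field expansion of $\lambda_1(D, \beta)$ from \cite[Theorem 2.1]{Baur2025} combined with scaling gives $\lambda_1(D_R, B) - B = R^2 B^2 \exp(-R^2 B/2)(1+o(1))$. Setting $R^2 = \gamma(B) \log(B)/B$ then transforms the inequality above into $B^{3 - \gamma(B)/2} \leq C \log(B)/\gamma(B)$ for some $C>0$ and $B$ large, and taking logarithms forces $\gamma(B) \geq 6(1+o(1))$, which is the asserted bound.

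The main technical obstacle is the extraction of the precise constant $6$ in this final step. Paralleling the proof of Theorem \ref{thm:lambda1cyl_asympt}, one must argue rigorously by contradiction against candidate growth rates $R^2 \sim \gamma \log(B)/B$ with $\gamma$ slightly above and below $6$ along subsequences, since the balance between the exponential factor $\exp(-R^2 B/2) = B^{-\gamma/2}$ and the polynomial correction $\log(B)^2/B^2$ in the upper bound is delicate and the remainder $(1+o(1))$ in the disk asymptotic must be controlled uniformly in this regime.
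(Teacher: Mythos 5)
Your proposal is correct and follows essentially the same route as the paper: both bounds come from comparing $\lambda_1(\Omega^*(B),B)$ against the optimal-cylinder asymptotic $\lambda_{1,cyl}^*(B)=B+36\pi^4\log(B)^2/B^2(1+o(1))$, using Lemma \ref{lem:basic_est_pluspih2} for the height and an enclosing cylinder $D_{R(\Omega^*(B))}\times(0,N)$ plus the strong-field disk expansion for the radius, with the final constant extracted by contradiction exactly as in the paper. Your explicit verification that $BR^2\to\infty$ before invoking the disk expansion is a small but worthwhile addition that the paper's own proof leaves implicit.
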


\begin{proof}
Suppose there exists a $C<6$ such that $R(\Omega^*(B))^2 < C \log(B) /B$ for all $B$ large enough. Let $C_{R,h}$ be a circular cylinder of radius $R=R(\Omega^*(B))$ and arbitrary height $h$ that contains $\Omega^*(B)$ up to some translation. By Proposition \ref{prop:simple_est_domain_mono} (4), we obtain
\begin{align}
\lambda_1(\Omega^*(B),B) \geq \lambda_1(C_{R,h},B) = \lambda_1(D_R, B)  + \frac{\pi^2}{h^2}  > \lambda_1(D_R, B) 
\end{align}
where
\begin{align}
\lambda_1(D_R, B)&= B + B^2  R(\Omega^*(B))^2 \exp\left( -\frac{B R(\Omega^*(B))^2}{2}\right) (1+O((B R(\Omega^*(B))^2)^{-1})) \\
&\geq B + C \log(B) B^{1- \frac{C}{2}} (1+O((\log(B))^{-1})) 
\end{align}
as $B\to \infty$. But for $C<6$, we then conclude $\lambda_1(\Omega^*(B),B)>\lambda_{1,cyl}^*(B)$ for large enough $B$, contradicting the optimality of $\Omega^*(B)$. Therefore, $R(\Omega^*(B))^2 \geq C \log(B) /B$ for large enough $B$.

Now, suppose there exists $C<1/(6\pi)$ such that $h(\Omega^*(B))< C B/\log(B)$ for all $B$ large enough. Let $C_{R,h}$ be a circular cylinder of height $h=h(\Omega^*(B))$ and arbitrary radius $R$ that contains $\Omega^*(B)$ up to some translation. By Proposition \ref{prop:simple_est_domain_mono} (4) and \eqref{eq:easy_cylinder_estimate}, we obtain
\begin{align}
\lambda_1(\Omega^*(B),B) \geq \lambda_1(C_{R,h},B) \geq B + \frac{\pi^2}{h(\Omega^*(B))^2}> B + \frac{\pi^2}{C^2} \frac{\log(B)^2}{B^2}.
\end{align}
But then, we find $\lambda_1(\Omega^*(B),B) > \lambda_{1,cyl}^*(B)$ for large $B$ which is again contradicting the optimality of $\Omega^*(B)$. Hence, $h(\Omega^*(B))\geq C B/\log(B)$ for large enough $B$.
\end{proof}

Note that since $\operatorname{diam}(\Omega) \geq h(\Omega)$ for any domain $\Omega$, the above corollary also shows 
\begin{align} \label{eq:diam_asympt_lower_bnd}
\operatorname{diam}(\Omega^*(B)) \geq \frac{1}{6\pi} \frac{B}{\log(B)}(1+o(1))
\end{align}
as $B\to \infty$. However, the above corollary is more specific since it states that the minimizer must be elongated along the axis of the magnetic field if the field strength is large.

Under the assumption that $\Omega^*(B)$ is convex, we can also provide a rough upper bound to the growth rate of $\operatorname{diam}(\Omega^*(B))$. For this, we apply the following theorem due to Hersch in two dimensions and extended to dimensions $d\geq 3$ by Payne and Stakgold.

\begin{theorem}[Hersch-Payne-Stakgold \cite{Hersch1960, Payne1973, Avkhadiev2007}]
Let $\Omega\subset \mathbb{R}^d$ be an open, convex set. Then 
\begin{align}
\lambda_1(\Omega,0) \geq  \frac{\pi^2}{4r_{in}(\Omega)^2} .
\end{align}
Here, $r_{in}(\Omega)$ denotes the inradius of $\Omega$. 
\end{theorem}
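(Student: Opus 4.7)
The plan is to prove this classical lower bound via a Barta-type comparison using a test function built from the distance function to the boundary. Set $r := r_{in}(\Omega)$ and $\delta(x) := \operatorname{dist}(x, \partial\Omega)$, so that $0 \leq \delta(x) \leq r$ on $\Omega$. The right-hand side $\pi^2/(4r^2)$ is precisely the first Dirichlet eigenvalue of an interval of length $2r$ and the bottom of the Dirichlet spectrum of the infinite slab $\mathbb{R}^{d-1} \times (-r,r)$, which suggests that the natural test function is the sine profile in the $\delta$-variable, namely $w(x) := \sin\!\bigl(\tfrac{\pi \delta(x)}{2r}\bigr)$. By construction $w$ is Lipschitz on $\overline{\Omega}$, positive in $\Omega$, and vanishes on $\partial\Omega$, so $w \in H_0^1(\Omega)$.

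First I would record two key properties of $\delta$. The eikonal identity $|\nabla\delta| = 1$ holds almost everywhere in $\Omega$. Moreover, for convex $\Omega$ the function $\delta$ is concave, which can be seen by writing $\delta(x) = \inf_H \operatorname{dist}(x, H^c)$, where $H$ ranges over the closed half-spaces containing $\Omega$: each such distance is affine in $x$, and the infimum of affine functions is concave. In particular $\Delta \delta \leq 0$ in the distributional sense on $\Omega$.

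Next I would compute $-\Delta w$ by the chain rule. Formally,
\begin{align*}
-\Delta w = \frac{\pi^2}{4r^2}\, |\nabla\delta|^2\, w - \frac{\pi}{2r} \cos\!\left(\frac{\pi\delta}{2r}\right) \Delta\delta,
\end{align*}
and since $|\nabla\delta|^2 = 1$ almost everywhere, $\cos(\pi\delta/(2r)) \geq 0$ because $\delta \leq r$, and $\Delta\delta \leq 0$, this yields the distributional inequality $-\Delta w \geq \tfrac{\pi^2}{4r^2}\, w$ in $\Omega$. Pairing this against the positive first Dirichlet eigenfunction $u$ solving $-\Delta u = \lambda_1(\Omega,0)\, u$ and integrating by parts, the boundary terms drop since both $u$ and $w$ vanish on $\partial\Omega$, yielding
\begin{align*}
\lambda_1(\Omega, 0) \int_\Omega uw \,\mathrm{d}x \;=\; \int_\Omega u(-\Delta w)\,\mathrm{d}x \;\geq\; \frac{\pi^2}{4r^2} \int_\Omega uw \,\mathrm{d}x,
\end{align*}
from which the claim follows after dividing by $\int_\Omega uw > 0$.

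The hard part will be justifying this distributional calculus: $\delta$ is only Lipschitz, so $\Delta\delta$ is merely a non-positive Radon measure rather than a function, and the chain rule used to compute $\Delta w$ mixes an $L^\infty$ term ($|\nabla \delta|^2$) with a measure. To make the identity and the integration by parts rigorous, I would either mollify $\delta$ in a concavity-preserving way (e.g.\ via an inf-convolution with a small quadratic) and pass to the limit, or approximate $\Omega$ from inside by smooth strictly convex subdomains $\Omega_\varepsilon \subset \Omega$ with $r_{in}(\Omega_\varepsilon) \to r$, use Proposition \ref{prop:simple_est_domain_mono}(4) to reduce to the smooth case, and exploit continuity of $\lambda_1$ under such exhaustions. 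Once this regularization is handled, the geometric heart of the argument collapses to the single fact that the distance function to the boundary of a convex set is concave.
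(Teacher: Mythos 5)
The paper does not prove this theorem; it is quoted from the literature (Hersch for $d=2$, Payne--Stakgold for $d\geq 3$) and used as a black box, so there is no in-paper argument to compare against. Your proof is the classical one: Barta's inequality with the test function $\sin\bigl(\pi\delta/(2r)\bigr)$, reducing the geometric input to concavity of the distance function on a convex set, and it is correct. The two facts you isolate ($|\nabla\delta|=1$ a.e.\ and $\Delta\delta\leq 0$ as a measure) are exactly what is needed, and the sign bookkeeping in the Barta pairing is right. Of your two proposed regularizations, the mollification of $\delta$ is the cleaner one: $\delta_\varepsilon=\delta*\rho_\varepsilon$ is smooth and concave, satisfies $|\nabla\delta_\varepsilon|\leq 1$ with $\nabla\delta_\varepsilon\to\nabla\delta$ a.e., so the exact chain rule applies to $w_\varepsilon=\sin(\pi\delta_\varepsilon/(2r))$, the nonpositive term can be dropped, and one passes to the limit in the weak formulation $\int\nabla w_\varepsilon\cdot\nabla\varphi\geq\frac{\pi^2}{4r^2}\int|\nabla\delta_\varepsilon|^2 w_\varepsilon\varphi$ for nonnegative $\varphi\in C_c^\infty(\Omega)$, then extends to $\varphi=u$ by density of the nonnegative cone in $H_0^1(\Omega)$. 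The alternative route via an interior exhaustion by smooth convex subdomains also works, but note that Proposition \ref{prop:simple_est_domain_mono}(4) alone gives the inequality in the unhelpful direction; what carries that argument is the continuity $\lambda_1(\Omega_\varepsilon)\to\lambda_1(\Omega)$ under increasing exhaustions together with $r_{in}(\Omega_\varepsilon)\to r_{in}(\Omega)$, which you correctly mention.
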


First and foremost, this result can be used to bound the inradius of $\Omega^*(B)$ from below. Specifically, if $\Omega^*(B)$ is convex, we have
\begin{align}
\lambda_1^*(B) = \lambda_1(\Omega^*(B),B) \geq \lambda_1(\Omega^*(B),0) \geq  \frac{\pi^2}{4r_{in}(\Omega^*(B))^2} 
\end{align}
and therefore
\begin{align} \label{eq:inr_lower_bnd_lambda1star}
r_{in}(\Omega^*(B))^2 \geq \frac{\pi^2}{4\lambda_1^*(B)}. 
\end{align}

Next, we relate $r_{in}(\Omega^*(B))$ and $\operatorname{diam}(\Omega^*(B))$ using the following lemma.

\begin{lemma} \label{lem:convex_inr_diam_bnd}
Let $\Omega\subset \mathbb{R}^3$ be an open, convex set. Then 
\begin{align}
|\Omega| \geq  \frac{\pi}{3} r_{in}(\Omega)^2 \cdot \operatorname{diam}(\Omega). \label{convex_inr_diam_bnd}
\end{align}
\end{lemma}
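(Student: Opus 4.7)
My plan is to slice $\Omega$ perpendicular to a diameter segment and invoke the Brunn–Minkowski inequality on the resulting cross-sectional areas. Since $\overline{\Omega}$ is a compact convex set with the same volume and diameter as $\Omega$, I may assume the diameter $D := \operatorname{diam}(\Omega)$ is attained by two points $p,q \in \overline{\Omega}$ with $|p - q| = D$; I choose coordinates so that $p = 0$ and $q = (0,0,D)$. Write $r := r_{in}(\Omega)$, fix an inscribed ball $B_r(c) \subset \Omega$ and set $t_c := c_3$. Let $A(t)$ denote the two-dimensional measure of the slice $\Omega \cap \{x_3 = t\}$, so that $|\Omega| = \int_{\mathbb{R}} A(t)\,dt$.

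The first step is to verify that the $x_3$-projection of $\Omega$ is contained in $[0,D]$. For any $x \in \Omega$, the two constraints $|x-p|\leq D$ and $|x-q|\leq D$ imposed by the diameter combine to force $x_3 \in [0,D]$. Hence $A$ is supported in $[0,D]$ and $|\Omega| = \int_0^D A(t)\,dt$.

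The next step is to invoke the Brunn–Minkowski inequality for parallel slices of a three-dimensional convex body, which implies that $f(t) := \sqrt{A(t)/\pi}$ is concave on $[0,D]$. Since the plane $\{x_3 = t_c\}$ cuts the inscribed ball $B_r(c)$ in a planar disk of radius $r$ (an equatorial disk), and this disk lies inside $\Omega$, one gets $f(t_c) \geq r$. Combined with $f(0), f(D) \geq 0$ and $t_c \in (0,D)$, concavity yields the pointwise bound
\begin{align}
f(t) \,\geq\, r \cdot \min\!\left\{\frac{t}{t_c},\, \frac{D - t}{D - t_c}\right\}, \qquad t \in [0,D].
\end{align}

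Squaring and integrating reduces the proof to an elementary one-variable computation,
\begin{align}
|\Omega| \;=\; \pi \int_0^D f(t)^2 \, dt \;\geq\; \pi r^2 \left(\frac{t_c}{3} + \frac{D - t_c}{3}\right) \;=\; \frac{\pi}{3} r^2 D,
\end{align}
in which the $t_c$-dependence cancels and the desired inequality falls out. The only nontrivial input is Brunn–Minkowski (which is what allows replacing $\Omega$ by its ``bicone'' profile); the main subtlety to verify carefully is that the cross-sectional support is exactly $[0,D]$, which is precisely the place where the diameter enters the argument.
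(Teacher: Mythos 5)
Your proof is correct, but it takes a genuinely different route from the paper. The paper's argument is entirely elementary: it takes the equatorial disk of the inscribed ball in the plane through its center orthogonal to a (near-)diameter segment $[x,y]$, forms the two oblique cones over that disk with apexes $x$ and $y$, observes that convexity forces both cones to lie in $\Omega$, and adds their volumes $\tfrac{\pi}{3}r^2\operatorname{dist}(x,P)$ and $\tfrac{\pi}{3}r^2\operatorname{dist}(y,P)$ (with a separate case when the plane $P$ fails to separate $x$ from $y$). Your argument instead slices along the diameter direction and uses the Brunn--Minkowski inequality to get concavity of $f(t)=\sqrt{A(t)/\pi}$, then integrates the piecewise-linear lower profile pinned at $f(t_c)\geq r$. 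The two are closely related --- your piecewise-linear profile is exactly the profile of the paper's bicone --- but they buy different things. The paper's construction avoids Brunn--Minkowski entirely and needs no discussion of where the cross-sectional support lies; the price is the explicit case distinction on whether $P$ separates the two diameter points. Your version absorbs that case distinction automatically, because the two-sided constraint $|x-p|\leq D$, $|x-q|\leq D$ forces the projection of $\Omega$ (and hence $t_c$) into $[0,D]$; it also generalizes mechanically to $\mathbb{R}^d$ with constant $\omega_{d-1}/d$. Two small points worth making explicit in a final write-up: the support of $A$ is not merely contained in $[0,D]$ but equals $[0,D]$ (since $p_3=0$ and $q_3=D$ with $p,q\in\overline{\Omega}$), which is what licenses applying the concavity chords from the endpoints $0$ and $D$; and the existence of an inscribed ball of radius exactly $r_{in}(\Omega)$ inside the open set $\Omega$ deserves a word (or an $\varepsilon$-approximation as in the paper), though for open convex sets this is standard.
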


The statement of the lemma follows for example from a result of Delyon, Henrot and Privat \cite{Delyon2022} which characterizes convex shapes that minimize the $d$-dimensional volume under given inradius and diameter. However, Lemma \ref{lem:convex_inr_diam_bnd} can be proven by elementary means, so a proof shall be given here for the convenience of the reader.

\begin{proof}
Let $\varepsilon>0$ be sufficiently small. By definition of the diameter, there exist two points $x,y\in\Omega$ with $|x-y|>\operatorname{diam}(\Omega) - \varepsilon$ and by definition of the inradius, there is a $z\in \Omega$ such that $\mathbb{B}_{r_{in}(\Omega)- \varepsilon}(z)= \{x \in \mathbb{R}^3 \, : \, |x-z|< r_{in}(\Omega)- \varepsilon \} \subset \Omega$.

Consider the plane $P$ that contains $z$ and is orthogonal to the line that goes through $x$ and $y$. The convex hull of the flat disk $D_\varepsilon = P \cap \mathbb{B}_{r_{in}(\Omega)- \varepsilon}(z)$ together with the point $x$ forms an oblique cone $C_1$ with base $D_\varepsilon$ and apex $x$. In the same way, another oblique cone $C_2$ is yield with base $D_\varepsilon$ and apex $y$. Since $\Omega$ is convex, one has $C_1 \cup C_2 \subset \Omega$. 

If $P$ separates $x$ and $y$, then $|x-y|=\operatorname{dist}(x,P)+\operatorname{dist}(y,P)$ and thus 
\begin{align}
|\Omega| \geq |C_1 \cup C_2| = |C_1| + |C_2| &= \frac{\pi}{3} ( r_{in}(\Omega)- \varepsilon )^2 \cdot \operatorname{dist}(x,P) + \frac{\pi}{3} ( r_{in}(\Omega)- \varepsilon )^2 \cdot \operatorname{dist}(y,P) \\
& = \frac{\pi}{3} ( r_{in}(\Omega)- \varepsilon )^2 \cdot |x-y|\\
& > \frac{\pi}{3} ( r_{in}(\Omega)- \varepsilon )^2 \cdot (\operatorname{diam}(\Omega) - \varepsilon).
\end{align}
Otherwise $\operatorname{dist}(x,P) \geq |x-y|$ or $\operatorname{dist}(y,P) \geq |x-y|$ and 
\begin{align}
|\Omega| \geq \max\{ |C_1| , |C_2| \} & \geq \frac{\pi}{3} ( r_{in}(\Omega)- \varepsilon )^2 \cdot |x-y|\\
& > \frac{\pi}{3} ( r_{in}(\Omega)- \varepsilon )^2 \cdot (\operatorname{diam}(\Omega) - \varepsilon).
\end{align}

Since $\varepsilon$ was arbitrary, the desired inequality follows.
\end{proof}

We remark that the constant $\pi/3$ is sharp and cannot be improved. This can be seen by taking $\Omega$ as the interior of the convex hull of two very distant points $x$ and $y$ and a small ball centered at the midpoint between the two points $x$ and $y$. The bound \eqref{convex_inr_diam_bnd} is saturated when the radius of the ball is kept fixed and the distance of $x$ and $y$ sent to infinity.

As $\Omega^*(B)$ is subject to the volume constraint $ |\Omega^*(B)|=1$, Lemma \ref{lem:convex_inr_diam_bnd} together with \eqref{eq:inr_lower_bnd_lambda1star} implies that
\begin{align}
\operatorname{diam}(\Omega^*(B)) \leq \frac{12}{\pi^3} \lambda_1^*(B).
\end{align}
Recall that $B \leq \lambda_1^*(B)\leq \lambda_{1,cyl}^*(B)$, thus $\lambda_1^*(B) = B(1+o(1))$ as $B\to\infty$. Hence, we have shown that the diameter of the minimizers can grow at most linearly.

Similarly, Lemma \ref{lem:convex_inr_diam_bnd} implies together with \eqref{eq:diam_asympt_lower_bnd} that
\begin{align}
r_{in}(\Omega^*(B))^2 \leq 18 \frac{\log(B)}{B} (1+o(1))
\end{align} 
as $B\to\infty$. This means that $r_{in}(\Omega^*(B))^2$ must vanish at least with rate $\log(B)/B$. Meanwhile, inequality \eqref{eq:inr_lower_bnd_lambda1star} shows that $r_{in}(\Omega^*(B))^2$ cannot vanish with a rate faster than $1/B$.

Thus, we have proven under convexity assumption on the minimizers $\Omega^*(B)$ the following two-sided bounds.

\begin{corollary} \label{cor:convex_degen}
Suppose there exists $B_0>1$ such that $\Omega^*(B)$ is convex for all $B\geq B_0$, then 
\begin{align}
\frac{B}{\log(B)} \, &\lesssim \, \operatorname{diam}(\Omega^*(B)) \,\lesssim \quad B, \\
\frac{1}{B} \quad &\lesssim \; r_{in}(\Omega^*(B))^2 \;\, \lesssim \frac{\log(B)}{B}
\end{align}
for all $B\geq B_0$. Here, we write $f(x) \lesssim g(x)$, if there exists a constant $c>0$ such that $f(x) \leq c \cdot g(x)$ for all $x$.
\end{corollary}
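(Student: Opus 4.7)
The plan is to simply collect the four inequalities that have already been established in the paragraphs preceding the corollary, and to spell out how the convexity assumption enters in two places: once via the Hersch–Payne–Stakgold bound on the inradius, and once via Lemma \ref{lem:convex_inr_diam_bnd} relating volume, inradius, and diameter.

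First I would handle the lower bound on the diameter. This requires no convexity: since $\operatorname{diam}(\Omega) \geq h(\Omega)$ for every bounded domain, the estimate on $h(\Omega^*(B))$ from Corollary \ref{cor:h_Romegastar} (which was essentially \eqref{eq:diam_asympt_lower_bnd}) gives $\operatorname{diam}(\Omega^*(B)) \gtrsim B/\log(B)$ immediately. Next I would turn to the lower bound on $r_{in}(\Omega^*(B))^2$. Here the diamagnetic inequality $\lambda_1(\Omega,B) \geq \lambda_1(\Omega,0)$ combined with Hersch–Payne–Stakgold under convexity yields $\lambda_1^*(B) \geq \pi^2/(4r_{in}(\Omega^*(B))^2)$, i.e.\ \eqref{eq:inr_lower_bnd_lambda1star}, and the bound $\lambda_1^*(B) \leq \lambda_{1,cyl}^*(B) = B(1+o(1))$ coming from Theorem \ref{thm:lambda1cyl_asympt} gives $r_{in}(\Omega^*(B))^2 \gtrsim 1/B$.

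The two remaining inequalities use Lemma \ref{lem:convex_inr_diam_bnd} together with the volume normalization $|\Omega^*(B)|=1$, which reads $1 \geq \frac{\pi}{3} r_{in}(\Omega^*(B))^2 \cdot \operatorname{diam}(\Omega^*(B))$. Substituting the previously established $r_{in}(\Omega^*(B))^2 \gtrsim 1/B$ into this inequality directly yields $\operatorname{diam}(\Omega^*(B)) \lesssim B$, which is the upper bound on the diameter. Conversely, substituting the previously established $\operatorname{diam}(\Omega^*(B)) \gtrsim B/\log(B)$ yields $r_{in}(\Omega^*(B))^2 \lesssim \log(B)/B$, which is the upper bound on the squared inradius.

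There is no serious obstacle; the statement is a bookkeeping corollary that packages four estimates already derived in the discussion above it into the $\lesssim$-notation. The only subtlety worth pointing out in the write-up is that convexity is used twice — once to invoke Hersch–Payne–Stakgold (for the lower bound on $r_{in}^2$, which then feeds into the upper bound on $\operatorname{diam}$) and once to invoke Lemma \ref{lem:convex_inr_diam_bnd} (which couples inradius, diameter, and volume). Everything else is a direct consequence of Corollary \ref{cor:h_Romegastar} and Theorem \ref{thm:lambda1cyl_asympt}, both of which are available without convexity.
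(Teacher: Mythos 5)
Your proposal is correct and follows essentially the same route as the paper, which itself presents this corollary as a summary of the four estimates derived in the preceding discussion: the diameter lower bound from Corollary \ref{cor:h_Romegastar} via $\operatorname{diam}(\Omega)\geq h(\Omega)$, the inradius lower bound from Hersch--Payne--Stakgold combined with the diamagnetic inequality and $\lambda_1^*(B)=B(1+o(1))$, and the two upper bounds obtained by feeding these into Lemma \ref{lem:convex_inr_diam_bnd} together with $|\Omega^*(B)|=1$. Your observation that convexity enters exactly twice (once for Hersch--Payne--Stakgold, once for Lemma \ref{lem:convex_inr_diam_bnd}) matches the structure of the paper's argument.
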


To conclude this section, we want to comment briefly on the Neumann case. While for Dirichlet boundary conditions one obtains ``spaghettification'' of optimizers, one observes the opposite phenomenon in the case of the reversed shape optimization problem for Neumann boundary conditions. 

Let $\mu_n(\Omega, B)$ denote the eigenvalues of the Neumann realization of the magnetic Laplacian with constant magnetic field on $\Omega \subset \mathbb{R}^d$, $d=2$ or $3$, and consider the corresponding shape optimization problem
\begin{align}
\max_{\substack{\Omega\subset \mathbb{R}^3 \text{ open}, \\ |\Omega|=c}} \mu_1(\Omega, B). \label{eq:min_problem_neumann}
\end{align}
For $B\to 0$, any possible minimizer of \eqref{eq:min_problem_neumann} must widen in the plane orthogonal to the direction of the magnetic field (we are tempted to call this behavior ``pancaking''). 

The following argument we provide is based on the observation by Fournais and Helffer \cite[Section 5.3]{Fournais2019} that the ball cannot be the minimizer of \eqref{eq:min_problem_neumann} if $B$ is fixed and $c$ is small enough, or equivalently, if $c$ is fixed and the field strength $B$ is small enough. The argument is a simple comparison with very flat cylinders.

By taking the constant trial function $u(x)=|\Omega|^{-1/2}$, one has for the first Neumann eigenvalue of any domain $\Omega \subset \mathbb{R}^3$ the elementary upper bound
\begin{align}
\mu_1(\Omega,B) \leq \int_\Omega |A|^2 |\Omega|^{-1} \dd x = \frac{B^2}{4|\Omega|} \int_\Omega x_1^2+x_2^2 \dd x.
\end{align}
Since $\mu_1(\Omega,B)$ is translation invariant, we can replace $\Omega$ by any translation of it, giving the improved bound
\begin{align}
\mu_1(\Omega,B) \leq \frac{B^2}{4|\Omega|} \inf_{a\in\mathbb{R}^3} \int_{\Omega+a} x_1^2+x_2^2 \, \dd x.
\end{align}
A simple estimate for the infimum is 
\begin{align}
\inf_{a\in\mathbb{R}^3} \int_{\Omega+a} x_1^2+x_2^2 \, \dd x \leq  |\Omega| R(\Omega)^2,
\end{align}
hence giving
\begin{align}
\mu_1(\Omega,B) \leq \frac{B^2}{4} R(\Omega)^2.
\end{align}
On the other hand, consider a cylinder $C_{R,h}$. By separation of variables,
\begin{align}
\mu_1(C_{R,h},B) = \mu_1(D_R,B) = h \mu_1 \bigg(D, \frac{B}{h} \bigg) = B \cdot \frac{\mu_1(D, Bh^{-1})}{Bh^{-1}}
\end{align}
and it is known that
\begin{align}
\frac{\mu_1(D, Bh^{-1})}{Bh^{-1}} \to \Theta_0 \approx 0.59
\end{align}
as $h\to 0$, where $\Theta_0$ is the so-called \textit{De Gennes constant}, see \cite[Theorem 2.5]{Fournais2006a}. One concludes that if 
\begin{align}
\frac{B^2}{4} R(\Omega)^2 <  \Theta_0 B,
\end{align}
then $\Omega$ cannot be a maximizer of \eqref{eq:min_problem_neumann}. Thus, any maximizer $\Omega$ for \eqref{eq:min_problem_neumann} must satisfy
\begin{align}
R(\Omega)^2 \geq \frac{ 4 \Theta_0 }{B}.
\end{align}
This shows that as $B\to 0$, maximizers for \eqref{eq:min_problem_neumann} must widen in the plane orthogonal to the direction of the magnetic field.

\section{Numerical Methods} \label{sec:nummeth}

In this section, we discuss the methods we applied to solve the shape optimization problem \eqref{eq:lambda_1_min_problem} numerically. We begin by describing our choice of domain parametrization. We then give a brief overview of the Method of Particular Solutions which is the method we chose to compute eigenvalues. Then, we explain modifications for axisymmetric domains that improve efficiency of the eigenvalue solver. Finally, we summarize the gradient descent scheme used for shape optimization and the derivative formulas that are needed as part of the gradient computations. 

\subsection{Domain parametrization} \label{subsec:domainparam}

As proposed in \cite{Antunes2017}, we focus for the numerical optimization routine on star-shaped domains $\Omega \subset \mathbb{R}^3$ that can be described by a variable radius function $r(\theta,\varphi)$ which depends on two angles $\theta$ and $\varphi$ in spherical coordinates. More precisely, we assume domains $\Omega$ are of the form
\begin{align}
\Omega = \{ (r \sin(\theta) \cos(\varphi),r \sin(\theta) \sin(\varphi),r \cos(\theta))^T \in \mathbb{R}^3 \, : \, 0\leq r < r(\theta,\varphi), \theta \in [0,\pi], \varphi \in [0,2\pi)  \}.
\end{align}
where $r(\theta,\varphi)$ is a positive and smooth function of the angles. We define
\begin{align} \label{eq:star_shaped_set3d}
\gamma(\theta,\varphi):= r(\theta,\varphi) \begin{pmatrix} \sin \theta \cos \varphi \\  \sin \theta \sin \varphi \\  \cos \theta \end{pmatrix} , \qquad \theta \in [0,\pi], \, \varphi \in [0,2\pi).
\end{align}
The image of $\gamma$ is the boundary surface of $\Omega$, i.e.
$$\partial \Omega = \{\gamma(\theta,\varphi) \, : \,  \theta \in [0,\pi], \varphi \in [0,2\pi)  \}.$$
We further assume that the radius function $r(\theta,\varphi)$ is given by a truncated series of real spherical harmonics
\begin{align}
r(\theta,\varphi) = \sum\limits_{l=0}^{l_{max}} \sum\limits_{m=-l}^{l} c_{l,m} \tilde{Y}_{l}^{m}(\theta, \varphi)
\end{align}
where
\begin{align}
\tilde{Y}_{l}^{m}(\theta, \phi) = \begin{cases}
\sqrt{2} \, \mathrm{Re} \,Y_{l}^{m}(\theta, \varphi) & \text{ if } m>0\\
\quad \;\; \mathrm{Re} \,Y_{l}^{m}(\theta, \varphi) & \text{ if } m=0\\
\sqrt{2}  \, \mathrm{Im}\, Y_{l}^{|m|}(\theta, \varphi) & \text{ if } m<0\\
\end{cases}
\end{align}
and
\begin{align}
Y_{l}^{m}(\theta, \varphi) = \sqrt{\frac{2l+1}{4\pi} \frac{(l-m)!}{(l+m)!}}e^{im \varphi} P_{l}^{m}(\cos \theta), \qquad \qquad l\geq m\geq 0,
\end{align}
are the usual spherical harmonics, with $P_{l}^{m}$ denoting associated Legendre functions. The total number of coefficients $c_{l,m}$ that determine the boundary and thus the shape of the domain is
\begin{align}
n_c = \sum\limits_{l=0}^{l_{max}} (2l+1) = (l_{max} + 1)^2.
\end{align}
Collecting the coefficients $c_{l,m}$ in a vector
\begin{align}
c = (c_{0,0}, c_{1,-1}, c_{1,0}, c_{1,1}, c_{2,-2},..., c_{l_{max},l_{max}})^T \in \mathbb{R}^{n_c}
\end{align}
and likewise collecting the real spherical harmonics in a vector 
\begin{align}
\tilde{Y}(\theta, \varphi) = (\tilde{Y}_{0}^{0}(\theta, \varphi), \tilde{Y}_{1}^{-1}(\theta, \varphi), \tilde{Y}_{1}^{0}(\theta, \varphi), \tilde{Y}_{1}^{1}(\theta, \varphi), \tilde{Y}_{2}^{-2}(\theta, \varphi), ..., \tilde{Y}_{l_{max}}^{l_{max}}(\theta, \varphi))^T,
\end{align}
we can write
\begin{align}
r(\theta,\varphi) = c^T \cdot \tilde{Y}(\theta, \varphi)
\end{align}
where $\cdot$ is the standard Euclidean dot product. For numerical computations, we typically use $l_{max}=10$ which corresponds to $n_c = 121$ coefficients.

For domains $\Omega \subset \mathbb{R}^3$ that are axisymmetric around the $z$-axis, we use a simplified representation for the radius function $r(\theta, \varphi)$, since all coefficients $c_{l,m}$ with $m\neq 0$ must be zero. In this case, we write 
\begin{align}
r(\theta,\varphi) = \sum\limits_{l=0}^{l_{max}^{rot}}  c_{l,0} \tilde{Y}_{l}^{0}(\theta, \varphi) = (c^{rot})^T \cdot \tilde{Y}^{rot}(\theta, \varphi)
\end{align}
where 
\begin{align}
c^{rot} &= (c_{0,0}, c_{1,0}, c_{2,0},..., c_{l_{max}^{rot},0})^T \in \mathbb{R}^{n_c^{rot}}, \qquad n_c^{rot}= l_{max}^{rot}+1, \\
\tilde{Y}^{rot}(\theta, \varphi) &= (\tilde{Y}_{0}^{0}(\theta, \varphi),  \tilde{Y}_{1}^{0}(\theta, \varphi),  \tilde{Y}_{2}^{0}(\theta, \varphi), ..., \tilde{Y}_{l_{max}^{rot}}^{0}(\theta, \varphi))^T.
\end{align}
Typically, we use $l_{max}^{rot} = 60$ for parametrization of axisymmetric domains.

\subsection{Method of Particular Solutions} \label{subsec:mps}

The Method of Particular Solutions \cite{Fox1967,Betcke2005} is a meshless method for numerical solution of partial differential equations that has been getting renewed attention in recent years. We briefly give the main idea of the method in the following.

We want to solve an eigenvalue problem of the form
\begin{align}
H u &= \lambda u \qquad \text{in } \Omega, \label{eq:mps_eigequ} \\ 
u &= 0 \qquad \text{on } \partial \Omega \label{eq:mps_bndcond}
\end{align}
where $H$ is some partial differential operator. Suppose we have a set of functions $\{\psi_j(\, \cdot \, ;\lambda)\}_{j=1}^{m}$ which satisfy the eigenvalue equation \eqref{eq:mps_eigequ}, i.e.\ we assume that 
\begin{align}
H \psi_j(\, \cdot \, ;\lambda) = \lambda \psi_j(\, \cdot \, ;\lambda) \qquad \text{in } \Omega
\end{align}
for any $j$. Any function spanned by these particular solutions, i.e.\ any function $u$ given by
\begin{align}
u_\alpha(x) = \sum_{j=1}^m \alpha_j \psi_j(x ;\lambda) \label{eq:mps_u_from_span}
\end{align}
where $\alpha=(\alpha_1, ..., \alpha_m)$ is an arbitrary vector of complex coefficients, also satisfies $Hu_\alpha = \lambda u_\alpha$ in $\Omega$. However, in general, such a function will not satisfy the boundary condition \eqref{eq:mps_bndcond}. We now try to solve \eqref{eq:mps_bndcond} within the span of our basis functions by imposing the boundary condition \eqref{eq:mps_bndcond} in a finite set of collocation points $\{ x_i \} \subset \partial \Omega$. This leads to the linear system of equations
\begin{align}
A(\lambda) \cdot \alpha = 0 \label{eq:A_lambda}
\end{align}
where $A(\lambda)$ is the matrix given by $(A(\lambda))_{ij} = \psi_j(x_i ;\lambda)$ and $\alpha = (\alpha_1, ..., \alpha_m)^T$. 

We then employ a method called Subspace Angle Technique to search for non-trivial pairs $(\hat{\lambda}, \hat{\alpha})$ where \eqref{eq:A_lambda} is (approximately) fulfilled. The function $u_{\hat{\alpha}}$ is then an approximate eigenfunction to the approximate eigenvalue $\hat{\lambda}$. We refer to Betcke and Trefethen \cite{Betcke2005} for details on the Subspace Angle Technique. \\

\textbf{Particular solutions of the magnetic Laplacian in $\mathbb{R}^3$} \\

In our specific problem of the magnetic Laplacian, we have $H=(-i\nabla+ A)^2$ with $A(x_1,x_2,x_3)=B/2(-x_2,x_1,0)^T$. We assume in the following that $B>0$. We choose particular solutions in cylindrical coordinates as
\begin{align}
\psi_{l,p}(r,\varphi,z; \lambda) &= e^{-\frac{\phi	}{2}} \phi^{\frac{|l|}{2}}  M\left(a,b, \phi \right) e^{il \varphi} e^{ipz}
\end{align}
where
\begin{align}
\phi = \frac{Br^2}{2}, \qquad a = \frac{1}{2}\Big(l+|l|+1 - \frac{\lambda-p^2}{B}\Big), \qquad b=|l|+1.
\end{align}
and $M(a,b,z)$ denotes Kummer's confluent hypergeometric function. It is somewhat tedious but straightforward to check that indeed $H\psi_{l,p} = \lambda \psi_{l,p}$ on $\mathbb{R}^3$. Notice that the basis functions are indexed by two quantum numbers, $l$ and $p$, which can seen as the angular momentum around the $z$-axis and the momentum in $z$-direction. Equation \eqref{eq:mps_u_from_span} now becomes
\begin{align}
u_\alpha(r,\varphi,z) = \sum_{l,p\in L,P } \alpha_{l,p}\psi_{l,p}(r,\varphi,z; \lambda) \label{eq:basis_funct_magn}
\end{align}
where $L = \{-n_l,  ..., n_l\}$, $P = \delta p \cdot \{-n_p,  ..., n_p\}$. For our numerical computations, we chose 
\begin{align} \label{eq:MPS_parameters}
n_l = 10, \quad n_p = 8, \quad  \delta p = \frac{\sqrt{\lambda}}{n_p+1}.
\end{align}

\begin{remark}
If $H=-\Delta$ is the standard (non-magnetic) Laplacian, then one can choose the particular solutions
\begin{align*}
\psi_{l,p}(r,\varphi,z; \lambda) = J_{|l|}\big(\sqrt{\lambda-p^2} \,r \big) e^{il\varphi} e^{ipz}
\end{align*}
in cylindrical coordinates. Here, $J_{\nu}$ denotes the Bessel function of order $\nu$. \\
\end{remark}

\textbf{Collocation points}\\

We set collocation points according to $x_i = \gamma(\theta_i, \varphi_i) $ where $\gamma$ is the surface map from \eqref{eq:star_shaped_set3d} and the angles $(\theta_i, \varphi_i)$ are chosen from a finite set of angles $M_{\theta,\varphi}$. We set up $M_{\theta,\varphi}$ by the following procedure. 

Suppose we have a target number of collocation points $N_{target}$. In our computations, we typically set  $N_{target}=1000$. First, set
\begin{align}
n_\theta = \bigg\lceil \frac{1}{2}\sqrt{\pi N_{target}} \bigg\rceil. \label{eq:n_theta}
\end{align}
We then choose $n_\theta+1$ equidistant angles $\theta_j \in [0,\pi]$ according to
\begin{align}
\theta_j &= \frac{\pi j}{n_\theta} , \qquad j=0,..., n_\theta.
\end{align}
The angles $\theta_0=0$ and $\theta_{n_\theta}=\pi$ correspond to points on the surface of the domain $\Omega$ that intersect with the $z$-axis. For a sphere, these points would be its poles. The other angles $\theta_j$ would correspond to $n_\theta-1$ evenly spaced points on circles of latitude. In the next step, we set for every latitude $\theta_j$ 
\begin{align}
n_{\varphi,j} &= \begin{cases} 
1, & j=0 \text{ or } n_\theta, \\[1ex] 
\bigg\lfloor 2 n_\theta \sin(\theta_j) \bigg\rfloor, &  \text{else}.\\
\end{cases} 
\end{align}
We then divide each circle of latitude evenly by setting 
\begin{align}
\varphi_{j,k} &=\frac{2\pi k}{n_{\varphi,j}}  , \qquad j=0,..., n_\theta, \qquad  k=0,..., n_{\varphi,j}-1,
\end{align}
Finally, the set of latitudes and longitudes for the collocation points is 
\begin{align}
M_{\theta,\varphi}= \{ (\theta_j, \varphi_{j,k}) \, : \, j=0,..., n_\theta, \, k=0,..., n_{\varphi,j}-1 \}.
\end{align}
Notice that the method we described produces 
\begin{align}
\# M_{\theta,\varphi} = \sum_{j=0}^{n_\theta} n_{\varphi,j} \approx  \sum_{j=0}^{n_\theta} 2n_{\theta} \sin(\theta_j) \approx N_{target}
\end{align}
collocation points.

%
%
%

\subsection{Modifications for axisymmetric domains} \label{subsec:modaxis}

Numerical computations for certain domains $\Omega$ that are axisymmetric with respect to the $z$-axis (for example balls, ellipsoids and cylinders aligned with the $z$-axis) suggest that the first eigenvalue of the magnetic Dirichlet Laplacian on such domains is simple for any $B\geq 0$. Furthermore, it appears that the associated ground state eigenfunction is also axisymmetric with respect to the $z$-axis, i.e.\ independent of the angular coordinate $\varphi$.

Assuming that the ground state eigenfunction is always axisymmetric, we can incorporate this symmetry in our ansatz for the MPS eigenfunction. We replace the \eqref{eq:basis_funct_magn} by the simplified ansatz
\begin{align}
u_\alpha(r,\varphi,z) = \sum_{p\in P } \alpha_{p}\psi_{0,p}(r,\varphi,z; \lambda),
\end{align}
where we only use particular solutions with $l=0$, i.e.\ solutions that are axisymmetric with respect to the $z$-axis. This way, the MPS ansatz $u_\alpha(r,\varphi,z)$ becomes axisymmetric with respect to the $z$-axis for any choice of coefficients $\alpha$. 

As $u_\alpha(r,\varphi,z)$ is now independent of $\varphi$, we change our choice of collocation points accordingly. It is not necessary to choose spherical angles $(\theta_j, \varphi_{j,k})$ from the set $M_{\theta,\varphi}$ with different longitude $\varphi_{j,k}$. It rather suffices to choose collocation points from the line $\gamma(\theta, 0)$ on the surface of $\Omega$ to enforce the collocation condition on whole circles of latitude. We thus replace $M_{\theta,\varphi}$ from the previous section by 
\begin{align}
M_{\theta,\varphi}= \{ (\theta_j, 0) \, : \, j=0,..., n_\theta \}.
\end{align}

We point out that under the axisymmetry assumption on the ground state, the changes laid out above lead to a great reduction of the size of the MPS matrix $A(\lambda)$ compared to the general MPS ansatz, while preserving all collocation conditions (indeed, they are even extended to whole circles of latitude). Instead of approximately $N_{target}$ collocation points, we only have about $\sqrt{N_{target}}$ collocation points and instead of $(2n_l+1)\cdot (2n_p+1)$ particular solutions, we only employ $(2n_p+1)$ particular solutions. With the same numbers $n_\theta$, $n_p$ as in the general ansatz, we have therefore a much more efficient eigenvalue solver. At the same time, we can expect the ground state eigenpair approximations to be of similar accuracy.

The greater efficiency of above described eigenvalue solver for axisymmetric domains can be traded for greater accuracy. Thus we increase the number of particular solutions and collocation points. This increases the matrix size again. So for the axisymmetric case, instead of \eqref{eq:MPS_parameters} and \eqref{eq:n_theta}, we use the parameters
\begin{align}
n_\theta = 1000, \qquad n_l=10, \qquad n_p = 60, \qquad \delta p =  \frac{10\sqrt{\lambda}}{n_p+1},
\end{align}
which leads to the matrix $A(\lambda)$ having similar size as in the general non-symmetric MPS approach.

Finally, we want to spend a few words on the assumption of axisymmetric ground states of the magnetic Dirichlet Laplacian on axisymmetric domains.

Since the ground state of the non-magnetic Dirichlet Laplacian is simple and admits a positive, axisymmetric eigenfunction, the same follows for small $B$ by a perturbation argument. For cylinders aligned with the $z$-axis, it is a fact that the ground state is axisymmetric for any $B\geq 0$. This is an immediate consequence of the fact that variables separate in cylindrical coordinates and that the ground state on disks is always simple and rotationally symmetric (see \cite{Son2014} for a proof). For balls, the situation is much less clear, since variables do not fully separate in cylindrical coordinates. It remains an open problem whether the ground state on a ball is indeed axisymmetric for any $B\geq 0$.

We remark that simplicity as well as axisymmetry of the ground state can fail for large $B$ if there are no further restrictions on the geometry of $\Omega$ than axisymmetry. For example, take $\Omega$ as a cylindrical shell aligned with the $z$-axis, so a cartesian product of an annulus and an interval. By separation of variables, the ground state on $\Omega$ is given by the product of the ground state on the annulus and the ground state of the interval. Solving the eigenvalue problem of the annulus in polar coordinates, one quickly realizes that one can tune the dimensions of the annulus and the field strength $B>0$, so that the ground state exhibits a non-zero angular momentum. More precisely, the eigenfunction on the annulus has a factor $e^{il\varphi}$ with $l<0$. The ground state on $\Omega$ is thus not axisymmetric. Moreover, recalling that the ground state on $\Omega$ must be axisymmetric for small $B$, analytical perturbation theory implies that there is an intermediate field strength where the ground state on $\Omega$ is degenerate. In that case, one finds both an axisymmetric ground state and a non-axisymmetric ground state associated with the same ground state energy.

Pinpointing sufficient geometric conditions under which the ground state of an axisymmetric domain $\Omega$ is simple and axisymmetric is outside the scope of this article. We leave it as an open problem to determine a class of domains (a class that preferably includes cylinders and balls) which have axisymmetric ground states for any field strength.

\subsection{Optimization routine}

With the domain parametrization and the eigenvalue solver at hand, we are now able to describe the optimization routine. 

We begin by noting that \eqref{eq:lambda_1_min_problem} is a constrained optimization problem. However, by scaling, one finds that problem \eqref{eq:lambda_1_min_problem} is equivalent to
\begin{align}
\min_{\Omega \text{ open}} |\Omega|^\frac{2}{3} \lambda_1\left(\Omega, \frac{B}{|\Omega|^\frac{2}{3}} \right) \label{eq:min_problem_B_scale_with_c}
\end{align}
in the sense that any minimizer of \eqref{eq:lambda_1_min_problem} is a minimizer of \eqref{eq:min_problem_B_scale_with_c} and any minimizer of \eqref{eq:min_problem_B_scale_with_c} scaled to measure one is a minimizer of \eqref{eq:lambda_1_min_problem}. Moreover, the minima of both problems are equal. We call $|\Omega|^\frac{2}{3} \lambda_1 (\Omega, B/|\Omega|^\frac{2}{3} )$ rescaled ground state energy. Problem \eqref{eq:min_problem_B_scale_with_c} has the advantage that it is an unconstrained optimization problem. We continue with this form of the optimization problem.

The domain parametrization introduced earlier is essentially a map $c \mapsto \Omega_c$ that maps coefficient vectors $c$ to star-shaped domains $\Omega_c$. A coefficient vector $c$ is only admissible if the associated function $r(\theta,\varphi)$ is positive. We now want to optimize the rescaled ground state energy among all domains $\Omega$ that are represented by an admissible coefficient vector $c$. This means that we consider the finite dimensional optimization problem 
\begin{align}
\min_{ \substack{c\in \mathbb{R}^{n_c}, \\ \text{ admissible}} } J(c), \qquad J(c) = |\Omega_c|^\frac{2}{3} \lambda_n\left(\Omega_c, \frac{B}{|\Omega_c|^\frac{2}{3}} \right).
\end{align}
To solve this problem, we apply a modified gradient descent method.\\

\textbf{Modified gradient descent scheme} \\

We adopt the gradient descent scheme from \cite{Antunes2012, Baur2025a}. We start by choosing an initial coefficient vector $c_0$ representing an initial domain $\Omega_{c_0}$. We then compute iteratively the gradient $d=\nabla_c J(c_i)$ for the current domain $\Omega_{c_i}$ and minimize $J$ along the half-line spanned by $-d$. To prevent shrinking or blow-up of the domain associated with $c$, we normalize the coefficient vectors immediately, so that the corresponding domain has unit volume. Thus, instead of searching as usual for the minimum of $J(c_i - \beta d)$, $\beta \geq 0$, we search for the minimum of $J(c(\beta))$ where $c(\beta) = \frac{c_i - \beta d}{|\Omega_{c_i - \beta  d}|^{1/3}}$, $\beta \geq 0$. The optimization scheme stops when a certain number of iterations is exceeded or the objective function appears to have converged. The algorithm is summarized again below.

{\centering
\begin{minipage}{.7\linewidth}

\begin{algorithm}[H]
\caption{Gradient descent algorithm}
\begin{algorithmic}
\State set $i=0$
\While{$i\leq i_{\mathrm{max}}$ or $|J(c_i)-J(c_{i-1})| \geq \varepsilon$}
	\State compute $J(c_i)$
	\State compute $d = \nabla_c J(c_i)$
	\State set $c(\beta) = \dfrac{c_i - \beta d}{|\Omega_{c_i - \beta  d}|^{1/3}}$ 
	\State compute $\beta^*= \mathrm{argmin}_{\beta\in [0, \beta_{\mathrm{max}}]} \, J(c(\beta))$
	\State set $c_{i+1} = c(\beta^*)$
	\State increase $i \rightarrow i+1$
\EndWhile
\State \Return $J(c_i), c_i$
\end{algorithmic}
\end{algorithm}
\end{minipage}
\par\vspace{0.5cm}
}

It remains to compute the gradient $\nabla_c J(c)$. By chain rule, it is clear that the gradient $\nabla_c J(c)$ is a combination of the three basic derivatives
\begin{align} \label{eq:gradient_chain_rule_comp}
\nabla_c (|\Omega_c|), \qquad \nabla_c \lambda_1\left( \Omega_c, B\right), \qquad \frac{\partial}{\partial B}  \lambda_1\left( \Omega_c, B\right).
\end{align}
All three derivatives can be approximated using boundary/volume integrals and the numerical eigenpair $(\hat{\lambda},u_{\hat{\alpha}})$ from the eigenvalue solver. We use the formulas given in the next section.\\

\textbf{Shape derivatives and Hellmann-Feynman formula} \\

The gradients with respect to domain coefficients can be computed with shape derivative formulas. We recall the basic notion of deformation fields.

Let $\Phi: (-T,T) \to W^{1,\infty}(\mathbb{R}^d; \mathbb{R}^d)$ be a differentiable map such that $\Phi(0)$ = I and $\Phi'(0) = V$ where
$V$ is a vector field. In this context, $V$ is also called a ``deformation field''. For simplicity, we can assume that $\Phi(t) = I + tV$. Applying the map $\Phi$ then generates the family of domains
\begin{align}
\Omega_t = \Phi(t)(\Omega) = \{ \Phi(t)(x) : x \in \Omega \}, \qquad t \in (-T, T).
\end{align}
Note that $\Omega_0 = \Omega$ is the original domain and the parameter $t$ controls the strength of deformation of $\Omega$ by $V$. 

We then have the following shape derivative results.

\begin{theorem}[\cite{Henrot2006}] \label{thm:change_volume}
Let $\Omega \subset \mathbb{R}^d$ be a bounded, open domain. If $\Omega$ is Lipschitz, then $t \mapsto |\Omega_t|$ is differentiable at $t=0$ with
\begin{align}
(|\Omega_t|)'(0) = \int_{\partial \Omega}  V \cdot n \, d\sigma .
\end{align}
\end{theorem}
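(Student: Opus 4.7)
The plan is to reduce the moving-domain integral to the fixed reference domain $\Omega$ by a change of variables. Since $V \in W^{1,\infty}(\mathbb{R}^d;\mathbb{R}^d)$ has bounded gradient, for $|t|<T$ sufficiently small the map $\Phi(t)=I+tV$ is a bi-Lipschitz diffeomorphism from $\Omega$ onto $\Omega_t$, so
$$|\Omega_t| \;=\; \int_{\Omega_t} dx \;=\; \int_\Omega \bigl|\det(I + t\, DV(x))\bigr|\, dx.$$

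Next, I would expand the Jacobian pointwise. A direct algebraic computation gives
$$\det(I + t\, DV(x)) \;=\; 1 + t\,\operatorname{div}V(x) + t^2\, R(x,t),$$
where $R(x,t)$ is a polynomial in the entries of $DV(x)$ whose coefficients are bounded for $|t|$ small. Because $\|DV\|_{L^\infty}<\infty$, the remainder $R(\cdot,t)$ is bounded uniformly in $x\in\Omega$ and in $t$, and for $|t|$ small the determinant is positive, so the absolute value may be dropped. The difference quotient $(|\Omega_t|-|\Omega|)/t$ is therefore dominated on the bounded set $\Omega$ by a fixed $L^\infty$ function, and dominated convergence permits interchanging the $t$-derivative with the integral, yielding
$$\frac{d}{dt}\bigg|_{t=0} |\Omega_t| \;=\; \int_\Omega \operatorname{div}V(x)\, dx.$$

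The last step is the divergence theorem on a Lipschitz domain. Since $\partial\Omega$ is Lipschitz, the outward unit normal $n$ exists $\mathcal{H}^{d-1}$-almost everywhere, and for $V\in W^{1,\infty}(\mathbb{R}^d;\mathbb{R}^d)$ the Gauss--Green identity
$$\int_\Omega \operatorname{div}V\, dx \;=\; \int_{\partial\Omega} V\cdot n\, d\sigma$$
holds, establishing the claimed formula. The main technical point is the validity of the divergence theorem at this regularity level; the standard way around this obstacle is to approximate the Lipschitz domain by smooth domains from inside and the vector field by mollifications in $C^\infty$, apply the classical Gauss theorem, and pass to the limit using the trace theorem for Lipschitz domains. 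The other small subtlety is justifying the uniform bound on $R(x,t)$, but this is purely algebraic and follows from expanding the determinant as a sum of products of at most $d$ entries of $tDV(x)$.
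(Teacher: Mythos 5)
Your proof is correct and follows the standard argument for this result: the paper itself states the theorem with a citation to \cite{Henrot2006} and gives no proof, and the proof in that reference is precisely your change-of-variables, Jacobian-expansion, and divergence-theorem argument. The only points requiring care --- the a.e.\ differentiability of $V$ (Rademacher) entering the area formula, positivity of the Jacobian for small $t$, and the validity of the Gauss--Green formula on a Lipschitz domain for a Lipschitz vector field --- are exactly the ones you identify and handle adequately.
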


\begin{theorem}[Hadamard formula for magnetic Dirichlet Laplacian]\label{thm:change_eigval}
Let $\Omega \subset \mathbb{R}^d$, $d=2,3$, with smooth boundary. If $\lambda_n(\Omega,B)$ is a simple eigenvalue with corresponding normalized eigenfunction $u_n \in L^2(\Omega)$, then $\lambda_n(t):=\lambda_n(\Omega_t,B)$ is differentiable at $t=0$ with
$$\lambda_n'(0) = -\int_{\partial \Omega} \left| \frac{\partial u_n}{\partial n}\right|^2 V \cdot n \, d\sigma .$$
Here, $\partial / \partial n$ denotes the normal derivative with respect to the outer normal of the domain. 
\end{theorem}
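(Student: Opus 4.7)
My first step is to reduce the problem to a fixed function space so that standard perturbation theory applies. I would pull back the operators $H_B^{\Omega_t}$ to the reference domain $\Omega$ via the diffeomorphism $\Phi(t) = I + tV$: setting $v(t,y) = u(t,\Phi(t)y)$ for $y\in\Omega$ transforms the quadratic form $h_A^{\Omega_t}$ into a form $\tilde h(t)[v]$ on the fixed domain $H_0^1(\Omega)$, whose coefficient matrix depends real-analytically on $t$ through the Jacobian of $\Phi(t)$ and the pulled-back vector potential. This gives a type (B) self-adjoint holomorphic family in the sense of Kato (see the argument used in Proposition \ref{prop:simple_est_domain_mono} (3)). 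Since $\lambda_n(\Omega,B)$ is assumed simple, Kato's theorem yields real-analytic curves $t\mapsto \lambda_n(t)$ and $t\mapsto v(t)\in H_0^1(\Omega)$ with $v(0)=u_n$ and $\|v(t)\|_{L^2(\Omega_t)}=1$. Differentiability of $\lambda_n(t)$ at $t=0$ is therefore automatic; the real content is identifying $\lambda_n'(0)$.

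Next, I would differentiate the Rayleigh-quotient identity
\begin{align*}
\lambda_n(t) = \int_{\Omega_t} |(-i\nabla+A) u(t)|^2\,dx - \lambda_n(0)\Big(\int_{\Omega_t}|u(t)|^2\,dx - 1\Big)
\end{align*}
at $t=0$ using the Reynolds transport theorem. The boundary contribution from $|u(t)|^2$ vanishes because $u_n=0$ on $\partial\Omega$, while the boundary contribution from $|(-i\nabla+A)u(t)|^2$ collapses neatly: on $\partial\Omega$ the condition $u_n=0$ kills tangential derivatives as well as the potential term, so $(-i\nabla+A)u_n = -i(\partial_n u_n)\,n$ and $|(-i\nabla+A)u_n|^2 = |\partial_n u_n|^2$ pointwise on $\partial\Omega$.

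It remains to process the interior (material-derivative) terms. Let $u'(0)$ denote the Eulerian shape derivative; differentiating $u(t,\Phi(t)y)=0$ for $y\in\partial\Omega$ yields the classical boundary identity $u'(0) = -(\partial_n u_n)(V\cdot n)$ on $\partial\Omega$. The interior contributions, written as $2\,\mathrm{Re}\int_\Omega \overline{(-i\nabla+A)u_n}\cdot(-i\nabla+A)u' - 2\lambda_n\,\mathrm{Re}\int_\Omega \overline{u_n} u'$, are handled by a Green-type identity for $H_B$ together with $H_B u_n = \lambda_n u_n$. Because $u_n$ vanishes on the boundary, the Green boundary term involving $u_n$ itself drops out, leaving only $2\,\mathrm{Re}\int_{\partial\Omega}\overline{u'}\partial_n u_n\,d\sigma$. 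Substituting $u'=-(\partial_n u_n)(V\cdot n)$ gives $-2\int_{\partial\Omega}|\partial_n u_n|^2(V\cdot n)\,d\sigma$, and combining with the already-collected Reynolds boundary term $+\int_{\partial\Omega}|\partial_n u_n|^2(V\cdot n)\,d\sigma$ yields exactly the claimed formula.

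The main technical obstacle I anticipate is tracking the complex-valued boundary terms in the magnetic Green identity and verifying that the imaginary parts cancel (they must, since $\lambda_n'(0)$ is real). Concretely, the formula $\int_\Omega \overline{v}\,H_B u = \int_\Omega \overline{(-i\nabla+A)v}\cdot(-i\nabla+A)u - \int_{\partial\Omega}\overline{v}(\partial_n u + iA_n u)\,d\sigma$ produces spurious $iA_n$-contributions when applied naively; these vanish only because $u_n|_{\partial\Omega}=0$ is used in the correct slot. This is also why the classical Dirichlet Hadamard formula carries over unchanged to the magnetic setting: the gauge-dependent potential enters only through bulk terms that are killed by the eigenvalue equation, and the boundary reduction $|(-i\nabla+A)u_n|=|\partial_n u_n|$ is entirely gauge invariant. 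Once this cancellation is verified, the rest is bookkeeping.
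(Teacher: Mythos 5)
Your proposal is correct and follows essentially the same route as the paper, which simply notes that the proof is a copy of the non-magnetic Hadamard formula argument from Henrot--Pierre; your plan is that classical derivation (pull-back to a fixed domain, Kato analyticity, Reynolds transport plus the boundary identity $u'=-(\partial_n u_n)(V\cdot n)$ and a magnetic Green identity), with the correct observation that all $A$-dependent terms drop out because $u_n$ vanishes on $\partial\Omega$ and hence $|(-i\nabla+A)u_n|=|\partial_n u_n|$ there. If anything, your write-up is more explicit than the paper's one-line justification, and your diagnosis of where the potential terms cancel is arguably sharper than the paper's appeal to $\operatorname{div}A=0$.
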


The proof of the magnetic Hadamard formula is essentially a copy of the proof in the non-magnetic case, see \cite{Henrot2018}. Note that the formula is unchanged compared to the non-magnetic case due to the fact that $\operatorname{div} A =0$ for the standard linear vector potential $A$.

Derivatives of eigenvalues with respect to the field strength $B$ can be computed with a Hellmann-Feynman formula.

\begin{theorem}[Hellmann-Feynman] \label{thm:Hellmann-Feynman}
Let $\Omega \subset \mathbb{R}^d$ be a bounded, open domain with $\partial \Omega$ of class $C^2$ and $\hat{A}: \mathbb{R}^d \rightarrow \mathbb{R}^d$ a sufficiently regular vector field. Denote the eigenvalues of $H(B)=(-i\nabla +B\hat{A})^2$ on $\Omega \subset \mathbb{R}^d$ with Dirichlet boundary condition by $\lambda_n(\Omega,B)$. If $\lambda_n(\Omega,B)$ is a simple eigenvalue with associated $L^2$-normalized eigenfunction $u_n$, then there exists an open neighborhood around $B$ where $\lambda_n(\Omega,B)$ is (infinitely) differentiable with respect to $B$ and where
\begin{align}
\frac{\partial}{\partial B} \lambda_n (\Omega, B) = \left\langle u_n , \frac{\partial H(B)}{\partial B} u_n \right\rangle_{L^2(\Omega)}.
\end{align}

\end{theorem}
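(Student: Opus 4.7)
The plan is to exploit the fact that $\{H(B)\}_B$ is a self-adjoint holomorphic family of type (B) in the sense of Kato, which was already used in Proposition \ref{prop:simple_est_domain_mono} (3). Specifically, the associated quadratic form
$$h_B[u] = \int_\Omega |(-i\nabla + B\hat{A}) u|^2 \, dx$$
is a polynomial of degree two in the parameter $B$ with fixed form domain $H_0^1(\Omega)$, which allows analytic continuation to a complex neighborhood of the given real $B$ and meets the hypotheses of \cite[Chapter VII, \S 4]{Kato1995}. Under the simplicity assumption on $\lambda_n(\Omega, B)$, Kato's theory produces a holomorphic rank-one spectral projection on that neighborhood and hence a holomorphic branch $B' \mapsto \lambda_n(\Omega, B')$ together with a real-analytic choice of $L^2$-normalized eigenfunction $B' \mapsto u_n(B') \in H_0^1(\Omega)$. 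Restricting to the real axis gives the claimed $C^\infty$ (indeed real-analytic) dependence.

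With this analytic branch in hand, I would derive the formula by differentiating the eigenvalue equation and using self-adjointness. Elliptic regularity on $C^2$ domains (the first-order perturbation $\hat{A}\cdot\nabla$ is a lower-order, hence relatively form-bounded, perturbation of $-\Delta$) gives $u_n(B) \in H^2(\Omega) \cap H_0^1(\Omega) = \mathcal{D}(H(B))$, while the $B$-derivative $u_n'(B)$ produced by Kato's theory lies in the form domain $H_0^1(\Omega)$. Differentiating $H(B') u_n(B') = \lambda_n(B') u_n(B')$ and pairing in $L^2(\Omega)$ with $u_n(B)$ yields
$$\langle u_n, H'(B) u_n\rangle + \langle u_n, H(B) u_n'\rangle = \lambda_n'(B) + \lambda_n(B) \langle u_n, u_n'\rangle.$$
Interpreting the middle term as a form pairing, self-adjointness of $H(B)$ together with the eigenvalue equation give
$$\langle u_n, H(B) u_n'\rangle = h_B[u_n, u_n'] = \overline{h_B[u_n', u_n]} = \overline{\langle u_n', H(B) u_n\rangle} = \lambda_n(B) \langle u_n, u_n'\rangle,$$
so this term cancels with the corresponding one on the right, leaving $\lambda_n'(B) = \langle u_n, H'(B) u_n\rangle$ as desired.

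The main obstacle is the rigorous justification of the form-level manipulation, namely that the eigenfunction branch $u_n(B')$ is genuinely $H_0^1(\Omega)$-differentiable at $B$ so that $u_n'(B) \in H_0^1(\Omega)$ and the pairing $h_B[u_n, u_n']$ is well defined. This is precisely what Kato's type (B) framework provides: holomorphy of the form in $B$, combined with holomorphy of the one-dimensional spectral projection, yields a normalized analytic branch whose derivative lies in the common form domain. Everything else (the cancellation above, the cancellation of $\lambda_n(B)\langle u_n, u_n'\rangle$ against itself, and reading off $\lambda_n'(B)$) is purely algebraic.
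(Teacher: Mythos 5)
The paper does not actually prove Theorem \ref{thm:Hellmann-Feynman}; it is stated as a classical result (named after Hellmann and Feynman) and used as a black box, with only the divergence-free consequence \eqref{eq:lambda_n_partialB} worked out afterwards. Your argument is the standard proof of this classical fact and it is correct: the family $\{H(B)\}$ is self-adjoint holomorphic of type (B) because the form is a quadratic polynomial in $B$ on the fixed form domain $H_0^1(\Omega)$ (with the cross term $B\,\mathrm{Re}\langle -i\nabla u,\hat{A}u\rangle$ infinitesimally form-bounded by $-\Delta$), Kato then gives a real-analytic branch of the simple eigenvalue and a normalized eigenfunction branch analytic with values in the form domain, and the first-order term is extracted by differentiating the eigenvalue equation and cancelling $\langle u_n, H(B)u_n'\rangle$ against $\lambda_n(B)\langle u_n,u_n'\rangle$ via Hermitian symmetry of the form. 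You correctly identify and handle the one genuinely delicate point, namely that $u_n'$ need only lie in $H_0^1(\Omega)$ so the middle term must be read as the form pairing $h_B[u_n,u_n']$ rather than an operator pairing; since $u_n$ itself lies in the operator domain, $\overline{h_B[u_n',u_n]}=\overline{\langle u_n',H(B)u_n\rangle}=\lambda_n(B)\langle u_n,u_n'\rangle$ is legitimate. An alternative that sidesteps $u_n'$ entirely is Kato's formula for the derivative of an isolated simple eigenvalue via the trace of $H'(B)$ against the holomorphic rank-one eigenprojection, but for a simple eigenvalue this is the same computation in different clothing.
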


If $\hat{A}$ is divergence-free, then the Hellmann-Feynman formula yields
\begin{align} \label{eq:lambda_n_partialB}
\frac{\partial}{\partial B} \lambda_n (\Omega, B)& = \frac{1}{B} \left( \lambda_n(\Omega, B) +  \int_{ \Omega} B^2 \hat{A}^2 |u_n|^2  - |\nabla u_n|^2 \dd x \right), & \text{if } B \neq 0,\\
\frac{\partial}{\partial B} \lambda_n (\Omega, B)& = 0, & \text{if } B = 0,
\end{align} 
which follows from integration by parts. In the case of the constant magnetic field, we apply the above theorem to $\hat{A}(x)=1/2(-x_2,x_1,0)^T$.

The full gradient of the spectral objective function $J(c)$ is now obtained by chain rule and combination of the above formulas. If $\Omega_c$ is volume-normalized, i.e.\ $|\Omega_c| = 1$, then the partial derivatives of $J(c)$ are 
\begin{align} 
\begin{split}
\frac{\partial}{\partial c_i} J(c)&=\frac{\partial}{\partial c_i}\left(|\Omega_c|^\frac{2}{3} \lambda_1(\Omega_c, B|\Omega_c|^{-\frac{2}{3}})  \right)\\
&=\left(\frac{2}{3}|\Omega_c|^{-\frac{1}{3}} \lambda_1(\Omega_c, B|\Omega_c|^{-\frac{2}{3}}) - \frac{2}{3}B |\Omega_c|^{-\frac{5}{3}} \frac{\partial}{\partial B} \lambda_1(\Omega_c, B|\Omega_c|^{-\frac{2}{3}} ) \right) \frac{\partial |\Omega_c| }{ \partial c_i}  \\
& \qquad  - |\Omega_c|^\frac{2}{3} \int_{\partial \Omega_c}  \left| \frac{\partial u_n}{\partial n} \right|^2  V_i \cdot n \, d\sigma, \\
&=\int_{\partial \Omega_c} \left[  \frac{2}{3} \left(  \lambda_1(\Omega_c, B )- B \cdot \frac{\partial}{\partial B} \lambda_1(\Omega_c, B )\right)-  \left|\frac{\partial u_n}{\partial n} \right|^2 \right] V_i \cdot n \, d\sigma,
\end{split}
\end{align}
where $\frac{\partial}{\partial B}\lambda_1(\Omega_c, B )$ is given by \eqref{eq:lambda_n_partialB} and $V_i$ must be a deformation field that deforms $\Omega$ according to changes of the $i$-th coefficient of the spherical harmonic expansion of $r(\theta,\varphi)$. It is not hard to see that such a deformation field must satisfy
\begin{align}
V_i(\gamma(\theta,\varphi)) = e_i^T \cdot \tilde{Y}(\theta,\varphi)  \begin{pmatrix} \sin \theta \cos \varphi \\  \sin \theta \sin \varphi \\  \cos \theta \end{pmatrix}.
\end{align}
We remark that the domain parametrization we chose guarantees that the domains have sufficient regularity for the derivative formulas to hold.

\section{Results} \label{sec:res}

We generated numerical minimizers for the ground state energy $\lambda_1(\Omega,B)$ in two stages. In the first stage, we computed minimizers using the domain parametrization described in Section \ref{subsec:domainparam} and the full MPS approach described in Section \ref{subsec:mps}. This was done for $B\in [0.0, 50.0]$ in steps of $0.2$. In the second stage, for larger field strengths, we restricted ourselves to axisymmetric domains and switched to the axisymmetric MPS approach, employing the modifications mentioned in Section \ref{subsec:modaxis}. This was done for $B\in [44.0, 170.0]$ in steps of $0.2$ and was in our view necessary since the full MPS approach appeared to lose accuracy as the field strength increased. The overlap between the two stages was intentional to verify consistency of minimizers obtained with both approaches. 

Figures \ref{fig:lambda1_numA} and \ref{fig:lambda1_numB} compare the numerically obtained values for the minimal first eigenvalue $\lambda_1^*(B)$ with the first eigenvalue of a ball $\lambda_1(\mathbb{B},B)$ and the minimal first eigenvalue among cylinders $\lambda_{1,cyl}^*(B)$. One observes that the numerically obtained values for $\lambda_1^*(B)$ are close to $\lambda_1(\mathbb{B},B)$ for small $B$, but for large $B$, they are very close to $\lambda_{1,cyl}^*(B)$ and far from $\lambda_1(\mathbb{B},B)$. This is owed to the fact that $\lambda_1(\mathbb{B},B)-B$ converges to an explicit, positive constant (see Appendix \ref{app:balls} for an asymptotic expansion of $\lambda_1(\mathbb{B},B)$), while $\lambda_{1,cyl}^*(B) - B$ tends to zero. This leads us to a first open problem we would like to pose.

\begin{openproblem}
If possible, find a two-term asymptotic expression for $\lambda_1^*(B)$ as $B\to\infty$. How does it compare to the asymptotic expression for $\lambda_{1,cyl}^*(B)$?
\end{openproblem}

Figure \ref{fig:lambda1_numC} seems to suggest that the quotient
\begin{align}
\frac{ \lambda_{1}^*(B) - B}{\lambda_{1,cyl}^*(B)-B} 
\end{align}
does not go to zero as $B\to \infty$. This would imply that a possible second order term of $ \lambda_{1}^*(B)$ cannot be of lower order than the second order term of $\lambda_{1,cyl}^*(B)$. Since $\lambda_1^*(B) \leq \lambda_{1,cyl}^*(B)$ of course, we should perhaps expect that 
\begin{align}
\lambda_1^*(B) &= B + C \dfrac{\log(B)^2}{B^2} (1+o(1)) \qquad \text{as } B\to \infty,  
\end{align}
with $0<C\leq 36 \pi^4$.

Next, Figure \ref{fig:minimizers} shows the numerically obtained minimizers in the full range $B \in [0.0, 170.0]$ in steps of $10.0$. We observe that all minimizers appear to be axisymmetric. For $B>50.0$, this is of course unsurprising, since the symmetry is built-in. However, in the range $B \leq 50.0$, we have made no a priori symmetry assumption on the domains and we have included asymmetric initial domains for the minimization procedure. The consistent axisymmetry of the numerical minimizers is a result of the minimization procedure. Furthermore, we observe no discontinuity in the shape of the minimizers or in the plots of Figure \ref{fig:lambda1_num}. Hence, we think it is reasonable to assume that the axisymmetry found for smaller field strengths persists for larger field strengths. In other words, we propose the following conjecture.

\begin{conjecture} \label{conj:axis}
For any $B\geq 0$, the minimizer $\Omega^*(B)$ is axisymmetric with respect to the $z$-axis.
\end{conjecture}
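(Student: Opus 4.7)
The plan is to attempt a cylindrical (Schwarz) symmetrization argument. Given a candidate minimizer $\Omega$, define its \emph{cylindrical rearrangement} $\Omega^\sharp$ to be the axisymmetric domain whose horizontal slice $\Omega^\sharp \cap \{x_3 = z\}$ is the open disk centered on the $z$-axis with the same planar measure as $\Omega_z := \Omega \cap \{x_3 = z\}$, for every $z \in \mathbb{R}$. By Fubini, $|\Omega^\sharp| = |\Omega|$, so the conjecture would follow from the inequality $\lambda_1(\Omega^\sharp, B) \leq \lambda_1(\Omega, B)$, with equality only if $\Omega$ is already axisymmetric (up to a translation along the $z$-axis and a rotation about it).

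The variational setup exploits the decomposition induced by the linear gauge: with $A = \tfrac{B}{2}(-x_2, x_1, 0)^T$, the magnetic Dirichlet form splits as
$$h_A^\Omega[u] = \int_\mathbb{R} h_{A_\perp}^{\Omega_z}[u(\cdot,\cdot,z)] \, dz + \int_\Omega |\partial_{x_3} u|^2 \, dx,$$
where $A_\perp = \tfrac{B}{2}(-x_2, x_1)^T$ is the two-dimensional standard potential. Taking a normalized ground state $u$ on $\Omega$, the aim is to construct a trial function $v$ on $\Omega^\sharp$ with $\|v\|_{L^2} = 1$ such that, slicewise, $\|v(\cdot,\cdot,z)\|_{L^2}^2 = \|u(\cdot,\cdot,z)\|_{L^2}^2$ and $h_{A_\perp}^{\Omega^\sharp_z}[v(\cdot,\cdot,z)] \leq h_{A_\perp}^{\Omega_z}[u(\cdot,\cdot,z)]$, while globally $\int_{\Omega^\sharp} |\partial_{x_3} v|^2 \leq \int_\Omega |\partial_{x_3} u|^2$. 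Plugging $v$ into the Rayleigh quotient on $\Omega^\sharp$ would then yield the desired inequality.

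Constructing such a $v$ requires a \emph{magnetic Polya--Szeg\H{o} principle} in the plane: given $f \in H_0^1(\omega)$, produce $f^\sharp \in H_0^1(D_R)$ on a disk with $|D_R| = |\omega|$ such that $\|f^\sharp\|_{L^2} = \|f\|_{L^2}$ and $h_{A_\perp}^{D_R}[f^\sharp] \leq h_{A_\perp}^\omega[f]$. Erd\H{o}s's proof of the planar magnetic Faber--Krahn inequality suggests the construction: radially rearrange $|f|$ and multiply by a phase tied to the structure of the disk ground state. In parallel, one needs the 1D Polya--Szeg\H{o} inequality $\int |\partial_{x_3} f^\sharp|^2 \leq \int |\partial_{x_3} f|^2$ to survive the slicewise rearrangement applied uniformly in $z$; this is classical for Steiner symmetrization of real-valued functions via the layer-cake representation, and the pointwise diamagnetic inequality $|\partial_{x_3} |u|| \leq |\partial_{x_3} u|$ a.e.\ takes care of passing from $u$ to $|u|$.

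The hard part will be welding these two rearrangements into a single operation that simultaneously satisfies all three inequalities and produces an admissible function in the magnetic Sobolev space on $\Omega^\sharp$. Erd\H{o}s's inequality is an \emph{eigenvalue} inequality, not a Dirichlet form inequality for arbitrary test functions, so a direct slicewise application would require first extending it to a genuine magnetic Polya--Szeg\H{o} statement; this is already a delicate open problem. Even granting such an extension, the magnetic phase twist on each slice depends on $(x_1, x_2)$ and interacts nontrivially with the $z$-derivative, so one must ensure that a consistent choice of phase $\phi(x_1, x_2, z)$ can be made that is measurable and weakly differentiable in $z$ and does not inflate $\int |\partial_{x_3} v|^2$ beyond $\int |\partial_{x_3} |u||^2$. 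Achieving this coherent gluing, and in particular showing that the phase correction contributes no net positive $z$-energy, appears to be the crux of the conjecture and likely demands a genuinely new magnetic rearrangement principle going beyond Erd\H{o}s's planar argument.
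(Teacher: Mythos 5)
This statement is posed in the paper as a \emph{conjecture}: the paper offers no proof, only numerical evidence (the computed minimizers appear axisymmetric even when no symmetry is imposed, and no discontinuity in shape is observed as $B$ varies). So there is no ``paper's own proof'' to compare against, and your proposal should be judged as an attempted resolution of an open problem. As such, it is a reasonable strategy sketch but not a proof, and the gap you yourself flag is fatal rather than technical. Erd\H{o}s's result is an inequality between \emph{ground-state energies} of $\omega$ and $D_R$; it is not a P\'olya--Szeg\H{o}-type inequality valid for arbitrary $f\in H_0^1(\omega)$, and no such magnetic rearrangement inequality is known. The obvious route to one fails structurally: the diamagnetic inequality bounds $h_{A_\perp}^\omega[f]$ from \emph{below} by the non-magnetic Dirichlet energy of $|f|$, which discards the field entirely, and there is no mechanism to reinstate the magnetic form on $D_R$ afterwards with the correct sign. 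Erd\H{o}s's own argument uses the specific structure of the planar ground state (its nodelessness and an explicit comparison of its phase with that of the disk ground state), none of which is available for the restriction $u(\cdot,\cdot,z)$ of a three-dimensional eigenfunction to a slice, which is generically \emph{not} the planar ground state of $\Omega_z$.

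Even granting a slicewise magnetic Faber--Krahn at the test-function level, the welding step has an independent obstruction that your sketch underestimates. The phase you must attach to the rearranged modulus on slice $z$ is dictated by the geometry of $\Omega_z$ and by $u(\cdot,\cdot,z)$, both of which vary with $z$; the quantity $\int|\partial_{x_3}v|^2$ then picks up a term $\int |f^\sharp|^2\,|\partial_{x_3}\phi|^2$ from the $z$-derivative of this phase, and there is no a priori bound relating it to $\int|\partial_{x_3}u|^2$ (the pointwise diamagnetic inequality $|\partial_{x_3}|u||\le|\partial_{x_3}u|$ controls only the modulus, not the reconstructed phase). Finally, even a complete argument of this type would only show that \emph{some} minimizer is axisymmetric; concluding that \emph{every} minimizer $\Omega^*(B)$ is axisymmetric, as the conjecture asserts, additionally requires a rigidity/equality analysis of the rearrangement. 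In short: the approach is a sensible first attack and correctly identifies where the difficulty sits, but it does not prove the statement, and the missing ingredient (a magnetic cylindrical P\'olya--Szeg\H{o} principle) is itself at least as hard as the conjecture.
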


Another emergent property from the minimization procedure is convexity. We have not built in or demanded convexity in either of the two optimization stages. Our numerical results lead us to believe

\begin{conjecture} \label{conj:convex}
For any $B\geq 0$, the minimizer $\Omega^*(B)$ is convex.
\end{conjecture}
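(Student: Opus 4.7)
The plan is a three-regime strategy (small $B$, large $B$, intermediate $B$) combined with rigidity from the Euler--Lagrange system, reflecting that no convexity-preserving rearrangement is available for the magnetic Dirichlet Laplacian in three dimensions.

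For small $B$, I would proceed by a perturbation/stability argument anchored at $B=0$. The classical three-dimensional Faber--Krahn inequality identifies the unit ball as the unique (up to translation) minimizer of $\lambda_1(\cdot,0)$ under the volume constraint. Combining a quantitative Faber--Krahn stability estimate with the diamagnetic inequality $\lambda_1(\Omega,B)\geq\lambda_1(\Omega,0)$ from Proposition \ref{prop:simple_est_domain_mono}~(2) and the trivial upper bound $\lambda_1^*(B)\leq\lambda_1(\mathbb{B},B)\to\lambda_1(\mathbb{B},0)$, one obtains $\Omega^*(B)\to\mathbb{B}$ in Hausdorff sense as $B\to 0^+$. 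Standard free-boundary regularity for eigenvalue shape optimizers then upgrades this to $C^{1,\alpha}$ convergence, and since strict convexity is open in $C^1$, convexity of $\Omega^*(B)$ follows on some interval $[0,B_1]$.

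For large $B$, the plan is to combine Conjecture \ref{conj:axis} with the spaghettification rates from Corollary \ref{cor:h_Romegastar}. Under axisymmetry, $\Omega^*(B)$ is encoded in a profile $\rho^*(z)\geq 0$ on an interval, and convexity reduces to concavity of $z\mapsto\rho^*(z)$ together with connectedness of its support. The pointwise Euler--Lagrange identity derived from Theorem \ref{thm:change_eigval} together with the volume constraint forces $|\partial u/\partial n|^2$ on $\partial\Omega^*(B)$ to equal the explicit affine combination of $\lambda_1^*(B)$ and $B\,\partial_B\lambda_1^*(B)$ appearing in the gradient formula of Section~\ref{sec:nummeth}. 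In the strong-field limit, one expects the ground state to factor asymptotically as a lowest-Landau-level Gaussian in the transverse variables tensored with a one-dimensional profile in $z$; inserting this WKB ansatz into the Euler--Lagrange identity should force concavity of $\rho^*$ for $B\geq B_2$.

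For the intermediate range $[B_1,B_2]$ I would run a topological continuation: the set of $B$ for which $\Omega^*(B)$ is unique, strictly convex, and $\lambda_1$ is simple is open by the implicit function theorem applied to the optimality system (Hadamard formula plus volume normalization) combined with the stability of strict convexity under small $C^1$ perturbations, and it is closed because Hausdorff limits of convex sets are convex. The main obstacle, and the reason this is posed only as a conjecture, is the absence of any rearrangement that is both convexifying and non-increasing for $\lambda_1(\cdot,B)$: Steiner, Schwarz, and convex-hull rearrangements can each raise the magnetic eigenvalue in three dimensions, so one cannot simply replace $\Omega^*(B)$ by its convex hull. A complete proof will likely require either a genuinely new magnetic rearrangement inequality, or a global quantitative analysis of the optimality conditions excluding bifurcations away from convexity as $B$ varies; this last step is what I expect to be the hardest.
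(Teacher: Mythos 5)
The statement you are trying to prove is posed in the paper as a \emph{conjecture}, not a theorem: the paper offers no proof, only the empirical observation that the numerical minimizers produced by the gradient descent scheme turn out convex even though convexity was never imposed. So there is no argument in the paper to compare yours against, and your text, by your own admission in the final sentences, is a research programme rather than a proof. That is the honest assessment, but since you present three concrete regimes, it is worth saying where each one actually breaks.

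For small $B$, the chain ``quantitative Faber--Krahn $+$ diamagnetic inequality $\Rightarrow$ Hausdorff convergence $\Rightarrow$ $C^{1,\alpha}$ convergence $\Rightarrow$ convexity'' has two gaps: quantitative Faber--Krahn controls the Fraenkel asymmetry (closeness in measure), not Hausdorff distance, and upgrading to $C^{1,\alpha}$ requires a regularity theory for minimizers of the \emph{magnetic} shape optimization problem that is not available --- indeed the paper only works ``under the assumption that a minimizer exists.'' For large $B$, you invoke Conjecture \ref{conj:axis}, which is itself open, and the step ``inserting this WKB ansatz into the Euler--Lagrange identity should force concavity of $\rho^*$'' is an expectation, not an argument; note also that the minimizers degenerate as $B\to\infty$ (Corollary \ref{cor:h_Romegastar}), so there is no fixed limiting shape against which a perturbative rigidity argument can be anchored without uniform-in-$B$ estimates. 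For the intermediate range, the open--closed continuation needs uniqueness of $\Omega^*(B)$ and simplicity of $\lambda_1$ along the entire branch, neither of which is known, and closedness requires knowing that Hausdorff limits of minimizers are again minimizers, i.e.\ a compactness and stability theory for the optimization problem that the paper does not establish. Your closing diagnosis --- that the real obstruction is the lack of a convexifying, eigenvalue-nonincreasing rearrangement for the three-dimensional magnetic Laplacian --- is accurate and is precisely why the paper leaves this as a conjecture; but none of the three regimes as sketched closes that gap.
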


Furthermore, Figure \ref{fig:minimizers} displays the ``spaghettification'' phenomenon discussed in Section \ref{sec:cyl} very well. In Section \ref{sec:cyl} we have estimated the speed with which the minimizers degenerate - partially under the assumption that the optimal domains stay convex, i.e.\ under assumption of the previous conjecture. Admittedly, these bounds were rather crude. We are left with another open problem.

\begin{openproblem}
Find stronger estimates on how the minimizers $\Omega^*(B)$ degenerate. If possible, find asymptotic expressions for $h(\Omega^*(B))$, $R(\Omega^*(B))$, $\operatorname{diam}(\Omega^*(B))$ and $r_{in}(\Omega^*(B))$ as $B\to \infty$.
\end{openproblem}

Figure \ref{fig:diam_opt} shows numerically obtained values of the lengths $h(\Omega^*(B))$ and $R(\Omega^*(B))$ compared to the height $h^*(B)$ and radius $R^*(B)=(\pi h^*(B))^{-1/2}$ of optimal cylinders from Section \ref{sec:cyl}. We observe that $h(\Omega^*(B)) \geq h^*(B)$ and $R(\Omega^*(B)) \geq R^*(B)$ for all field strengths $B$ considered. Figure \ref{fig:diam_opt} and the fact that the numerical minimizers in Figure \ref{fig:minimizers} look more and more like a long cylinder with round caps attached at both ends speaks in favour of the following conjecture.

\begin{conjecture}
$h(\Omega^*(B)) = h^*(B)(1+o(1))$ and $R(\Omega^*(B)) = R^*(B) (1+o(1))$ as $B\to \infty$.
\end{conjecture}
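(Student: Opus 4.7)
The bounds $h(\Omega^*(B)) \geq h^*(B)(1+o(1))$ and $R(\Omega^*(B)) \geq R^*(B)(1+o(1))$ are already contained in Corollary \ref{cor:h_Romegastar}, since $R^*(B)^2 = (\pi h^*(B))^{-1} = 6\log(B)/B(1+o(1))$. The conjecture therefore reduces to matching upper bounds, and the natural route is to first upgrade the energy asymptotic to
\[
\lambda_1^*(B) = B + 36\pi^4 \frac{\log(B)^2}{B^2}(1+o(1)) \qquad \text{as } B\to\infty .
\]
Once this is known, Lemma \ref{lem:basic_est_pluspih2} immediately forces $h(\Omega^*(B)) \leq \tfrac{B}{6\pi\log(B)}(1+o(1))$, and the argument used for the lower bound on $R(\Omega^*(B))$ in the proof of Corollary \ref{cor:h_Romegastar} equally forces $R(\Omega^*(B)) \leq R^*(B)(1+o(1))$, giving the desired two-sided asymptotics.

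To produce the matching lower bound on $\lambda_1^*(B)$, I would exploit the fact that $A=\tfrac{B}{2}(-x_2,x_1,0)^T$ has no $x_3$-component, so the quadratic form splits as
\[
h_A^\Omega[u] = \int_\Omega |\partial_{x_3} u|^2 \dd x + \int_{\mathbb{R}} \int_{\Omega_z} \bigl|(-i\nabla_{12}+A_{12})u(\cdot,z)\bigr|^2 \dd x_1 \dd x_2 \dd z,
\]
where $\Omega_z := \Omega\cap\{x_3=z\}$. Applying Erd{\H o}s' planar magnetic Faber-Krahn inequality slicewise, setting $\pi r(z)^2:=|\Omega_z|$, and combining with the standard estimate $|\partial_z\|u(\cdot,z)\|_{L^2}|\leq \|\partial_z u(\cdot,z)\|_{L^2}$ yields the one-dimensional lower bound
\[
\lambda_1(\Omega,B) \geq \mu_1(r) := \inf_{\phi} \frac{\int \bigl(|\phi'|^2 + \lambda_1(D_{r(z)},B)|\phi|^2\bigr)\dd z}{\int|\phi|^2\dd z},
\]
where $\phi$ runs over $H^1_0$ of the $x_3$-projection of $\Omega$. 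This reduces the 3D shape optimization to a one-dimensional Schr\"odinger minimization over radius profiles $r(z)\geq 0$ subject only to $\int \pi r(z)^2\dd z=1$.

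I would then analyze this reduced problem using the disk asymptotic $\lambda_1(D_{r(z)},B) = B + \pi r(z)^2 B^2 \exp(-r(z)^2 B/2)(1+o(1))$ from \cite[Theorem 2.1]{Baur2025}. The effective potential is exponentially sensitive to deviations of $r$ from the cylindrical value $R^*(B)\sim\sqrt{6\log(B)/B}$, while any gain from enlarging the support of $\phi$ is only polynomial in $h$. A second-variation analysis around the constant profile $r\equiv R^*(B)$, combined with a convexity/rearrangement argument in $z$ to rule out non-monotone oscillations of $r$, should show that every admissible profile satisfies $\mu_1(r) \geq \lambda_{1,cyl}^*(B)(1+o(1))$, completing the squeeze.

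The principal obstacle lies at the ``caps'' of $\Omega^*(B)$, where $r(z)\to 0$: there the slicewise Erd{\H o}s bound loses strength and the asymptotic expansion for $\lambda_1(D_{r(z)},B)$ degenerates, so one cannot immediately rule out thin spikes that inflate $h(\Omega^*(B))$ beyond $h^*(B)$ without raising the energy at leading order. Without an a priori convexity or axial structure on $\Omega^*(B)$, handling these spikes requires delicate tunneling-type estimates in the cap regions. The cleanest route therefore seems to be to first assume Conjectures \ref{conj:axis} and \ref{conj:convex}; under axisymmetry the problem collapses to a weighted two-dimensional magnetic Schr\"odinger problem in the $(r,z)$ half-plane with confining potential $B^2 r^2/4$, and convexity of the profile makes the endpoint analysis tractable by semiclassical methods, rendering the two-sided asymptotic accessible.
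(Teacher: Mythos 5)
This statement is a conjecture in the paper: no proof is given, only numerical evidence (Figure \ref{fig:diam_opt} and the shapes in Figure \ref{fig:minimizers}), so your proposal is an attempt at an open problem rather than something to be checked against an existing argument. Your first observation is correct: the lower bounds $h(\Omega^*(B))\geq h^*(B)(1+o(1))$ and $R(\Omega^*(B))\geq R^*(B)(1+o(1))$ are exactly Corollary \ref{cor:h_Romegastar} combined with Theorem \ref{thm:lambda1cyl_asympt}, and the slicing reduction $\lambda_1(\Omega,B)\geq \mu_1(r)$ via the slicewise Erd\H{o}s inequality is a sound and promising route toward a matching lower bound for $\lambda_1^*(B)$.

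The genuine gap is in the step where you claim the upper bounds would follow ``immediately'' once $\lambda_1^*(B)=B+36\pi^4\log(B)^2/B^2(1+o(1))$ is known. Lemma \ref{lem:basic_est_pluspih2} states $\lambda_1(\Omega,B)\geq B+\pi^2/h(\Omega)^2$; feeding in the energy asymptotic gives $\pi^2/h(\Omega^*(B))^2\leq 36\pi^4\log(B)^2/B^2(1+o(1))$, i.e.\ $h(\Omega^*(B))\geq \frac{B}{6\pi\log(B)}(1+o(1))$ --- a \emph{lower} bound again, not the upper bound you assert. The same direction error occurs for $R$: the argument in Corollary \ref{cor:h_Romegastar} encloses $\Omega^*(B)$ in a cylinder and uses that a \emph{small} cross-sectional radius makes $\lambda_1(D_R,B)$ large; enlarging $R$ only decreases that bound, so it can never penalize large $R$. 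More structurally, $h(\Omega)$ and $R(\Omega)$ are sup-type quantities and, by domain monotonicity (Proposition \ref{prop:simple_est_domain_mono}~(4)), enlarging a domain never increases $\lambda_1$; hence no containment-type energy lower bound can bound $h$ or $R$ from above. The only leverage is the volume constraint, i.e.\ one must show that a minimizer cannot carry thin spikes or flanges of negligible volume that inflate $h$ or $R$ --- precisely the obstruction you name in your final paragraph, but which invalidates the ``immediately forces'' step rather than being a peripheral technicality. Together with the fact that the analysis of the reduced one-dimensional problem (second variation, rearrangement in $z$, cap/tunneling estimates) is only sketched and the energy asymptotic for $\lambda_1^*(B)$ is itself still conjectural (it is posed as an open problem in Section \ref{sec:res}), the proposal is a reasonable research program but not a proof.
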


Finally, related to this observation, we think another natural question could be the following.

\begin{openproblem}
Do the minimizers $\Omega^*(B)$ converge as $B\to\infty$ after suitable rescaling? 
\end{openproblem}

We think that the answer to this problem should be positive and we suspect that appropriately rescaled minimizers converge to a cylinder.

\newpage

\begin{figure}

\begin{subfigure}{0.8\textwidth}
\begin{center}
\includegraphics[scale=0.4]{./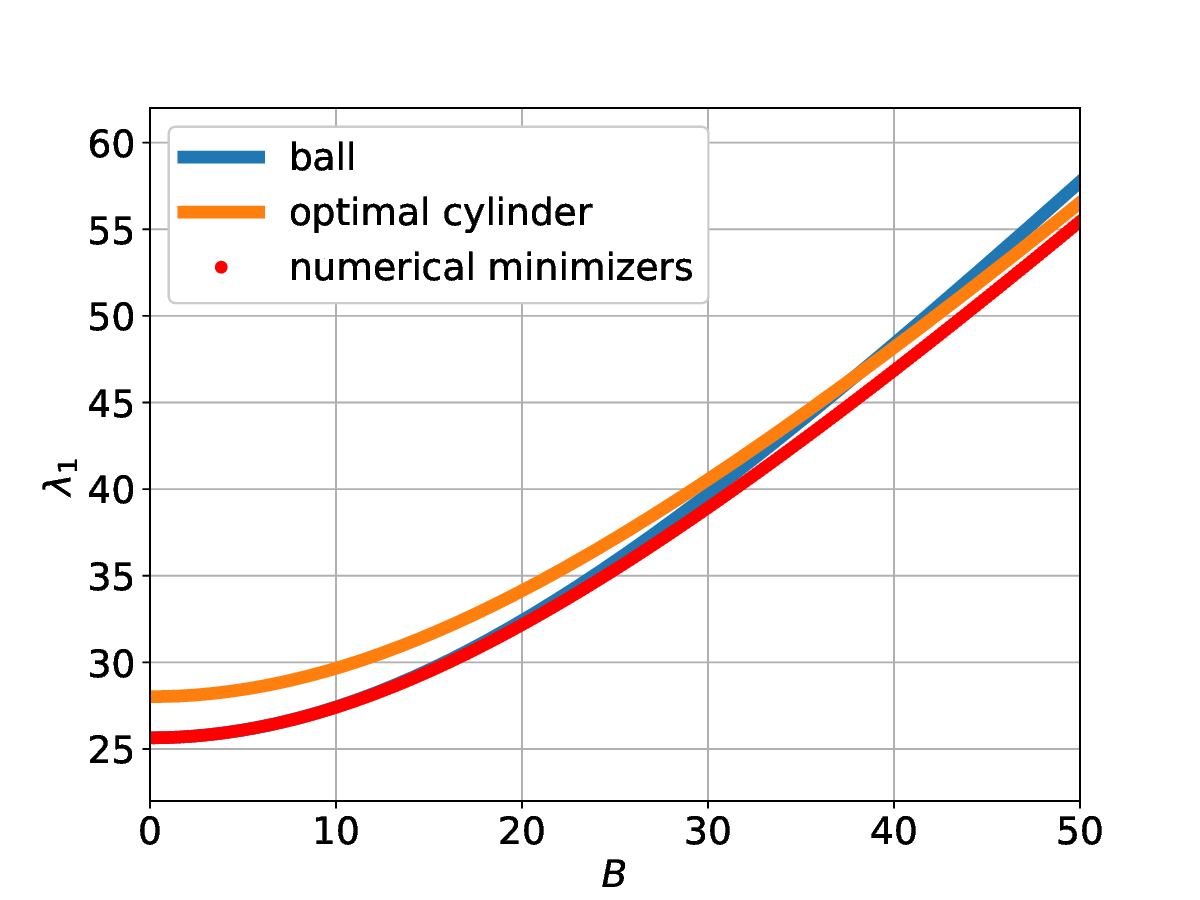}
\end{center}
\caption{The numerically obtained minimal first eigenvalue $\lambda_1^*(B)$ (red) for various field strengths in comparison to the first eigenvalue of a ball $\lambda_1(\mathbb{B}_R,B)$ (blue) and the minimal first eigenvalue among cylinders $\lambda_{1,cyl}^*(B)$ (orange).} \label{fig:lambda1_numA}
\end{subfigure}

\begin{subfigure}{0.8\textwidth}
\begin{center}
\includegraphics[scale=0.4]{./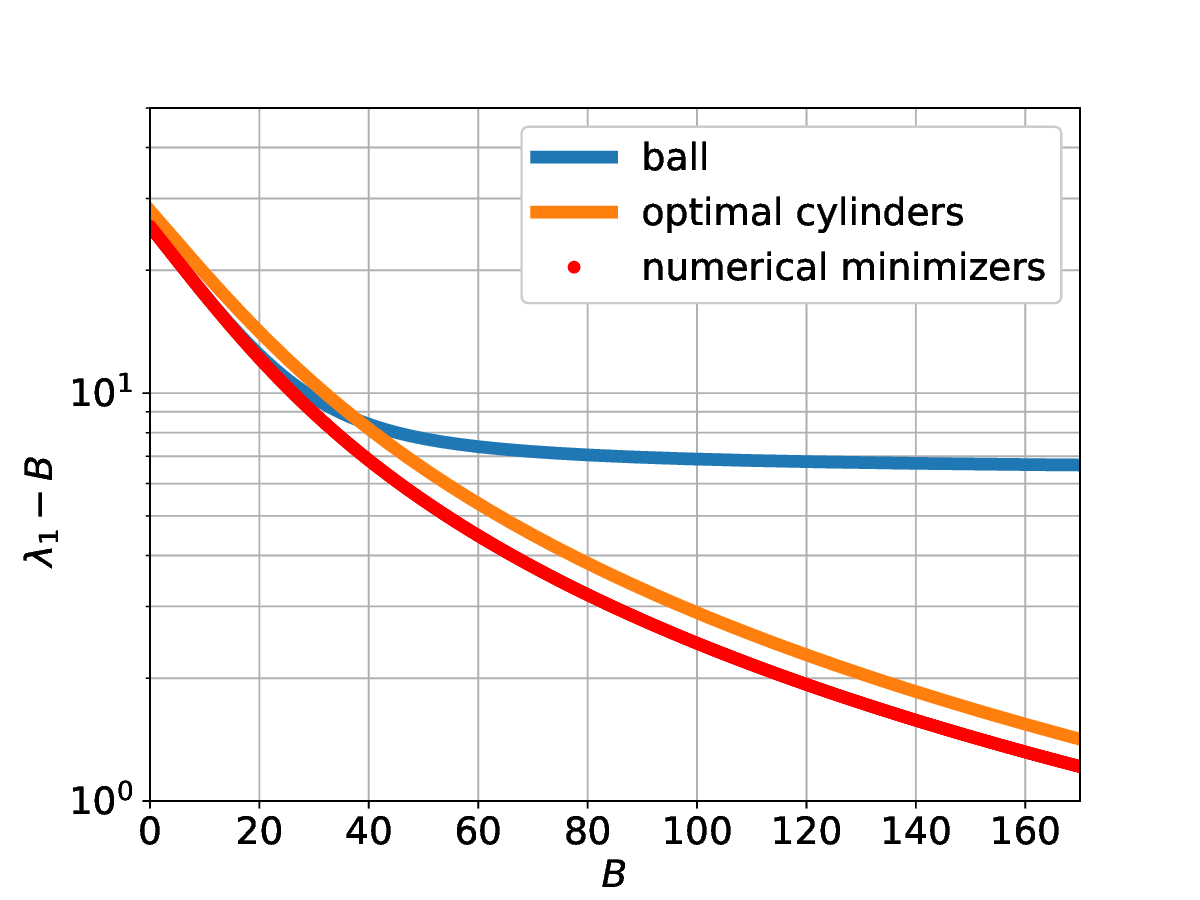}
\end{center}
\caption{Same as the plot (a) but shows the difference $\lambda_1- B$ on a logarithmically scaled axis.} \label{fig:lambda1_numB}
\end{subfigure}
\begin{subfigure}{0.8\textwidth}
\begin{center}
\includegraphics[scale=0.4]{./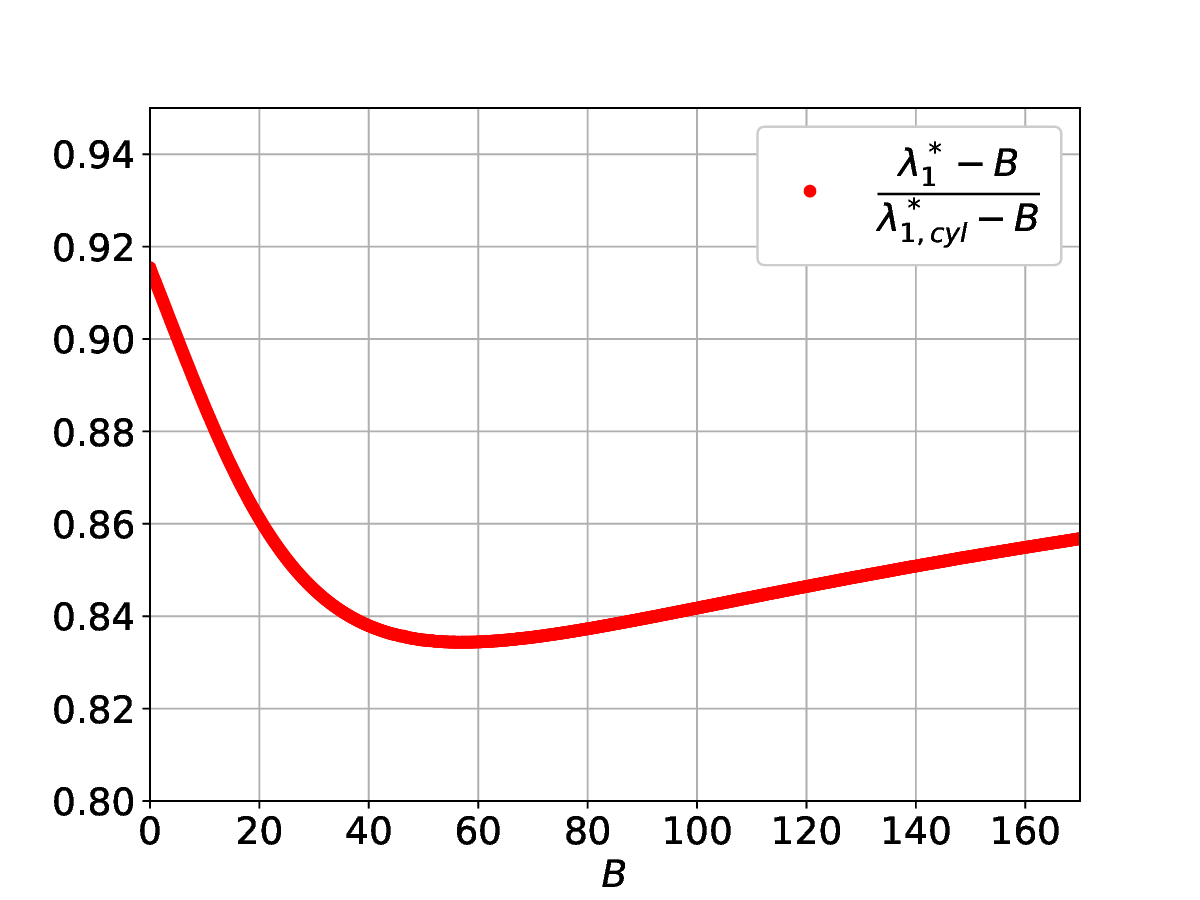}
\end{center}
\caption{Numerical values for the quotient $\dfrac{\lambda_{1}^*(B)- B}{\lambda_{1,cyl}^*(B)-B}$.} \label{fig:lambda1_numC}
\end{subfigure}

\caption{ } \label{fig:lambda1_num}

\end{figure}

\newpage

\begin{figure}

\begin{center}

\begin{overpic}[trim={1cm 1cm 1cm 1cm},clip,scale=0.14]{./figures/img\_lambda1B0.0.png}
 \put (30,-15) {$B=0.0$}
\end{overpic}
\begin{overpic}[trim={1cm 1cm 1cm 1cm},clip,scale=0.14]{./figures/img\_lambda1B10.0.png}
 \put (30,-15) {$B=10.0$}
\end{overpic}
\begin{overpic}[trim={1cm 1cm 1cm 1cm},clip,scale=0.14]{./figures/img\_lambda1B20.0.png}
 \put (30,-15) {$B=20.0$}
\end{overpic}
\begin{overpic}[trim={1cm 1cm 1cm 1cm},clip,scale=0.14]{./figures/img\_lambda1B30.0.png}
 \put (30,-15) {$B=30.0$}
\end{overpic}

\vspace{1cm}

\begin{overpic}[trim={1cm 1cm 1cm 1cm},clip,scale=0.14]{./figures/img\_lambda1B40.0.png}
 \put (30,-15) {$B=40.0$}
\end{overpic}
\begin{overpic}[trim={1cm 1cm 1cm 1cm},clip,scale=0.14]{./figures/img\_lambda1B50.0.png}
 \put (30,-15) {$B=50.0$}
\end{overpic}
\begin{overpic}[trim={1cm 1cm 1cm 1cm},clip,scale=0.14]{./figures/img\_lambda1B60.0.png}
 \put (30,-15) {$B=60.0$}
\end{overpic}
\begin{overpic}[trim={1cm 1cm 1cm 1cm},clip,scale=0.14]{./figures/img\_lambda1B70.0.png}
 \put (30,-15) {$B=70.0$}
\end{overpic}

\vspace{1cm}

\begin{overpic}[trim={1cm 1cm 1cm 1cm},clip,scale=0.14]{./figures/img\_lambda1B80.0.png}
 \put (30,-15) {$B=80.0$}
\end{overpic}
\begin{overpic}[trim={1cm 1cm 1cm 1cm},clip,scale=0.14]{./figures/img\_lambda1B90.0.png}
 \put (30,-15) {$B=90.0$}
\end{overpic}
\begin{overpic}[trim={1cm 1cm 1cm 1cm},clip,scale=0.14]{./figures/img\_lambda1B100.0.png}
 \put (30,-15) {$B=100.0$}
\end{overpic}
\begin{overpic}[trim={1cm 1cm 1cm 1cm},clip,scale=0.14]{./figures/img\_lambda1B110.0.png}
 \put (30,-15) {$B=110.0$}
\end{overpic}

\vspace{1cm}

\begin{overpic}[trim={1cm 1cm 1cm 1cm},clip,scale=0.14]{./figures/img\_lambda1B120.0.png}
 \put (30,-15) {$B=120.0$}
\end{overpic}
\begin{overpic}[trim={1cm 1cm 1cm 1cm},clip,scale=0.14]{./figures/img\_lambda1B130.0.png}
 \put (30,-15) {$B=130.0$}
\end{overpic}
\begin{overpic}[trim={1cm 1cm 1cm 1cm},clip,scale=0.14]{./figures/img\_lambda1B140.0.png}
 \put (30,-15) {$B=140.0$}
\end{overpic}
\begin{overpic}[trim={1cm 1cm 1cm 1cm},clip,scale=0.14]{./figures/img\_lambda1B150.0.png}
 \put (30,-15) {$B=150.0$}
\end{overpic}

\vspace{1cm}

\begin{overpic}[trim={1cm 1cm 1cm 1cm},clip,scale=0.14]{./figures/img\_lambda1B160.0.png}
 \put (30,-15) {$B=160.0$}
\end{overpic}
\begin{overpic}[trim={1cm 1cm 1cm 1cm},clip,scale=0.14]{./figures/img\_lambda1B170.0.png}
 \put (30,-15) {$B=170.0$}
\end{overpic}

\vspace{0.7cm}

\end{center}

\caption{Minimizers $\Omega^*(B)$ for various field strengths. The magnetic field is oriented along the vertical axis (drawn in red).}\label{fig:minimizers}

\end{figure}

\clearpage

\begin{figure}
\includegraphics[scale=0.33]{./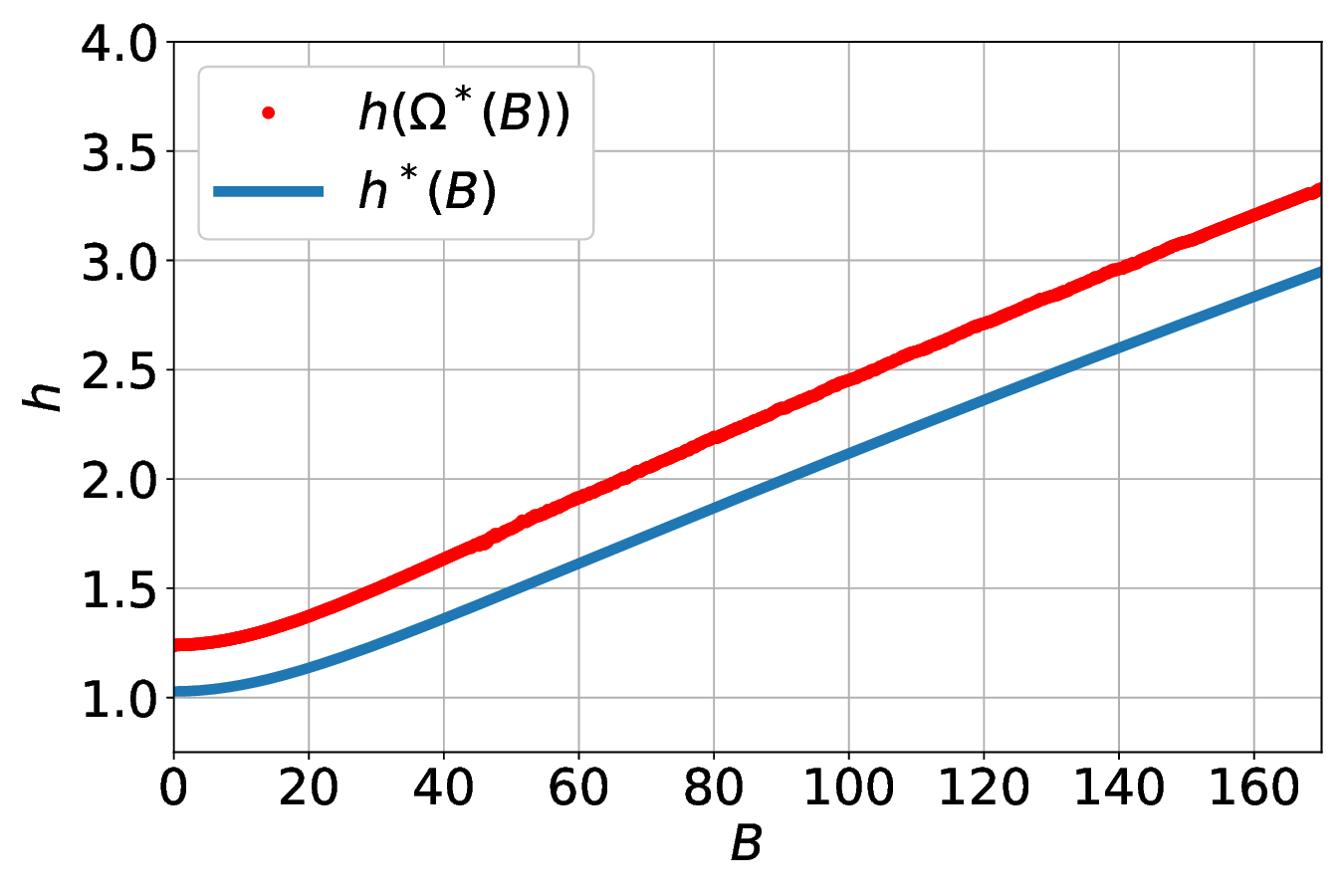}
\includegraphics[scale=0.33]{./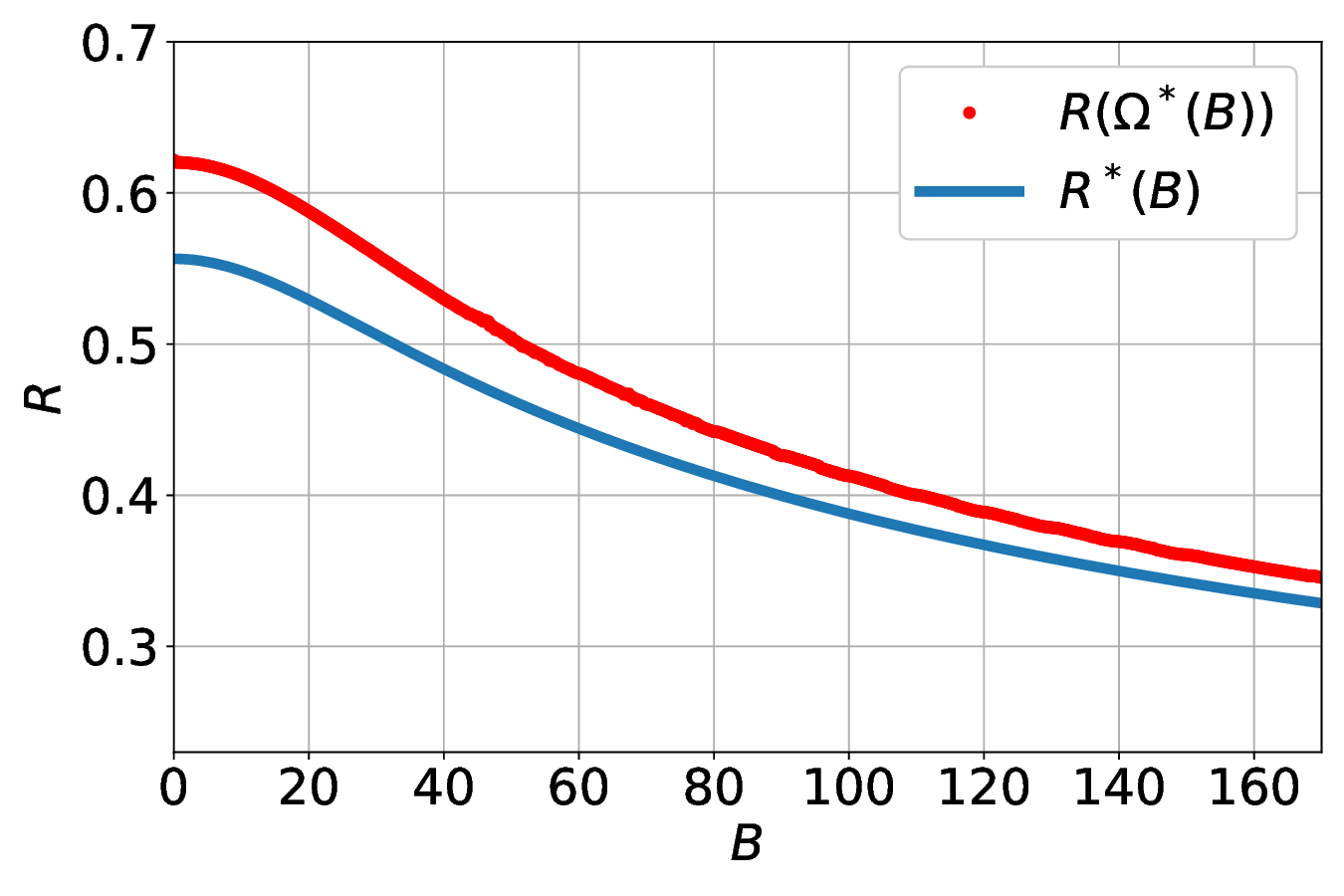}
\caption{Numerically obtained values for the lengths $h(\Omega^*(B))$ and $R(\Omega^*(B))$ compared to the lengths $h^*(B)$ and $R^*(B)=(\pi h^*(B))^{-1/2}$ for optimal cylinders.} \label{fig:diam_opt}
\end{figure}

\appendix

\section{Balls} \label{app:balls}

Let $\mathbb{B}_R=\{ x\in\mathbb{R}^3 \,:\, |x|<R\}$ denote the ball in $\mathbb{R}^3$ of radius $R>0$ centered at the origin. We have the following lower bound and asymptotic expansion for the first eigenvalue of the magnetic Dirichlet Laplacian on $\mathbb{B}_R$.

\begin{proposition} \label{prop:app_ball}
\begin{align}
\lambda_1(\mathbb{B}_R, B)  \geq B + \left(\frac{\pi}{2R} \right)^2 \qquad \text{for any } B \geq 0
\end{align}
and 
\begin{align}
\lambda_1(\mathbb{B}_R, B) = B + \left(\frac{\pi}{2R} \right)^2 + O\left( \frac{\log(B)}{B} \right) \qquad \text{as } B\to \infty.
\end{align}
\end{proposition}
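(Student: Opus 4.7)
The lower bound is essentially immediate from the earlier machinery. Observe that the projection of $\mathbb{B}_R$ onto the $z$-axis is the interval $(-R,R)$, so $h(\mathbb{B}_R) = 2R$. Applying Lemma \ref{lem:basic_est_pluspih2} directly yields
\begin{align*}
\lambda_1(\mathbb{B}_R, B) \geq B + \frac{\pi^2}{(2R)^2} = B + \left(\frac{\pi}{2R}\right)^2
\end{align*}
for every $B \geq 0$.

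For the matching upper bound, the plan is to test against ground states of circular cylinders inscribed in $\mathbb{B}_R$. Let $\epsilon = \epsilon(B) \in (0, 2R)$ be a parameter to be chosen, set $h = 2R - \epsilon$, and $r = \sqrt{R\epsilon - \epsilon^2/4}$, so that (after translation) the cylinder $C_{r,h} = D_r \times (-h/2, h/2)$ is inscribed in $\mathbb{B}_R$. By domain monotonicity (Proposition \ref{prop:simple_est_domain_mono}(4)) and separation of variables,
\begin{align*}
\lambda_1(\mathbb{B}_R, B) \leq \lambda_1(C_{r,h}, B) = \lambda_1(D_r, B) + \frac{\pi^2}{h^2}.
\end{align*}
I would then invoke the strong-field asymptotic expansion for the two-dimensional disk from \cite[Theorem 2.1]{Baur2025} (in the form already used around \eqref{eq:lambda_1_expansionallowed} and in the proof of Corollary \ref{cor:h_Romegastar}), namely
\begin{align*}
\lambda_1(D_r, B) = B + B^2 r^2 \exp\Bigl(-\tfrac{B r^2}{2}\Bigr)\bigl(1 + O((Br^2)^{-1})\bigr)
\end{align*}
as $B r^2 \to \infty$, together with the expansion $\pi^2/h^2 = (\pi/(2R))^2 (1 + \epsilon/R + O(\epsilon^2))$.

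The key step is to choose $\epsilon$ so that both error contributions are of the desired size $O(\log(B)/B)$. Taking $\epsilon = 4\log(B)/(RB)$, the geometric error term $\pi^2/h^2 - (\pi/(2R))^2$ is of order $\epsilon = O(\log(B)/B)$. Simultaneously, $r^2 = 4\log(B)/B + O((\log B)^2/B^2)$, so $Br^2/2 = 2\log(B) + o(1)$ and hence $\exp(-Br^2/2) = B^{-2}(1+o(1))$, which gives
\begin{align*}
B^2 r^2 \exp(-Br^2/2) = O\!\left( \frac{\log(B)}{B} \right).
\end{align*}
Combining these estimates yields $\lambda_1(\mathbb{B}_R, B) \leq B + (\pi/(2R))^2 + O(\log(B)/B)$, and together with the lower bound, this proves the claimed asymptotic expansion.

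The only mildly delicate point is verifying that $Br^2 \to \infty$ (so that the disk asymptotic expansion is applicable) with the chosen $\epsilon$; this holds since $Br^2 \sim 4\log(B) \to \infty$. Otherwise the argument is a routine balancing of an exponentially small cross-sectional contribution against an algebraically small longitudinal contribution, where the logarithmic choice $\epsilon \asymp \log(B)/B$ equalizes the two.
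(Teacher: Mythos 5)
Your proposal is correct and follows essentially the same route as the paper: the lower bound via Lemma \ref{lem:basic_est_pluspih2}, and the upper bound by inscribing a cylinder whose cross-sectional radius squared is taken $\sim 4\log(B)/B$ so that the disk's exponentially small excess and the height deficit are both $O(\log(B)/B)$. The only cosmetic difference is that the paper parametrizes directly by the radius $R(B)=\sqrt{4\log(B)/B}$ and uses the explicit non-asymptotic disk bound from Ekholm et al.\ rather than the asymptotic expansion, but the balancing argument is identical.
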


\begin{proof} 
The lower bound on $\lambda_1(\mathbb{B}_R, B)$ follows directly from Lemma \ref{lem:basic_est_pluspih2}.

For the upper bound, we consider a circular cylinder of radius $0<R(B)< R$ and height $h(B)=2\sqrt{R^2-R(B)^2}$. After a suitable translation, such a cylinder can be contained within $\mathbb{B}_R$ and thus
\begin{align}
\lambda_1(\mathbb{B}_R, B) \leq \lambda_1(C_{R(B),h(B)}, B) = \lambda_1(D_{R(B)},B) + \frac{\pi}{h(B)^2}. \label{eq:ball_upperbnd}
\end{align}
We now choose 
\begin{align}
R(B) = \sqrt{4 \frac{\log(B)}{B}}, \qquad B\geq e.
\end{align}
Then, 
\begin{align}
\frac{\pi^2}{h(B)^2} = \left(\frac{\pi}{2R} \right)^2 \left( \frac{R^2}{R^2-R(B)^2}  \right) = \left(\frac{\pi}{2R} \right)^2 +O(R(B)^2) = \left(\frac{\pi}{2R} \right)^2 + O\left( \frac{\log(B)}{B}  \right) \label{eq:ball_upperbnd_1}
\end{align}
as $B\to\infty$ and by \cite[Proposition 2.5]{Ekholm2016},
\begin{align}
\lambda_1(D_{R(B)},B) &\leq B + e B^2 R(B)^2 \exp \left( - \frac{BR(B)^2}{2}\right) =B + 4e \frac{ \log(B) }{B}. \label{eq:ball_upperbnd_2}
\end{align}
Combining the estimates \eqref{eq:ball_upperbnd_1} and \eqref{eq:ball_upperbnd_2} in \eqref{eq:ball_upperbnd} together with the lower bound gives the desired result.
\end{proof}

\begin{remark}
If $|\mathbb{B}_R| =1$, then $R=(4\pi/3)^{-1/3}$ and hence  
\begin{align}
\lambda_1(\mathbb{B}_R, B) = B + \frac{\pi^{8/3}}{6^{2/3}}+  O\left( \frac{\log(B)}{B} \right), 
\end{align}
as $B\to\infty$. Here, $\pi^{8/3}/6^{2/3} \approx 6.412$.
\end{remark}

\section*{Acknowledgements}

The author is grateful to Timo Weidl for proposing this problem to him and valuable discussions. The author also thanks anonymous referees for helpful comments.


\section*{Declarations}

\begin{itemize}
\item Funding: No funding was received to assist with the preparation of this manuscript. 
\item Competing interests: The author has no competing interests to declare.
\item Ethics approval and consent to participate: Not applicable
\item Consent for publication: Not applicable
\end{itemize}

\section*{Code and Data}

The shape optimization code and minimizer data used to generate the plots and figures in the results section is available at \url{https://github.com/matthias-baur/mssopython3D}.

\printbibliography

@Article{Antunes2012,
  author     = {Antunes, Pedro R. S. and Freitas, Pedro},
  journal    = {Journal of Optimization Theory and Applications},
  title      = {Numerical Optimization of Low Eigenvalues of the Dirichlet and Neumann Laplacians},
  year       = {2012},
  month      = {jan},
  number     = {1},
  pages      = {235--257},
  volume     = {154},
  annotation = {Presents optimizers of low eigenvalues of the planar Laplacian under area constraint (first 15 for Dirichlet boundary conditions, first 10 for Neumann boundary conditions). Surprisingly, the shape of the minimizer for the $13$-th Dirichlet eigenvalue exhibits no visible symmetry. First proposal of using the Method of Fundamental solutions for eigenvalue computation in combination with a gradient based optimization procedure. Initial domains are selected after running a genetic algorithm.},
  doi        = {10.1007/s10957-011-9983-3},
  file       = {:/home/iadm/Schreibtisch/Literature/Spectral Optimization/Antunes 2012 Numerical Optimization of Low Eigenvalues of the Dirichlet and J Optim Theory Appl.pdf:PDF},
  groups     = {Spectral Optimization},
  publisher  = {Springer Science and Business Media {LLC}},
  ranking    = {rank2},
}

@InCollection{Antunes2017,
  author     = {Antunes, Pedro R. S. and Oudet, {\'{E}}douard},
  booktitle  = {Shape optimization and spectral theory},
  publisher  = {De Gruyter Open},
  title      = {Numerical results for extremal problem for eigenvalues of the {Laplacian}},
  year       = {2017},
  isbn       = {978-3-11-055085-6; 978-3-11-055088-7},
  month      = {dec},
  pages      = {398--412},
  annotation = {Review of numerical shape optimization results for eigenvalues of the Dirichlet and Neumann Laplacian in two and three dimensions with measure constraint. Optimizers in two dimensions are those from \textit{Numerical Optimization of Low Eigenvalues of the Dirichlet and Neumann Laplacians} by Antunes and Freitas. In three dimensions, they present optimizers for the first ten Dirichlet and Neumann eigenvalues.},
  doi        = {10.1515/9783110550887-011},
  file       = {:/home/iadm/Schreibtisch/Literature/Spectral Optimization/Antunes 2016 Numerical results for extremal problem for eigenvalues of the Laplacian.pdf:PDF},
  groups     = {Spectral Optimization},
  ranking    = {rank2},
}

@Article{Avron1978,
  author     = {J. Avron and I. Herbst and B. Simon},
  journal    = {Duke Mathematical Journal},
  title      = {Schrödinger operators with magnetic fields. I. general interactions},
  year       = {1978},
  month      = {dec},
  number     = {4},
  pages      = {847-883},
  volume     = {45},
  annotation = {First article in a series of three papers concerning spectral theory and scattering theory of general magnetic Schrödinger operators. The results include counting function estimates, a kinetic inequality and a result on stability of matter. Several sections are devoted to the spectral and scattering properties of the free magnetic Laplacian with constant magnetic field.},
  doi        = {10.1215/s0012-7094-78-04540-4},
  file       = {:/home/iadm/Schreibtisch/Literature/Magnetic Laplacian/Avron Herbst 1978 SCHRODINGER OPERATORS WITH MAGNETIC FIELDS I.pdf:PDF},
  groups     = {Magnetic Laplacian},
  publisher  = {Duke University Press},
  ranking    = {rank2},
}

@Article{Betcke2005,
  author     = {Betcke, Timo and Trefethen, Lloyd N.},
  journal    = {{SIAM} Review},
  title      = {Reviving the Method of Particular Solutions},
  year       = {2005},
  month      = {jan},
  number     = {3},
  pages      = {469--491},
  volume     = {47},
  annotation = {Landmark article on numerically solving the eigenvalue problem for the Dirichlet Laplacian with the Method of Particular Solutions. Most notable is the introduction of Subspace Angle technique for a more stable eigenvalue search. Accuracy of eigenvalue approximations is discussed as well and several (planar) examples are presented.},
  doi        = {10.1137/s0036144503437336},
  file       = {:/home/iadm/Schreibtisch/Literature/Numerics/Betcke Trefethen 2005 Reviving the Method of Particular Solutions.pdf:PDF},
  groups     = {Numerics},
  publisher  = {Society for Industrial {\&} Applied Mathematics ({SIAM})},
  ranking    = {rank2},
}

@Article{Ekholm2016,
  author     = {Tomas Ekholm and Hynek Kova{\v{r}}{\'{\i}}k and Fabian Portmann},
  journal    = {Journal of Mathematical Analysis and Applications},
  title      = {{Estimates} for the lowest eigenvalue of magnetic {Laplacians}},
  year       = {2016},
  issn       = {0022-247X},
  month      = jul,
  number     = {1},
  pages      = {330--346},
  volume     = {439},
  abstract   = {We prove various estimates for the first eigenvalue of the magnetic Dirichlet Laplacian on a bounded, open, simply connected domain in two dimensions. When the magnetic field is constant, we give lower and upper bounds in terms of geometric quantities of the domain. We furthermore prove a lower bound for the first magnetic Neumann eigenvalue in the case of constant magnetic field.},
  annotation = {Proofs of various lower and upper bounds for the ground state energy of the planar magnetic Dirichlet Laplacian on bounded, simply connected domains. In particular, the authors present improvements over the simple commutator estimate in terms of geometrical properties of the underlying domain. For the Neumann case, lower bounds are established.},
  doi        = {10.1016/j.jmaa.2016.02.073},
  file       = {:/home/iadm/Schreibtisch/Literature/Magnetic Laplacian/Ekholm Estimates for the lowest eigenvalue of magnetic Laplacians.pdf:PDF},
  groups     = {Magnetic Laplacian},
  keywords   = {Magnetic fields, Magnetic Dirichlet Laplacian, Eigenvalues},
  publisher  = {Elsevier {BV}},
  ranking    = {rank2},
}

@Article{Erdoes1996,
  author     = {L{\'{a}}szl{\'{o}} Erd{\H{o}}s},
  journal    = {Calculus of Variations and Partial Differential Equations},
  title      = {Rayleigh-type isoperimetric inequality with a homogeneous magnetic field},
  year       = {1996},
  month      = {apr},
  number     = {3},
  pages      = {283--292},
  volume     = {4},
  annotation = {Original proof of a Faber-Krahn type isoperimetric inequality for the planar magnetic Dirichlet Laplacian with constant magnetic field, namely, the disk minimizes the principal eigenvalue for any field strength.},
  doi        = {10.1007/bf01254348},
  file       = {:/home/iadm/Schreibtisch/Literature/Magnetic Laplacian/Erdös 1996_Article_Rayleigh-typeIsoperimetricIneq.pdf:PDF},
  groups     = {Magnetic Laplacian},
  publisher  = {Springer Science and Business Media {LLC}},
  ranking    = {rank2},
}

@Article{Faber1923,
  author     = {Faber, Georg},
  title      = {Beweis, daß unter allen homogenen Membranen von gleicher Fläche und gleicher Spannung die kreisförmige den tiefsten Grundton gibt},
  year       = {1923},
  volume     = {1923,8},
  address    = {München},
  annotation = {Faber's proof that among domains of equal area, the first eigenvalue of the planar Dirichlet Laplacian is uniquely minimized by the disk. His short proof is for simply connected domains and based on spherical (or Schwarz) rearrangement with respect to the principal eigenfunction.},
  file       = {:/home/iadm/Schreibtisch/Literature/Spectral Optimization/Faber 1923 Beweis dass unter allen homogenen Membranen von gleicher Fläche.pdf:PDF},
  groups     = {Spectral Optimization},
  ranking    = {rank2},
  series     = {Sitzungsberichte},
  url        = {https://publikationen.badw.de/de/003399311},
}

@Book{Henrot2006,
  author     = {Henrot, Antoine},
  publisher  = {Birkhäuser Basel},
  title      = {Extremum Problems for Eigenvalues of Elliptic Operators},
  year       = {2006},
  isbn       = {9783764377052},
  series     = {Frontiers in Mathematics},
  annotation = {Comprehensive and self-contained summary of many results concerning shape optimization problems for eigenvalues and various eigenvalues functionals of the Laplacian. Short or shortened proofs are provided in many cases. Shape optimization results for Schrödinger operators, Laplacians with inhomogeneous density, conductivity problems and the Bilaplacian are also discussed. Many open problems are posed. A good book for getting a high level overview to the field.},
  doi        = {10.1007/3-7643-7706-2},
  groups     = {Spectral Optimization},
  pages      = {202},
  ranking    = {rank2},
}

@Book{Henrot2018,
  author     = {Henrot, Antoine and Pierre, Michel},
  publisher  = {{EMS} Press},
  title      = {Shape Variation and Optimization},
  year       = {2018},
  address    = {Zürich},
  isbn       = {9783037191781},
  month      = {feb},
  series     = {EMS tracts in mathematics ; 28},
  annotation = {Comprehensive monograph on shape optimization problems in spectral theory. Discusses existence of optimal shapes, continuity and differentiability of spectral quantities with respect to domains and how various properties of optimizers can be obtained. Most examples and problems involve the Laplacian and its spectrum.},
  doi        = {10.4171/178},
  file       = {:/home/iadm/Schreibtisch/Literature/Spectral Optimization/Henrot 2018 Shape variation and optimization.pdf:PDF},
  groups     = {Spectral Optimization},
  pages      = {379},
  ranking    = {rank2},
  subtitle   = {A Geometrical Analysis},
}

@Book{Kato1995,
  author     = {Tosio Kato},
  publisher  = {Springer},
  title      = {Perturbation Theory for Linear Operators},
  year       = {1995},
  address    = {Berlin Heidelberg},
  edition    = {2},
  isbn       = {978-3-540-58661-6},
  annotation = {The bible of perturbation theory.},
  doi        = {10.1007/978-3-642-66282-9},
  file       = {:/home/iadm/Schreibtisch/Literature/General Analysis/Kato Perturbation Theory for linear operators.pdf:PDF},
  groups     = {Analysis},
  ranking    = {rank2},
}

@Article{Krahn1925,
  author     = {Krahn, Edgar},
  journal    = {Mathematische Annalen},
  title      = {Über eine von Rayleigh formulierte Minimaleigenschaft des Kreises},
  year       = {1925},
  month      = {dec},
  number     = {1},
  pages      = {97--100},
  volume     = {94},
  annotation = {Krahn's proof that the first eigenvalue of the planar Dirichlet Laplacian is uniquely minimized by the disk. He also uses spherical (Schwarz) rearrangement of the level sets of the principal eigenfunction. Krahn became only aware of Faber's proof after he submitted his manuscript. Faber agreed that Krahn's proof should be published as well.},
  doi        = {10.1007/bf01208645},
  file       = {:/home/iadm/Schreibtisch/Literature/Spectral Optimization/Krahn1925_Article_ÜberEineVonRayleighFormulierte.pdf:PDF},
  groups     = {Spectral Optimization},
  publisher  = {Springer Science and Business Media {LLC}},
  ranking    = {rank2},
}

@Misc{Son2014,
  author     = {Sarah Soojin Son},
  title      = {{Spectral} {Problems} on {Triangles} and {Discs}: {Extremizers} and {Ground} {States}},
  year       = {2014},
  annotation = {PhD thesis by a student of Richard S. Laugesen. Contains a proof that the equilateral triangle minimizes the third eigenvalue of the planar Neumann Laplacian among triangles with given diameter. The equilateral triangle also minimizes the sum of the $n$ lowest Neumann eigenvalues if $n\geq 3$. Another topic is the magnetic Dirichlet Laplacian with constant magnetic field on the disk and its spectrum. One-term asympotics of the eigenvalue branches are established and the lowest bound state with zero angular momentum is identified as the ground state for any field strength.},
  file       = {:/home/iadm/Schreibtisch/Literature/Magnetic Laplacian/Son 2014 Thesis Spectral problems on triangles and discs.pdf:PDF},
  groups     = {Magnetic Laplacian},
  ranking    = {rank2},
  school     = {University of Illinois at Urbana-Champaign},
  timestamp  = {2024-02-01},
  type       = {phdthesis},
  url        = {https://www.ideals.illinois.edu/items/49400},
}

@Article{Fox1967,
  author    = {L. Fox and P. Henrici and C. Moler},
  journal   = {{SIAM} Journal on Numerical Analysis},
  title     = {Approximations and Bounds for Eigenvalues of Elliptic Operators},
  year      = {1967},
  month     = {mar},
  number    = {1},
  pages     = {89--102},
  volume    = {4},
  doi       = {10.1137/0704008},
  file      = {:/home/iadm/Schreibtisch/Literature/Numerics/Fox 1967 APPROXIMATIONS AND BOUNDS FOR EIGENVALUES OF.pdf:PDF},
  groups    = {Numerics},
  publisher = {Society for Industrial {\&} Applied Mathematics ({SIAM})},
}

@Article{Helffer2017,
  author    = {Bernard Helffer and Mikael {Persson Sundqvist}},
  journal   = {Journal of Mathematical Analysis and Applications},
  title     = {On the semi-classical analysis of the ground state energy of the {Dirichlet} {Pauli} operator},
  year      = {2017},
  month     = may,
  number    = {1},
  pages     = {138--153},
  volume    = {449},
  doi       = {10.1016/j.jmaa.2016.11.058},
  file      = {:Literature/Helffer 2017 On the semiclassical analysis of the Dirichlet Pauli operator.pdf:PDF},
  groups    = {Magnetic Laplacian},
  publisher = {Elsevier {BV}},
}

@Book{Fournais2010,
  author    = {S{\o}ren Fournais and Bernard Helffer},
  publisher = {Birkhäuser},
  title     = {{Spectral} {Methods} in {Surface} {Superconductivity}},
  year      = {2010},
  address   = {Boston},
  edition   = {1},
  series    = {Progress in Nonlinear Differential Equations and Their Applications},
  doi       = {10.1007/978-0-8176-4797-1},
  groups    = {Magnetic Laplacian},
}

@Misc{Cycon1987,
  author       = {Cycon, H. L. and Froese, R. G. and Kirsch, W. and Simon, B.},
  howpublished = {Springer {Study} edition. {Texts} and {Monographs} in {Physics}. {Berlin} etc.: {Springer}-{Verlag}. ix, 319 pp.},
  title        = {Schr{\"o}dinger operators, with application to quantum mechanics and global geometry.},
  year         = {1987},
  annotation   = {A classic book that collects various theorems on Schrödinger operators and their spectra. Based on lectures held by Barry Simon in 1982.},
  groups       = {Magnetic Laplacian, Weak coupling},
  isbn         = {3-540-16758-7},
  keywords     = {47A40,47B80,47-02,81U05,47A35,81U10,82B44,60H25,47A10,58J20,58J50},
  ranking      = {rank2},
  zbl          = {0619.47005},
  zbmath       = {4004965},
}

@Article{Ghanta2024,
  author     = {Ghanta, Rohan and Junge, Lukas and Morin, Léo},
  journal    = {Journal of Spectral Theory},
  title      = {Quantitative magnetic isoperimetric inequality},
  year       = {2024},
  issn       = {1664-0403},
  month      = mar,
  number     = {1},
  pages      = {185--205},
  volume     = {14},
  annotation = {Proof of a stability result of Erd{\H o}s' isoperimetric inequality for the principal eigenvalue of the planar magnetic Dirichlet Laplacian. The additional stability term the authors derive has the form $c e^{-(5/6)BR^2} A(\Omega)^{10/3}$ where $c$ is a universal constant, $R$ is the radius of a disk with the same area as $\Omega$ and $A(\Omega)$ is either the interior deficiency or the Fraenkel asymmetry. It is suspected that the constants are not sharp, i.e.~$5/6$ can be reduced to $1/2$ and the exponent $10/3$ to $2$.},
  doi        = {10.4171/jst/482},
  file       = {:Literature/Ghanta 2024 Quantitative magnetic isoperimetric inequality.pdf:PDF},
  groups     = {Spectral Optimization, Magnetic Laplacian},
  publisher  = {European Mathematical Society - EMS - Publishing House GmbH},
  ranking    = {rank2},
}

@Article{Avkhadiev2007,
  author    = {Avkhadiev, F.G. and Wirths, K.‐J.},
  journal   = {ZAMM - Journal of Applied Mathematics and Mechanics / Zeitschrift für Angewandte Mathematik und Mechanik},
  title     = {Unified Poincaré and Hardy inequalities with sharp constants for convex domains},
  year      = {2007},
  issn      = {1521-4001},
  month     = sep,
  number    = {8–9},
  pages     = {632--642},
  volume    = {87},
  doi       = {10.1002/zamm.200710342},
  file      = {:Literature/Z Angew Math Mech - 2007 - Avkhadiev - Unified Poincar and Hardy inequalities with sharp constants for convex domains.pdf:PDF},
  groups    = {Laplacian and other Eigenvalues},
  publisher = {Wiley},
  ranking   = {rank1},
}

@Article{Payne1973,
  author     = {Payne, L. E. and Stakgold, I.},
  journal    = {Applicable Analysis},
  title      = {On the mean value of the fundamental mode in the fixed membrane problem},
  year       = {1973},
  issn       = {1563-504X},
  month      = jan,
  number     = {3},
  pages      = {295--306},
  volume     = {3},
  annotation = {Considers isoperimetric inequalities for the average-to-peak ratio of the principal eigenfunction of the Dirichlet Laplacian on domains. The average-to-peak ratio is defined as the ratio of the mean of the principal eigenfunction and its maximum (assuming the eigenfunction is chosen positive). The inequalities derived allow concluding the sharp inequality $\lambda_1 \geq \pi / (4 r_{in}(\Omega)^2)$ on the principal Dirichlet eigenvalue for convex domains $\Omega$.},
  doi        = {10.1080/00036817308839071},
  file       = {:Literature/payne stakgold 1973 On the mean value of the fundamental mode in the fixed membrane problem.pdf:PDF},
  groups     = {Laplacian and other Eigenvalues},
  publisher  = {Informa UK Limited},
  ranking    = {rank2},
}

@Article{Colbois2024,
  author     = {Colbois, Bruno and Léna, Corentin and Provenzano, Luigi and Savo, Alessandro},
  journal    = {Journal de Mathématiques Pures et Appliquées},
  title      = {A reverse Faber-Krahn inequality for the magnetic Laplacian},
  year       = {2024},
  issn       = {0021-7824},
  month      = dec,
  pages      = {103632},
  volume     = {192},
  annotation = {Proof that the disk maximizes the first eigenvalue of the planar magnetic Neumann Laplacian among bounded, simply connected $C^\infty$ domains of equal area if the magnetic flux is sufficiently small. More precisely, the proof establishes the isoperimetric inequality as long as the eigenfunction of the planar magnetic Neumann Laplacian on the disk with prescribed area is radial. The proof is based on construction of a trial function with prescribed level lines. The stretching between the level lines is made dependent on a scalar function $G$ that only depends on the domain. Expressing the ground state energy of the magnetic Neumann Laplacian in terms of this function $G$ shows that it is an eigenvalue of a certain Sturm-Liouville problem. It is then shown that a certain choice of $G$ maximizes the eigenvalue and this choice of $G$ achieved by the disk.},
  doi        = {10.1016/j.matpur.2024.103632},
  file       = {:Literature/colbois 2024 reverse faber krahn magnetic laplacian.pdf:PDF},
  groups     = {Magnetic Laplacian},
  publisher  = {Elsevier BV},
  ranking    = {rank2},
}

@Article{Hersch1960,
  author     = {Hersch, Joseph},
  journal    = {Zeitschrift für angewandte Mathematik und Physik ZAMP},
  title      = {Sur la fréquence fondamentale d’une membrane vibrante: évaluations par défaut et principe de maximum},
  year       = {1960},
  issn       = {1420-9039},
  month      = sep,
  number     = {5},
  pages      = {387--413},
  volume     = {11},
  annotation = {Proof of the inequality $\lambda_1 \geq C /r_{in}(\Omega)^2$ for the first eigenvalue of the planar Dirichlet Laplacian on convex domains $\Omega$ with the sharp constant $C=\pi^2/4$.},
  doi        = {10.1007/bf01604498},
  file       = {:Literature/hersch 1960 sur la frequence fondamentale une membrane vibrante.pdf:PDF},
  groups     = {Laplacian and other Eigenvalues},
  publisher  = {Springer Science and Business Media LLC},
  ranking    = {rank2},
}

@Article{Baur2025,
  author     = {Baur, Matthias and Weidl, Timo},
  journal    = {Analysis and Mathematical Physics},
  title      = {Eigenvalues of the magnetic Dirichlet Laplacian with constant magnetic field on disks in the strong field limit},
  year       = {2025},
  issn       = {1664-235X},
  month      = jan,
  number     = {1},
  volume     = {15},
  annotation = {Proof of two--term asymptotic expansions of all eigenvalue branches of the magnetic Dirichlet Laplacian with constant magnetic field on the disk. Sharpness of the excess factor in P{\'o}lya's inequality in the presence of constant magnetic fields for arbitrary domains is also established.},
  doi        = {10.1007/s13324-024-01008-8},
  file       = {:Literature/Baur 2025 Eigenvalues of the magnetic Dirichlet Laplacian.pdf:PDF},
  groups     = {Magnetic Laplacian},
  publisher  = {Springer Science and Business Media LLC},
  ranking    = {rank2},
}

@Article{Fournais2006a,
  author     = {Fournais, S. and Helffer, B.},
  journal    = {Communications in Mathematical Physics},
  title      = {On the Third Critical Field in Ginzburg-Landau Theory},
  year       = {2006},
  issn       = {1432-0916},
  month      = apr,
  number     = {1},
  pages      = {153--196},
  volume     = {266},
  annotation = {Application of semiclassical analysis of the magnetic Neumann Laplacian with constant magnetic field to the theory of superconductivity. Specifically, an asymptotic expansion of the third critical field $H_{C_3}$ of the Ginzburg-Landau functional is computed, the value of the external field at which the material loses its superconductivity. The main result assumes that the domain is smooth and bounded and has finite number of non-degenerate points of maximal curvature. The case of the disk is also discussed, with a three term asymptotic expansion for the ground state energy of the magnetic Neumann Laplacian on the disk.},
  doi        = {10.1007/s00220-006-0006-4},
  file       = {:Literature/Fournais 2006 On the third critical field in ginzburg landau theory.pdf:PDF},
  groups     = {Magnetic Laplacian},
  publisher  = {Springer Science and Business Media LLC},
  ranking    = {rank2},
}

@Article{Baur2025a,
  author     = {Baur, Matthias},
  journal    = {Journal of Mathematical Physics},
  title      = {Numerical optimization of eigenvalues of the planar magnetic Dirichlet Laplacian with constant magnetic field},
  year       = {2025},
  issn       = {1089-7658},
  month      = jul,
  number     = {7},
  volume     = {66},
  annotation = {Shows numerical minimizers for the first seven eigenvalues of the magnetic Dirichlet Laplacian with constant magnetic field under measure constraint. Minimizers are presented for a wide range of field strengths and their behaviour discussed. Based on the results, it is conjectured that the $n$-th eigenvalue is minimized by a disk if the total magnetic flux exceeds $n$. Numerical methods are an adaptation of the methods of Antunes and Freitas to the magnetic setting. This is possible since the Green's function of the Landau Hamiltonian is explicitly known.},
  doi        = {10.1063/5.0253311},
  file       = {:Literature/Baur 2025 Numerical optimization of eigenvalues of the planar magnetic DL.pdf:PDF},
  groups     = {Spectral Optimization, Magnetic Laplacian},
  publisher  = {AIP Publishing},
  ranking    = {rank2},
}

@Article{Fournais2019,
  author     = {Fournais, S. and Helffer, B.},
  journal    = {Letters in Mathematical Physics},
  title      = {Inequalities for the lowest magnetic Neumann eigenvalue},
  year       = {2019},
  issn       = {1573-0530},
  month      = jan,
  number     = {7},
  pages      = {1683--1700},
  volume     = {109},
  annotation = {Review of results that concern the conjecture that the lowest eigenvalue of the magnetic Neumann Laplacian among simply connected, planar domains of equal area is maximized by the disk. In particular, in the weak and strong field limit, this holds for any individual domain. The question of uniform bounds and for intermediate field strengths remains open. It is remarked that the conjectured inequality fails for non-simply connected domains. A three-dimensional analogous result with the ball instead of the disk fails as well.},
  doi        = {10.1007/s11005-018-01154-8},
  file       = {:Literature/Fournais 2019 Inequalities for the lowest magnetic Neumann eigenvalue.pdf:PDF},
  groups     = {Magnetic Laplacian},
  publisher  = {Springer Science and Business Media LLC},
  ranking    = {rank2},
}

@Article{Colbois2023,
  author    = {Colbois, Bruno and Léna, Corentin and Provenzano, Luigi and Savo, Alessandro},
  journal   = {Journal de Mathématiques Pures et Appliquées},
  title     = {Geometric bounds for the magnetic Neumann eigenvalues in the plane},
  year      = {2023},
  issn      = {0021-7824},
  month     = nov,
  pages     = {454--497},
  volume    = {179},
  doi       = {10.1016/j.matpur.2023.09.014},
  file      = {:Literature/Colbois 2023 Geometric bounds for the magnetic Neumann eigenvalues in the plane.pdf:PDF},
  groups    = {Magnetic Laplacian},
  publisher = {Elsevier BV},
}

@Article{Delyon2022,
  author    = {Delyon, Alexandre and Henrot, Antoine and Privat, Yannick},
  journal   = {Annales de l’Institut Fourier},
  title     = {The missing (A,D,r) diagram},
  year      = {2022},
  issn      = {1777-5310},
  month     = sep,
  number    = {5},
  pages     = {1941--1992},
  volume    = {72},
  doi       = {10.5802/aif.3484},
  file      = {:Literature/Deylon 2022 the missing A D r diagram.pdf:PDF},
  groups    = {Spectral Optimization},
  publisher = {Cellule MathDoc/Centre Mersenne},
}

\end{document}